\newtheorem{theorem}{Theorem}
\newtheorem{definition}{Definition}
\newtheorem{corollary}{Corollary}
\declaretheorem[name=Lemma]{lemma}
\newcommand{\params}{\ensuremath{\bm{\theta}}}
\newcommand{\policy}{\ensuremath{\pi_{\params}}}
\newcommand{\paramsV}{\ensuremath{\bm{\omega}}}
\newcommand{\approxV}{\ensuremath{\widetilde{V}_{\paramsV}}}
\newcommand{\phis}{\ensuremath{\bm{\phi}}}
\newcommand{\lambdas}{\ensuremath{\bm{\lambda}}}
\newcommand{\weights}{\ensuremath{\bm{w}}}
\newcommand{\CZ}[1]{%
\begin{tikzpicture}[#1]%
\draw[thick] (0ex,-0.5ex) -- (0ex, 0.5ex);%
\filldraw [black] (0ex, -0.5ex) circle (0.7pt);
\filldraw [black] (0ex, 0.5ex) circle (0.7pt);
\end{tikzpicture}%
}
\title{Parametrized Quantum Policies \\for Reinforcement Learning}
\author{%
  Sofiene Jerbi \\
  Institute for Theoretical Physics,\\
  University of Innsbruck \\
  \texttt{sofiene.jerbi@uibk.ac.at} \\
  \And
  Casper Gyurik \\
  LIACS, \\
  Leiden University \\
  \And
  Simon C. Marshall \\
  LIACS, \\
  Leiden University \\
  \And
  Hans J. Briegel \\
  Institute for Theoretical Physics,\\
  University of Innsbruck\\
  \And
  Vedran Dunjko \\
  LIACS, \\
  Leiden University \\
}
\begin{document}

\maketitle

\begin{abstract}

With the advent of real-world quantum computing, the idea that parametrized quantum computations can be used as hypothesis families in a quantum-classical machine learning system is gaining increasing traction. Such hybrid systems have already shown the potential to tackle real-world tasks in supervised and generative learning, and recent works have established their provable advantages in special artificial tasks. Yet, in the case of reinforcement learning, which is arguably most challenging and where learning boosts would be extremely valuable, no proposal has been successful in solving even standard benchmarking tasks, nor in showing a theoretical learning advantage over classical algorithms. In this work, we achieve both. We propose a hybrid quantum-classical reinforcement learning model using very few qubits, which we show can be effectively trained to solve several standard benchmarking environments. Moreover, we demonstrate, and formally prove, the ability of parametrized quantum circuits to solve certain learning tasks that are intractable to classical models, including current state-of-art deep neural networks, under the widely-believed classical hardness of the discrete logarithm problem.

\end{abstract}

\section{\label{sec:intro}Introduction}

Hybrid quantum machine learning models constitute one of the most promising applications of near-term quantum computers \cite{preskill18,bharti21}. In these models, parametrized and data-dependent quantum computations define a hypothesis family for a given learning task, and a classical optimization algorithm is used to train them. For instance, parametrized quantum circuits (PQCs) \cite{benedetti19} have already proven successful in classification \cite{farhi18,schuld20b,havlivcek19,schuld19b,peters21}, generative modeling \cite{liu18,zhu19} and clustering \cite{otterbach17} problems. Moreover, recent results have shown proofs of their learning advantages in artificially constructed tasks \cite{havlivcek19,huang20}, some of which are based on widely believed complexity-theoretic assumptions \cite{huang20,du20,liu20,sweke20}. All these results, however, only consider supervised and generative learning settings.

Arguably, the largest impact quantum computing can have is by providing enhancements to the hardest learning problems. From this perspective, reinforcement learning (RL) stands out as a field that can greatly benefit from a powerful hypothesis family. This is showcased by the boost in learning performance that deep neural networks (DNNs) have provided to RL \cite{mnih15}, which enabled systems like AlphaGo \cite{silver17}, among other achievements \cite{berner19,mirowski18}. Nonetheless, the true potential of near-term quantum approaches in RL remains very little explored. The few existing works \cite{chen20,lockwood20,wu20,jerbi19} have failed so far at solving classical benchmarking tasks using PQCs and left open the question of their ability to provide a learning advantage.

\paragraph{Contributions}In this work, we demonstrate the potential of policies based on PQCs in solving classical RL environments. To do this, we first propose new model constructions, describe their learning algorithms, and show numerically the influence of design choices on their learning performance. In our numerical investigation, we consider benchmarking environments from OpenAI Gym \cite{brockman16}, for which good and simple DNN policies are known, and in which we demonstrate that PQC policies can achieve comparable performance. Second, inspired by the classification task of Havlí\v{c}ek \emph{et al.}\ \cite{havlivcek19}, conjectured to be classically hard by the authors, we construct analogous RL environments where we show an empirical learning advantage of our PQC policies over standard DNN policies used in deep RL. In the same direction, we construct RL environments with a provable gap in performance between a family of PQC policies and any efficient classical learner. These environments essentially build upon the work of Liu \emph{et al.}\ \cite{liu20} by embedding into a learning setting the discrete logarithm problem (DLP), which is the problem solved by Shor's celebrated quantum algorithm \cite{shor99} but widely believed to be classically hard to solve \cite{blum84}.

\begin{figure*}
  \makebox{\includegraphics[width=\linewidth, valign=c]{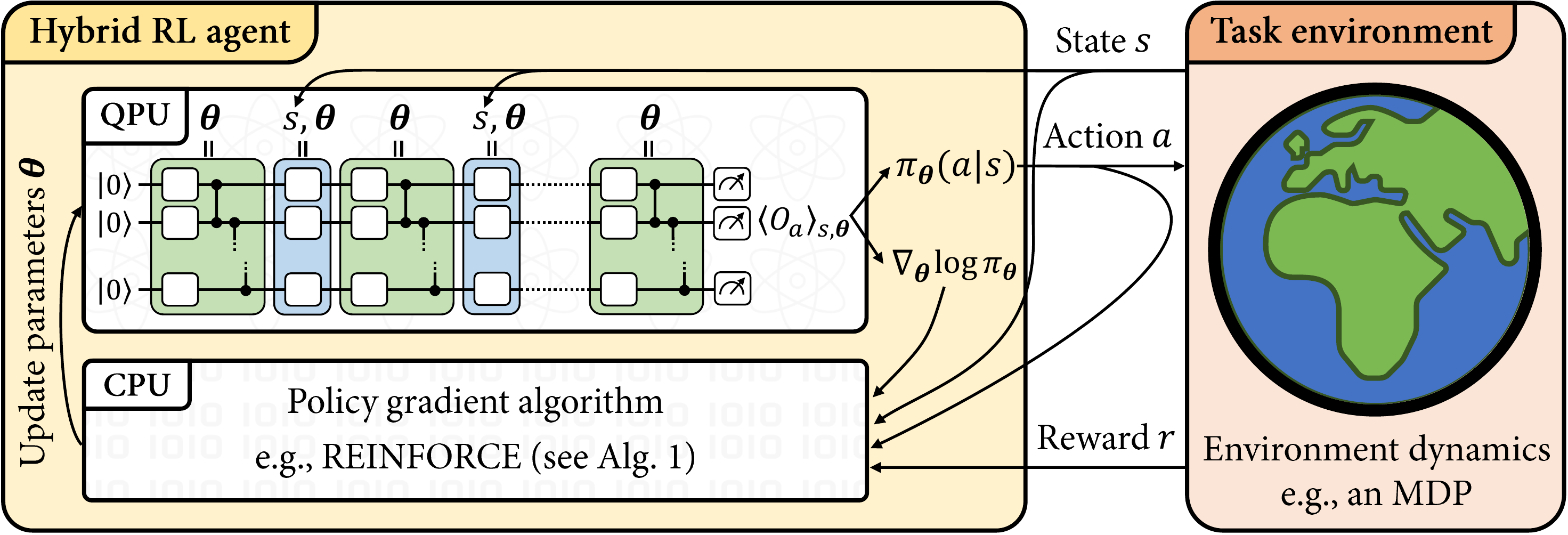}}
  \caption{\textbf{Training parametrized quantum policies for reinforcement learning. }We consider a quantum-enhanced RL scenario where a hybrid quantum-classical agent learns by interacting with a classical environment. For each state $s$ it perceives, the agent samples its next action $a$ from its policy $\policy(a|s)$ and perceives feedback on its behavior in the form of a reward $r$. For our hybrid agents, the policy $\policy$ is specified by a PQC (see Def.\ \ref{def:raw-softmax-PQC}) evaluated (along with the gradient $\nabla_{\params} \log \policy$) on a quantum processing unit (QPU). The training of this policy is performed by a classical learning algorithm, such as the \textsc{REINFORCE} algorithm (see Alg.\ \ref{alg}), which uses sample interactions and policy gradients to update the policy parameters $\params$.}
  \label{fig:pqc-rl}

\end{figure*}

\paragraph{Related work}Recently, a few works have been exploring hybrid quantum approaches for RL. Among these, Refs.\ \cite{chen20,lockwood20} also trained PQC-based agents in classical RL environments. However, these take a value-based approach to RL, meaning that they use PQCs as value-function approximators instead of direct policies. The learning agents in these works are also tested on OpenAI Gym environments (namely, a modified FrozenLake and CartPole), but do not achieve sufficiently good performance to be solving them, according to the Gym specifications. Ref.~\cite{skolik21b} shows that, using some of our design choices for PQCs in RL (i.e., data re-uploading circuits \cite{perez20} with trainable observable weights and input scaling parameters), one can also solve these environments using a value-based approach. An actor-critic approach to QRL was introduced in Ref.\ \cite{wu20}, using both a PQC actor (or policy) and a PQC critic (or value-function approximator). In contrast to our work, these are trained in quantum environments (e.g., quantum-control environments), that provide a quantum state to the agent, which acts back with a continuous classical action. These aspects make it a very different learning setting to ours. Ref.\ \cite{jerbi19} also describes a hybrid quantum-classical algorithm for value-based RL. The function-approximation models on which this algorithm is applied are however not PQCs but energy-based neural networks (e.g., deep and quantum Boltzmann machines). Finally, our work provides an alternative approach to take advantage of quantum effects in designing QRL agents compared to earlier approaches \cite{dong08,paparo14,dunjko16,crawford18,neukart18}, which are mainly based on (variations of) Grover's search algorithm \cite{grover96} or quantum annealers \cite{johnson11} to speed up sampling routines.

\paragraph{Code}An accompanying tutorial \cite{tfq21}, implemented as part of the quantum machine learning library TensorFlow Quantum \cite{broughton20}, provides the code required to reproduce our numerical results and explore different settings. It also implements the Q-learning approach for PQC-based RL of Skolik \emph{et al.} \cite{skolik21b}

\section{Parametrized quantum policies: definitions and learning algorithm\label{sec:pqc-policies}}

In this section, we give a detailed construction of our parametrized quantum policies and describe their associated training algorithms. We start however with a short introduction to the basic concepts of quantum computation, introduced in more detail in \cite{nielsen00,dewolf19}. 

\subsection{Quantum computation: a primer} 

A quantum system composed of $n$ qubits is represented by a $2^n$-dimensional complex Hilbert space $\mathcal{H}=(\mathbb{C}^2)^{\otimes n}$. Its quantum state is described by a vector $\ket{\psi} \in \mathcal{H}$ of unit norm $\braket{\psi}=1$, where we adopt the bra-ket notation to describe vectors $\ket{\psi}$, their conjugate transpose $\bra{\psi}$ and inner-products $\braket{\psi}{\psi'}$ in $\mathcal{H}$. Single-qubit computational basis states are given by $\ket{0}=(1,0)^T, \ket{1}=(0,1)^T$, and their tensor products describe general computational basis states, e.g., $\ket{10} = \ket{1}\otimes\ket{0} = (0,0,1,0)$. 

A quantum gate is a unitary operation $U$ acting on $\mathcal{H}$. When a gate $U$ acts non-trivially only on a subset $S \subseteq [n]$ of qubits, we identify it to the operation $U\otimes \mathbbm{1}_{[n]\backslash S}$. In this work, we are mainly interested in the single-qubit Pauli gates $Z,Y$ and their associated rotations $R_z,R_y$:
\begin{equation}
	Z = \begin{pmatrix} 1 & 0 \\ 0 & -1 \end{pmatrix}, R_z(\theta) = \exp(-i \frac{\theta}{2} Z), \quad Y = \begin{pmatrix} 0 & -i \\ i & 0 \end{pmatrix}, R_y(\theta) = \exp(-i \frac{\theta}{2} Y),
\end{equation}
for rotation angles $\theta\in\mathbb{R}$, and the 2-qubit Ctrl-$Z$ gate $\CZ{scale=1.5} = \text{diag}(1,1,1,-1)$.

A projective measurement is described by a Hermitian operator $O$ called an observable. Its spectral decomposition $O=\sum_m \alpha_m P_m$ in terms of eigenvalues $\alpha_m$ and orthogonal projections $P_m$ defines the outcomes of this measurement, according to the Born rule: a measured state $\ket{\psi}$ gives the outcome $\alpha_m$ and gets projected onto the state $P_m \ket{\psi}/\sqrt{p(m})$ with probability $p(m) = \bra{\psi} P_m \ket{\psi} = \expval{P_m}_\psi$. The expectation value of the observable $O$ with respect to $\ket{\psi}$ is $\mathbb{E}_\psi[O] = \sum_m p(m) \alpha_m = \expval{O}_\psi$.

\subsection{The \textsc{raw-PQC} and \textsc{softmax-PQC} policies\label{sec:model-def}}

\begin{figure}[t]
\makebox[\linewidth]{\hspace{3em}
\Qcircuit @C=0.4em @R=.6em {
& &  & & U_\text{var}(\phis_0) & & & & & & U_\text{enc}(s,\lambdas_0)  & & &  &  & &    &   \\
& &  & & & && &&& &&&  & &  && &  &  & &    &   \\
\lstick{\ket{0}_0} & \gate{H}&\qw & \gate{R_z(\phi_{0,0})} & \gate{R_y(\phi_{0,2})}  &\qw &\ctrl{1} &\qw & \qw &\gate{R_y(\lambda_{0,0}s_0)} & \qw & \gate{R_z(\lambda_{0,2}s_0)} & \qw & \multigate{1}{U_\text{var}(\phis_1)} &\qw & \meter \gategroup{3}{4}{4}{7}{1.0em}{--}   \gategroup{3}{10}{4}{12}{1.0em}{--}   \\
\lstick{\ket{0}_1} & \gate{H}&\qw & \gate{R_z(\phi_{0,1})} & \gate{R_y(\phi_{0,3})}  &\qw &\ctrl{-1} & \qw & \qw  & \gate{R_y(\lambda_{0,1}s_1)} & \qw & \gate{R_z(\lambda_{0,3}s_1)} &\qw &\ghost{U_\text{var}(\phis_1)} & \qw & \meter\\
}}
\caption{\textbf{PQC architecture for $n=2$ qubits and depth $D_\text{enc}=1$. }This architecture is composed of alternating layers of encoding unitaries $U_\text{enc}(s,\lambdas_i)$ taking as input a state vector $s = (s_0, \ldots, s_{d-1})$ and scaling parameters $\lambdas_i$ (part of a vector $\bm{\lambda} \in \mathbb{R}^\abs{\lambdas}$ of dimension $\abs{\lambdas}$), and variational unitaries $U_\text{var}(\phis_i)$ taking as input rotation angles $\phis_i$ (part of a vector $\phis \in [0,2\pi]^\abs{\phis}$ of dimension $\abs{\phis}$).}
\label{fig:pqc-architecture}
\end{figure}
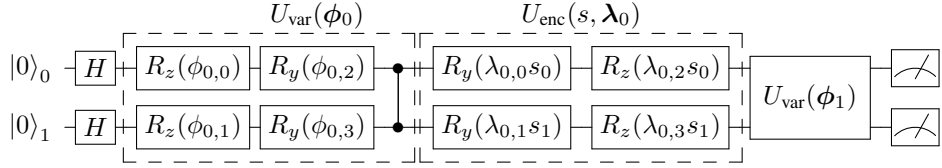

At the core of our parametrized quantum policies is a PQC defined by a unitary $U(s,\params)$ that acts on a fixed $n$-qubit state (e.g., $\ket{0^{\otimes n}}$). This unitary encodes an input state $s \in \mathbb{R}^d$ and is parametrized by a trainable vector $\params$. Although different choices of PQCs are possible, throughout our numerical experiments (Sec.\ \ref{sec:benchmark} and \ref{sec:PQC-env}), we consider so-called hardware-efficient PQCs \cite{kandala17} with an alternating-layered architecture \cite{perez20,schuld20}. This architecture is depicted in Fig.\ \ref{fig:pqc-architecture} and essentially consists in an alternation of $D_\text{enc}$ encoding unitaries $U_\text{enc}$ (composed of single-qubit rotations $R_z,R_y$) and $D_\text{enc}+1$ variational unitaries $U_\text{var}$ (composed of single-qubit rotations $R_z,R_y$ and entangling Ctrl-$Z$ gates $\CZ{scale=1.5}$ ).

For any given PQC, we define two families of policies, differing in how the final quantum states $\ket{\psi_{s,\params}} = U(s,\params)\ket{0^{\otimes n}}$ are used. In the \textsc{raw-PQC} model, we exploit the probabilistic nature of quantum measurements to define an RL policy. For $\abs{A}$ available actions to the RL agent, we partition $\mathcal{H}$ in $|A|$ disjoint subspaces (e.g., spanned by computational basis states) and associate a projection $P_a$ to each of these subspaces. The projective measurement associated to the observable $O = \sum_{a} a P_a$ then defines our \textsc{raw-PQC} policy $\policy(a|s) = \expval{P_a}_{s,\params}$. A limitation of this policy family however is that it does not have a directly adjustable greediness (i.e., a control parameter that makes the policy more peaked). This consideration arises naturally in an RL context where an agent's policy needs to shift from an exploratory behavior (i.e., close to uniform distribution) to a more exploitative behavior (i.e., a peaked distribution). To remedy this limitation, we define the \textsc{softmax-PQC} model, that applies an adjustable $\text{softmax}_\beta$ non-linear activation function on the expectation values $\expval{P_a}_{s,\params}$ measured on $\ket{\psi_{s,\params}}$. Since the softmax function normalizes any real-valued input, we can generalize the projections $P_a$ to be arbitrary Hermitian operators $O_a$. We also generalize these observables one step further by assigning them trainable weights. The two models are formally defined below.

\begin{definition}[\textsc{raw-} and \textsc{softmax-PQC}]\label{def:raw-softmax-PQC}
Given a PQC acting on $n$ qubits, taking as input a state $s \in \mathbb{R}^d$, rotation angles $\phis \in [0,2\pi]^\abs{\phis}$ and scaling parameters $\bm{\lambda} \in \mathbb{R}^\abs{\lambdas}$, such that its corresponding unitary $U(s,\phis,\lambdas)$ produces the quantum state $\ket{\psi_{s,\phis,\lambdas}} = U(s,\phis,\lambdas) \ket{0^{\otimes n}}$, we define its associated \textsc{raw-PQC} policy as:
	\begin{equation}
	\policy (a|s) = \expval{P_a}_{s,\params}
	\end{equation}
where $\expval{P_a}_{s,\params} = \ev{P_{a}}{\psi_{s,\phis,\lambdas}}$ is the expectation value of a projection $P_{a}$ associated to action $a$, such that $\sum_{a} P_a = I$ and $P_aP_{a'} = \delta_{a,a'}$. $\params = (\phis,\lambdas)$ constitute all of its trainable parameters.\\
Using the same PQC, we also define a \textsc{softmax-PQC} policy as:
	\begin{equation}\label{eq:softmax-PQC}
	\policy (a|s) = \frac{e^{\beta \expval{O_a}_{s,\params}}}{\sum_{a'} e^{\beta \expval{O_{a'}}_{s,\params}}}
	\end{equation}
where $\expval{O_a}_{s,\params} = \ev{\sum_i w_{a,i} H_{a,i}}{\psi_{s,\phis,\lambdas}}$ is the expectation value of the weighted Hermitian operators $H_{a,i}$ associated to action $a$, $\beta \in \mathbb{R}$ is an inverse-temperature parameter and $\params = (\phis,\lambdas,\weights)$.
\end{definition}
Note that we adopt here a very general definition for the observables $O_a$ of our \textsc{softmax-PQC} policies. As we discuss in more detail in Appendix \ref{sec:trainable-observables}, very expressive trainable observables can in some extreme cases take over all training of the PQC parameters $\phis, \lambdas$ and render the role of the PQC in learning trivial. However, in practice, as well as in our numerical experiments, we only consider very restricted observables $O_a = \sum_{i} w_{a,i} H_{a,i}$, where $H_{a,i}$ are (tensor products of) Pauli matrices or high-rank projections on computational basis states, which do not allow for these extreme scenarios.

In our PQC construction, we include trainable \emph{scaling parameters} $\lambdas$, used in every encoding gate to re-scale its input components. This modification to the standard data encoding in PQCs comes in light of recent considerations on the structure of PQC functions \cite{schuld19}. These additional parameters allow to represent functions with a wider and richer spectrum of frequencies, and hence provide shallow PQCs with more expressive power.

\subsection{Learning algorithm}

In order to analyze the properties of our PQC policies without the interference of other learning mechanisms \cite{weng18}, we train these policies using the basic Monte Carlo policy gradient algorithm \textsc{REINFORCE} \cite{sutton98,williams92} (see Alg.\ \ref{alg}). This algorithm consists in evaluating Monte Carlo estimates of the value function $V_{\policy}(s_0) = \mathbb{E}_{\policy}\left[\sum_{t=0}^{H-1}\gamma^t r_t \right]$, $\gamma \in [0,1]$, using batches of interactions with the environment, and updating the policy parameters $\params$ via a gradient ascent on $V_{\policy}(s_0)$. The resulting updates (see line $8$ of Alg.\ \ref{alg}) involve the gradient of the log-policy $\nabla_{\params} \log{\policy(a|s)}$, which we therefore need to compute for our policies. We describe this computation in the following lemma.

\begin{lemma}\label{lem:gradient-exp}
	Given a \textsc{softmax-PQC} policy $\policy$, the gradient of its logarithm is given by:
	\begin{equation}
		\nabla_{\params} \log{\policy(a|s)} = \beta \Big(\nabla_{\params}\expval{O_a}_{s,\params} - \sum\nolimits_{a'} \policy(a'|s) \nabla_{\params}\expval{O_{a'}}_{s,\params}\Big).
	\end{equation} 
	Partial derivatives with respect to observable weights are trivially given by $\partial_{w_{a,i}}\langle O_a\rangle_{s,\params} = \ev{H_{a,i}}{\psi_{s,\phis,\lambdas}}$ (see Def.\ \ref{def:raw-softmax-PQC}), while derivatives with respect to rotation angles $\partial_{\phi_i}\langle O_a\rangle_{s,\params}$ and scaling parameters\footnote{Note that the parameters $\lambdas$ do not act as rotation angles. To compute the derivatives $\partial_{\lambda_{i,j}} \langle O_a\rangle_{s,\params}$, one should compute derivatives w.r.t.\ $s_j\lambda_{i,j}$ instead and apply the chain rule: $\partial_{\lambda_{i,j}} \langle O_a\rangle_{s,\params} = s_j \partial_{s_j \lambda_{i,j}} \langle O_a\rangle_{s,\params}$.} $\partial_{\lambda_i}\langle O_a\rangle_{s,\params}$ can be estimated via the parameter-shift rule \cite{mitarai18,schuld19}:
	\begin{equation}\label{eq:psr}
		\partial_i \expval{O_a}_{s,\params} = \frac{1}{2}\big(\expval{O_a}_{s,\params+\frac{\pi}{2} \bm{e_i}} - \expval{O_a}_{s,\params-\frac{\pi}{2} \bm{e_i}}\big),
	\end{equation}
	i.e., using the difference of two expectation values $\expval{O_a}_{s,\params'}$ with a single angle shifted by $\pm\frac{\pi}{2}$.\\
	For a \textsc{raw-PQC} policy $\policy$, we have instead:
	\begin{equation}\label{eq:raw-PQC-gradient}
		\nabla_{\params} \log{\policy(a|s)} = \nabla_{\params}\expval{P_a}_{s,\params} / \expval{P_a}_{s,\params} 
	\end{equation} 
	where the partial derivatives $\partial_{\phi_i}\langle P_a\rangle_{s,\params}$ and $\partial_{\lambda_i}\langle P_a\rangle_{s,\params}$ can be estimated similarly to above.
\end{lemma}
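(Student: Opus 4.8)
The plan is to reduce the logarithmic derivative of the policy to logarithmic derivatives of the expectation values $\expval{O_a}_{s,\params}$, and then handle the three groups of parameters $\params=(\phis,\lambdas,\weights)$ in turn. For the \textsc{softmax-PQC} policy, writing $\log\policy(a|s) = \beta\expval{O_a}_{s,\params} - \log\big(\sum_{a'} e^{\beta\expval{O_{a'}}_{s,\params}}\big)$ and differentiating, the derivative of the log-normalization term is exactly $\beta\sum_{a'}\policy(a'|s)\nabla_{\params}\expval{O_{a'}}_{s,\params}$ by the definition of the softmax, which gives the stated identity; this step is the standard log-derivative of a Gibbs distribution and uses nothing quantum. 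For the \textsc{raw-PQC} policy the reduction is immediate from $\log\policy(a|s) = \log\expval{P_a}_{s,\params}$, giving $\nabla_{\params}\log\policy(a|s) = \nabla_{\params}\expval{P_a}_{s,\params}/\expval{P_a}_{s,\params}$.

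It then remains to evaluate $\nabla_{\params}\expval{O_a}_{s,\params}$ (resp.\ $\nabla_{\params}\expval{P_a}_{s,\params}$). The weight derivatives are trivial: $O_a = \sum_i w_{a,i}H_{a,i}$ is linear in the $w_{a,i}$ and $\ket{\psi_{s,\phis,\lambdas}}$ is independent of them, so $\partial_{w_{a,i}}\expval{O_a}_{s,\params} = \ev{H_{a,i}}{\psi_{s,\phis,\lambdas}}$. For a rotation angle $\phi_i$, I would isolate the single gate in which it occurs, writing $\ket{\psi_{s,\params}} = V_2\, e^{-i\frac{\phi_i}{2}P}\, V_1\ket{0^{\otimes n}}$ with $P\in\{Z_j,Y_j\}$ the Pauli generator of that rotation and $V_1,V_2$ the ($\phi_i$-independent) remainder of the circuit. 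With $\ket{\phi} = V_1\ket{0^{\otimes n}}$ and $\widetilde O = V_2^\dagger O_a V_2$, differentiating $\expval{O_a}_{s,\params} = \ev{e^{i\frac{\phi_i}{2}P}\widetilde O e^{-i\frac{\phi_i}{2}P}}{\phi}$ yields $\partial_{\phi_i}\expval{O_a}_{s,\params} = \tfrac{i}{2}\ev{e^{i\frac{\phi_i}{2}P}[P,\widetilde O]e^{-i\frac{\phi_i}{2}P}}{\phi}$. The key algebraic fact, which holds precisely because $P^2 = \mathbbm{1}$, is the identity $\tfrac{i}{2}[P,\widetilde O] = \tfrac12\big(e^{i\frac{\pi}{4}P}\widetilde O e^{-i\frac{\pi}{4}P} - e^{-i\frac{\pi}{4}P}\widetilde O e^{i\frac{\pi}{4}P}\big)$, proved by expanding $e^{\pm i\frac{\pi}{4}P} = \tfrac{1}{\sqrt2}(\mathbbm{1}\pm iP)$; substituting it and absorbing the extra $e^{\pm i\frac{\pi}{4}P}$ into $e^{\mp i\frac{\phi_i}{2}P}$ produces exactly the parameter-shift rule \eqref{eq:psr}. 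The same computation, with $O_a$ replaced by $P_a$, gives the \textsc{raw-PQC} derivatives.

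Finally, for the scaling parameters $\lambdas$ I would observe that $\lambda_{i,j}$ enters the circuit only through a rotation of angle $s_j\lambda_{i,j}$, not through an angle equal to $\lambda_{i,j}$; hence the parameter-shift rule applies to the derivative with respect to $s_j\lambda_{i,j}$, and $\partial_{\lambda_{i,j}}\expval{O_a}_{s,\params}$ is recovered via the chain rule $\partial_{\lambda_{i,j}} = s_j\,\partial_{s_j\lambda_{i,j}}$, as in the footnote to the statement. The only real content of the lemma is the parameter-shift identity of the preceding paragraph, which is standard \cite{mitarai18,schuld19}; the remaining difficulty is purely bookkeeping — keeping track of the chain rule for $\lambdas$, and using that each $\phi_i$ labels a single gate in the architecture of Fig.~\ref{fig:pqc-architecture} (if a parameter were shared across several gates, one would sum the corresponding parameter shifts).
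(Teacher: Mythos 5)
Your derivation of the \textsc{softmax-PQC} log-gradient is exactly the paper's argument (the standard log-derivative of a softmax/Gibbs distribution, Appendix~\ref{sec:gradient-log-prob}), and the \textsc{raw-PQC} case and the $\lambdas$ chain rule are handled as the paper intends. The only difference is that you also rederive the parameter-shift identity via $e^{\pm i\frac{\pi}{4}P}=\tfrac{1}{\sqrt 2}(\mathbbm{1}\pm iP)$, which is correct but is simply cited to \cite{mitarai18,schuld19} in the paper rather than proved.
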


In some of our environments, we additionally rely on a linear value-function baseline to reduce the variance of the Monte Carlo estimates \cite{greensmith04}. We choose it to be identical to that of Ref.\ \cite{duan16}.

\begin{figure}[t]
  \centering
\begin{minipage}{.8\linewidth}
\DecMargin{0.2em}
\begin{algorithm2e}[H]
\caption{\textsc{REINFORCE} with PQC policies and value-function baselines}\label{alg}
 \KwIn{a PQC policy $\policy$ from Def.\ \ref{def:raw-softmax-PQC}; a value-function approximator $\approxV$}
 Initialize parameters $\params$ and $\paramsV$\;
 \While{True}{
  Generate $N$ episodes $\left\{(s_0, a_0, r_1, \ldots, s_{H-1}, a_{H-1}, r_H) \right\}_i$ following $\policy$\;
  \For{episode $i$ in batch}{
    Compute the returns $G_{i,t} \leftarrow \sum_{t'=1}^{H-t} \gamma^{t'} r^{(i)}_{t+t'}$\;
    Compute the gradients $\nabla_{\params}\log{\policy(a^{(i)}_t|s^{(i)}_t)}$ using Lemma \ref{lem:gradient-exp}\;}
  Fit $\big\{\approxV(s_t^{(i)})\big\}_{i,t}$ to the returns $\left\{G_{i,t}\right\}_{i,t}$\;
  Compute $\Delta\params  = \dfrac{1}{N}\sum\limits_{i=1}^{N} \sum\limits_{t=0}^{H-1} \nabla_{\params}\log{\policy(a^{(i)}_t|s^{(i)}_t)} \left( G_{i,t} - \approxV(s_t^{(i)}) \right);$\BlankLine
  Update $\params \leftarrow \params + \alpha\Delta\params$\;
 }
\end{algorithm2e}
\end{minipage}
\end{figure}

\subsection{Efficient policy sampling and policy-gradient evaluation}

A natural consideration when it comes to the implementation of our PQC policies is whether one can efficiently (in the number of executions of the PQC on a quantum computer) sample and train them.

By design, sampling from our \textsc{raw-PQC} policies can be done with a single execution (and measurement) of the PQC: the projective measurement corresponding to the observable $O = \sum_{a} a P_a$ naturally samples a basis state associated to action $a$ with probability $\expval{P_a}_{s,\params}$. However, as Eq.~(\ref{eq:raw-PQC-gradient}) indicates, in order to train these policies using REINFORCE, one is nonetheless required to estimate the expectation values $\expval{P_a}_{s,\params}$, along with the gradients $\nabla_{\params}\expval{P_a}_{s,\params}$. Fortunately, these quantities can be estimated efficiently up to some additive error $\varepsilon$, using only $\mathcal{O}(\varepsilon^{-2})$ repeated executions and measurements on a quantum computer.

In the case of our \textsc{softmax-PQC} policies, it is less clear whether similar noisy estimates $\langle\widetilde{O_a}\rangle_{s,\params}$ of the expectation values $\expval{O_a}_{s,\params}$ are sufficient to evaluate policies of the form of Eq.~(\ref{eq:softmax-PQC}). We show however that, using these noisy estimates, we can compute a policy $\widetilde{\pi}_{\bm{\theta}}$ that produces samples close to that of the true policy $\policy$. We state our result formally in the following lemma, proven in Appendix~\ref{sec:proofs-softmax-pqc}.

\begin{restatable}[]{lemma}{tvpolicy}
\label{lemma:tv-policy}
	For a \textsc{softmax-PQC} policy $\policy$ defined by a unitary $U(s,\params)$ and observables $O_a$, call $\langle\widetilde{O_a}\rangle_{s,\params}$ approximations of the true expectation values $\expval{O_a}_{s,\params}$ with at most $\varepsilon$ additive error. Then the approximate policy $\widetilde{\pi}_{\bm{\theta}} = \textnormal{softmax}_\beta(\langle\widetilde{O_a}\rangle_{s,\params})$ has total variation distance $\mathcal{O}(\beta\varepsilon)$ to $\policy = \textnormal{softmax}_\beta(\expval{O_a}_{s,\params})$. Since expectation values can be efficiently estimated to additive error on a quantum computer, this implies efficient approximate sampling from $\policy$. 
\end{restatable}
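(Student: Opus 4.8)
The plan is to treat $\textnormal{softmax}_\beta$ as a Lipschitz map and simply propagate the additive error through it. Write $x_a = \beta\expval{O_a}_{s,\params}$ and $\tilde x_a = \beta\langle\widetilde{O_a}\rangle_{s,\params}$, so that by hypothesis $\|x-\tilde x\|_\infty \le \beta\varepsilon$, and $\policy(\cdot|s) = \textnormal{softmax}(x)$, $\widetilde\pi_{\bm\theta}(\cdot|s) = \textnormal{softmax}(\tilde x)$. The total variation distance is $\frac12\sum_a |\textnormal{softmax}(x)_a - \textnormal{softmax}(\tilde x)_a|$, so it suffices to bound this $\ell_1$ difference by $2\beta\varepsilon$.

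First I would use the Jacobian of the softmax: for $p = \textnormal{softmax}(y)$ one has $\partial_{y_b}p_a = p_a(\delta_{ab} - p_b)$. Parametrizing the segment from $x$ to $\tilde x$ by $y(t) = x + t(\tilde x - x)$, $t\in[0,1]$, and writing $\delta = \tilde x - x$, the chain rule gives $\frac{d}{dt}p_a(y(t)) = p_a(y(t))\big(\delta_a - \sum_b p_b(y(t))\,\delta_b\big)$. Summing absolute values over $a$ and using $\sum_a p_a(y(t)) = 1$ together with $|\delta_a|\le\|\delta\|_\infty$ and $\big|\sum_b p_b\delta_b\big|\le\|\delta\|_\infty$ yields $\sum_a\big|\frac{d}{dt}p_a(y(t))\big| \le 2\|\delta\|_\infty$ \emph{uniformly} in $t$. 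Integrating over $t\in[0,1]$ and pulling the sum and absolute value inside the integral gives $\sum_a|\textnormal{softmax}(\tilde x)_a - \textnormal{softmax}(x)_a| \le 2\|\delta\|_\infty \le 2\beta\varepsilon$, hence total variation distance $\le \beta\varepsilon = \mathcal{O}(\beta\varepsilon)$. An equivalent, more elementary route is to observe that each ratio $e^{\beta\langle\widetilde{O_a}\rangle}/e^{\beta\expval{O_a}}$ lies in $[e^{-\beta\varepsilon},e^{\beta\varepsilon}]$, so both numerator and denominator of $\widetilde\pi_{\bm\theta}(a|s)$ are within a factor $e^{\pm\beta\varepsilon}$ of those of $\policy(a|s)$; thus $\widetilde\pi_{\bm\theta}(a|s) \in [e^{-2\beta\varepsilon},e^{2\beta\varepsilon}]\,\policy(a|s)$ and $\|\widetilde\pi_{\bm\theta}(\cdot|s) - \policy(\cdot|s)\|_1 \le e^{2\beta\varepsilon}-1 = \mathcal{O}(\beta\varepsilon)$.

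For the final sentence I would invoke the standard fact that, since each $O_a = \sum_i w_{a,i}H_{a,i}$ has bounded operator norm (the $H_{a,i}$ being Paulis or computational-basis projections, with eigenvalues in $[-1,1]$), the expectation value $\expval{O_a}_{s,\params}$ can be estimated to additive error $\varepsilon$ with $\mathcal{O}(\varepsilon^{-2})$ circuit executions and measurements via a Hoeffding/Chernoff bound, the weights $w_{a,i}$ and the number of terms being absorbed into the constant. Combining this with the Lipschitz bound and choosing $\varepsilon = \Theta(\delta/\beta)$ produces samples from a policy at total variation distance at most $\delta$ from $\policy$ at cost polynomial in $1/\delta$ and $\beta$, which is the claimed efficient approximate sampling.

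The main obstacle is bookkeeping rather than a deep idea: ensuring the Lipschitz estimate for softmax is uniform over the whole segment (so the integration step is legitimate) and carries no hidden dependence on $|A|$ or on the magnitudes of the expectation values — the Jacobian computation above is exactly what guarantees this, since the bound $2\|\delta\|_\infty$ involves only $\|\delta\|_\infty$. A secondary point to treat carefully is the passage from ``close in total variation'' to ``efficient approximate sampling'', i.e.\ noting that $\widetilde\pi_{\bm\theta}$ is itself a genuine probability distribution that can be sampled classically once the finitely many estimates $\langle\widetilde{O_a}\rangle_{s,\params}$ have been obtained on the quantum computer.
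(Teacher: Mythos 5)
Your proposal is correct, and in fact it contains two arguments: your ``more elementary route'' is precisely the paper's proof, while your main argument is genuinely different. The paper proceeds by the multiplicative sandwich: monotonicity of the exponential gives $e^{-2\beta\varepsilon}\policy(a|s) \leq \widetilde{\pi}_{\bm{\theta}}(a|s) \leq e^{2\beta\varepsilon}\policy(a|s)$, and then $\sum_a \abs{\widetilde{\pi}_{\bm{\theta}}(a|s)-\policy(a|s)} \leq \abs{e^{2\beta\varepsilon}-e^{-2\beta\varepsilon}} = 2\abs{\sinh(2\beta\varepsilon)} = 4\beta\varepsilon + \mathcal{O}\big((\beta\varepsilon)^3\big)$ (note the paper defines total variation without the factor $\tfrac12$, so the constants differ trivially from yours). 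Your primary route instead bounds the softmax Jacobian along the segment from $x$ to $\tilde{x}$, giving the uniform estimate $\sum_a\abs{\tfrac{d}{dt}p_a(y(t))}\leq 2\norm{\delta}_\infty$ and hence an $\ell_1$ bound of $2\beta\varepsilon$ after integration. What your route buys is a clean, nonasymptotic Lipschitz constant for $\mathrm{softmax}_\beta$ that is manifestly independent of $\abs{A}$ and of the magnitudes of the expectation values, valid for all $\beta\varepsilon$ rather than only in the $\beta\varepsilon\to 0^+$ regime in which the paper states its expansion; the paper's sandwich argument is shorter and is also the device reused in its Lemma on the log-policy gradient, so it earns its keep there. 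Your treatment of the final sampling claim (Hoeffding with $\mathcal{O}(\varepsilon^{-2})$ shots for bounded-norm observables, then classical sampling from the explicitly computed $\widetilde{\pi}_{\bm{\theta}}$) matches what the paper asserts without detailing, and is adequate.
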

We also obtain a similar result for the log-policy gradient of \textsc{softmax-PQC}s (see Lemma \ref{lem:gradient-exp}), that we show can be efficiently estimated to additive error in $\ell_\infty$-norm (see Appendix \ref{sec:proofs-softmax-pqc} for a proof).

\section{Performance comparison in benchmarking environments\label{sec:benchmark}}

In the previous section, we have introduced our quantum policies and described several of our design choices. We defined the \textsc{raw-PQC} and \textsc{softmax-PQC} models and introduced two original features for PQCs: trainable observables at their output and trainable scaling parameters for their input. In this section, we evaluate the influence of these design choices on learning performance through numerical simulations. We consider three classical benchmarking environments from the OpenAI Gym library \cite{brockman16}: CartPole, MountainCar and Acrobot. All three have continuous state spaces and discrete action spaces (see Appendix \ref{sec:env-spec-hyper} for their specifications). Moreover, simple NN-policies, as well as simple closed-form policies, are known to perform very well in these environments \cite{gym20}, which makes them an excellent test-bed to benchmark PQC policies.

\subsection{\textsc{raw-PQC} v.s.\ \textsc{softmax-PQC}\label{sec:raw-vs-softmax-sim}}

\begin{figure*}
	\subfloat{\label{fig:cartpole}\includegraphics[width=0.33\linewidth]{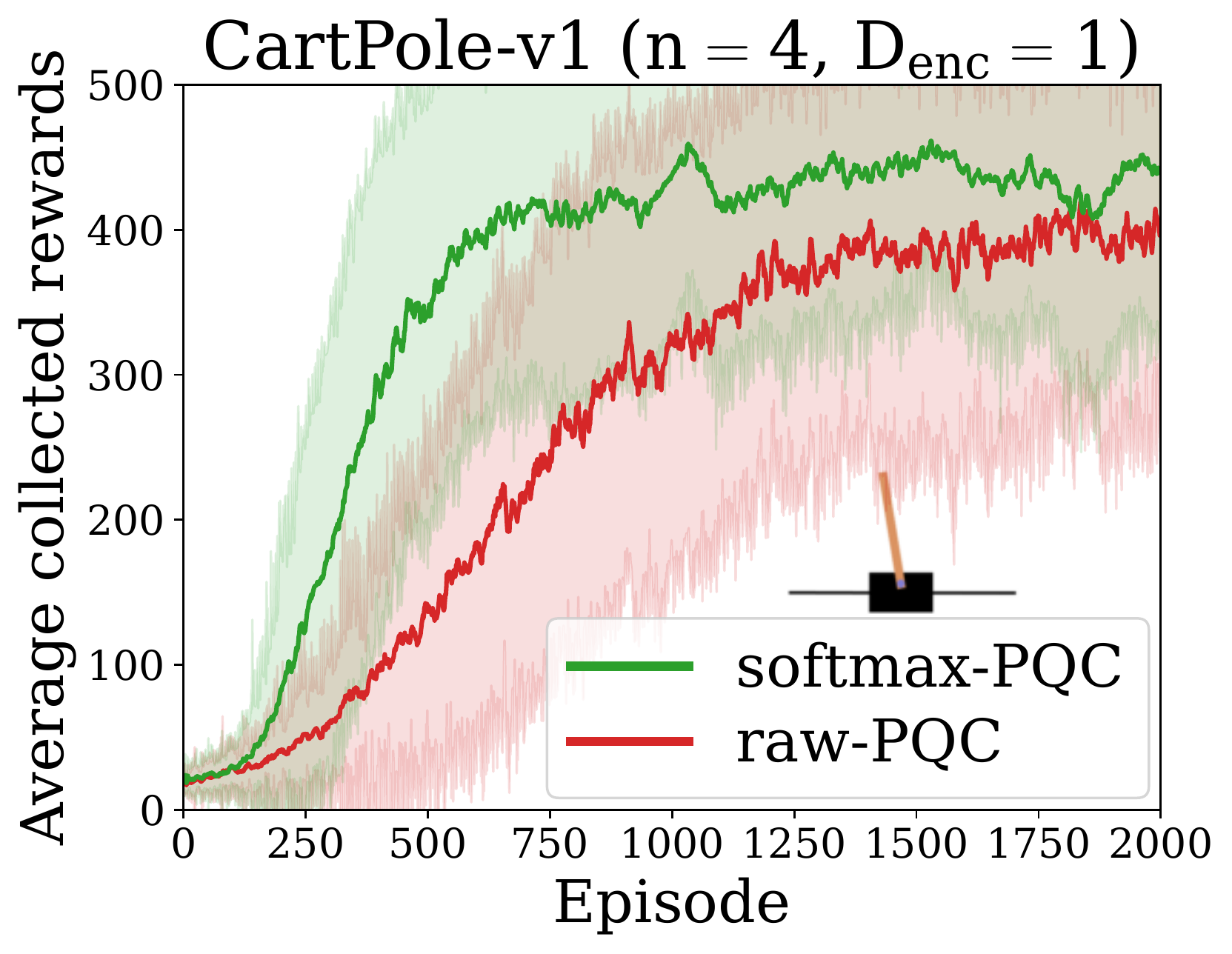}}\hspace{0em}%
	\subfloat{\label{fig:mountaincar}\includegraphics[width=0.328\linewidth]{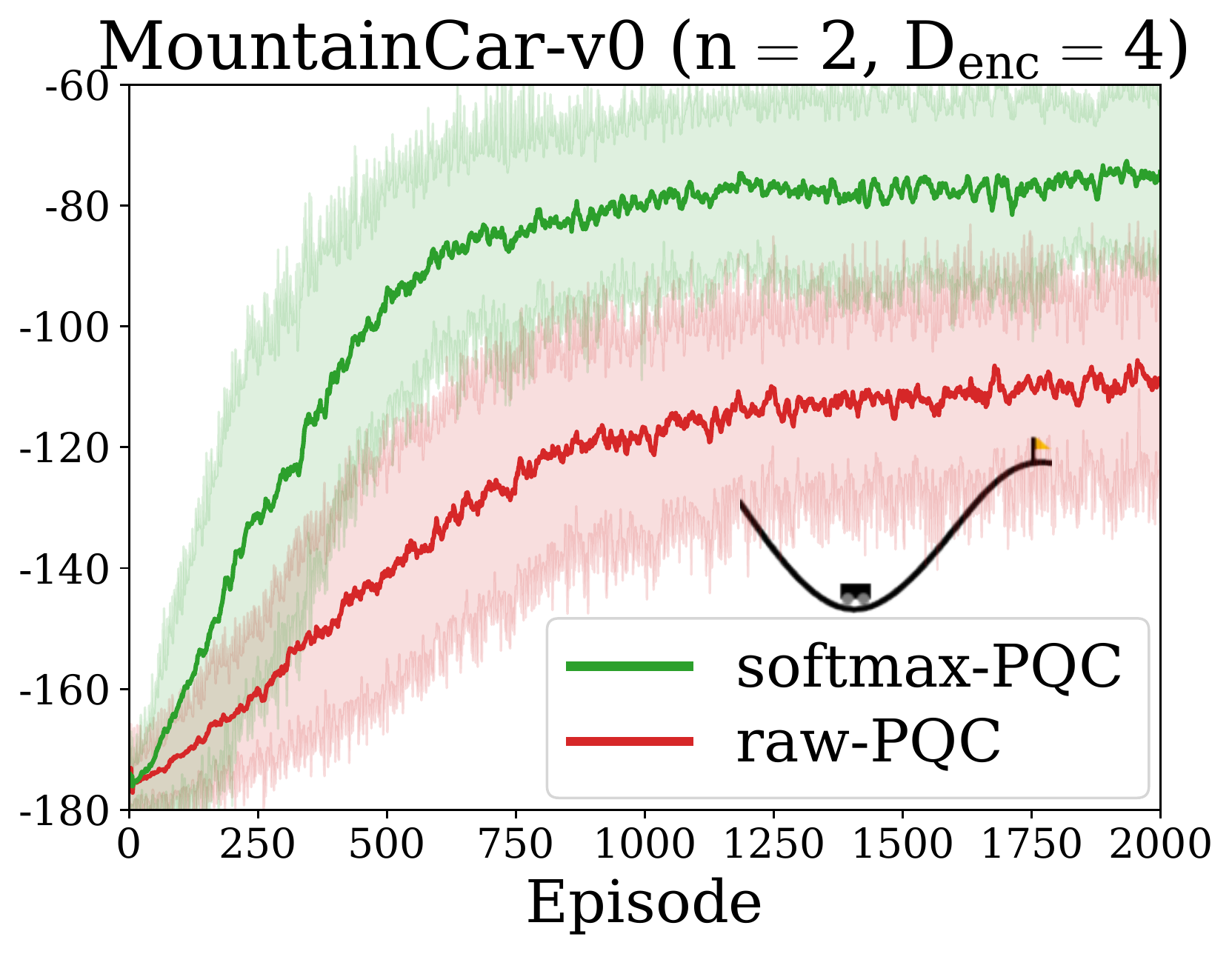}}\hspace{0em}%
	\subfloat{\label{fig:acrobot}\includegraphics[width=0.328\linewidth]{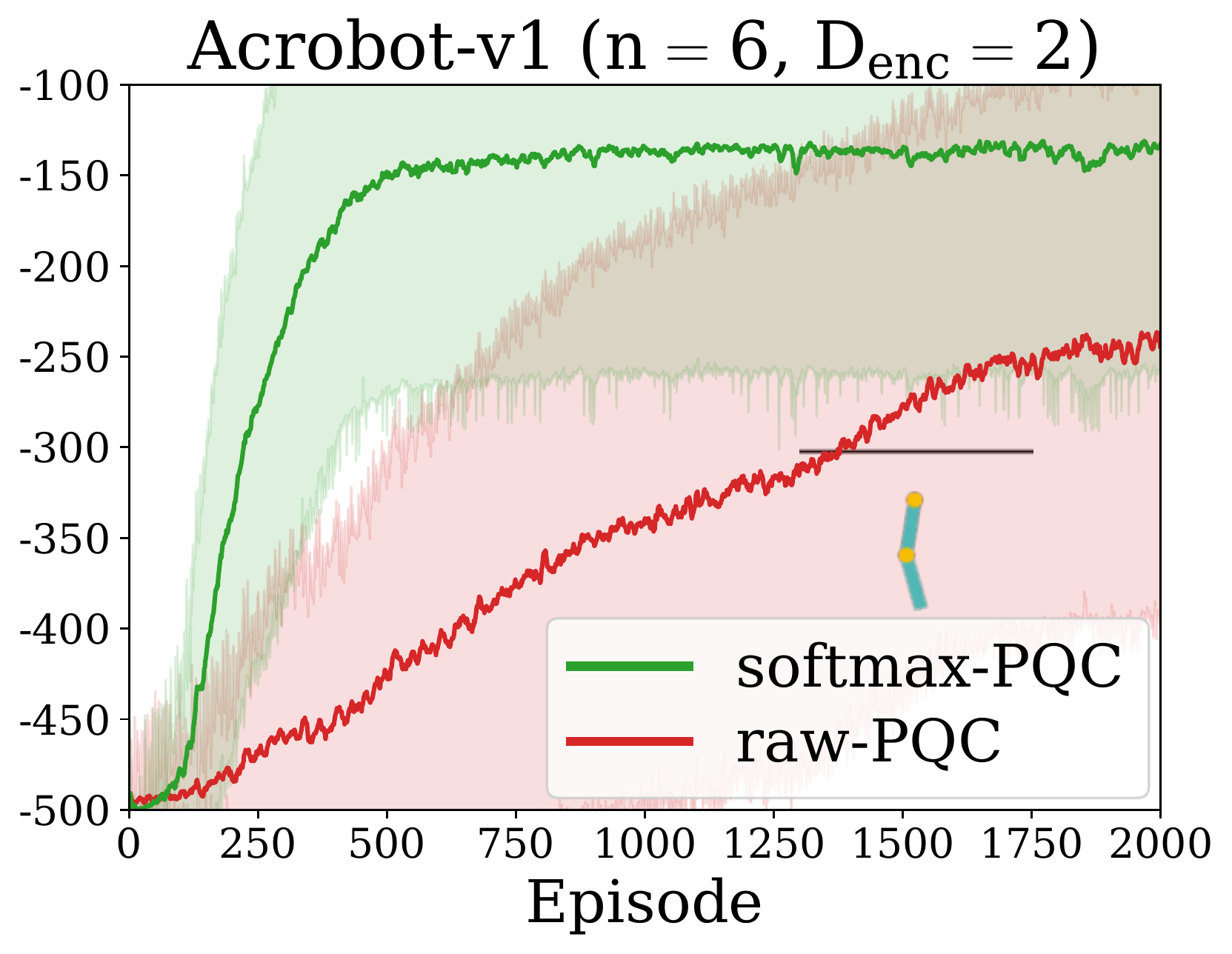}}\hspace{0em}%
  \caption{\textbf{Numerical evidence of the advantage of \textsc{softmax-PQC} over \textsc{raw-PQC} in benchmarking environments. }The learning curves ($20$ agents per curve) of randomly-initialized \textsc{softmax-PQC} agents (green curves) and \textsc{raw-PQC} agents (red curves) in OpenAI Gym environments: CartPole-v1, MountainCar-v0, and Acrobot-v1. Each curve is temporally averaged with a time window of $10$ episodes. All agents have been trained using the \textsc{REINFORCE} algorithm (see Alg.\ \ref{alg}), with value-function baselines for the MountainCar and Acrobot environments.}
  \label{fig:softmax-vs-raw}
\end{figure*}
\vspace{-0.5em}
\begin{figure*}
	\subfloat{\label{fig:cartpole-softmax}\includegraphics[width=0.33\linewidth, valign=c]{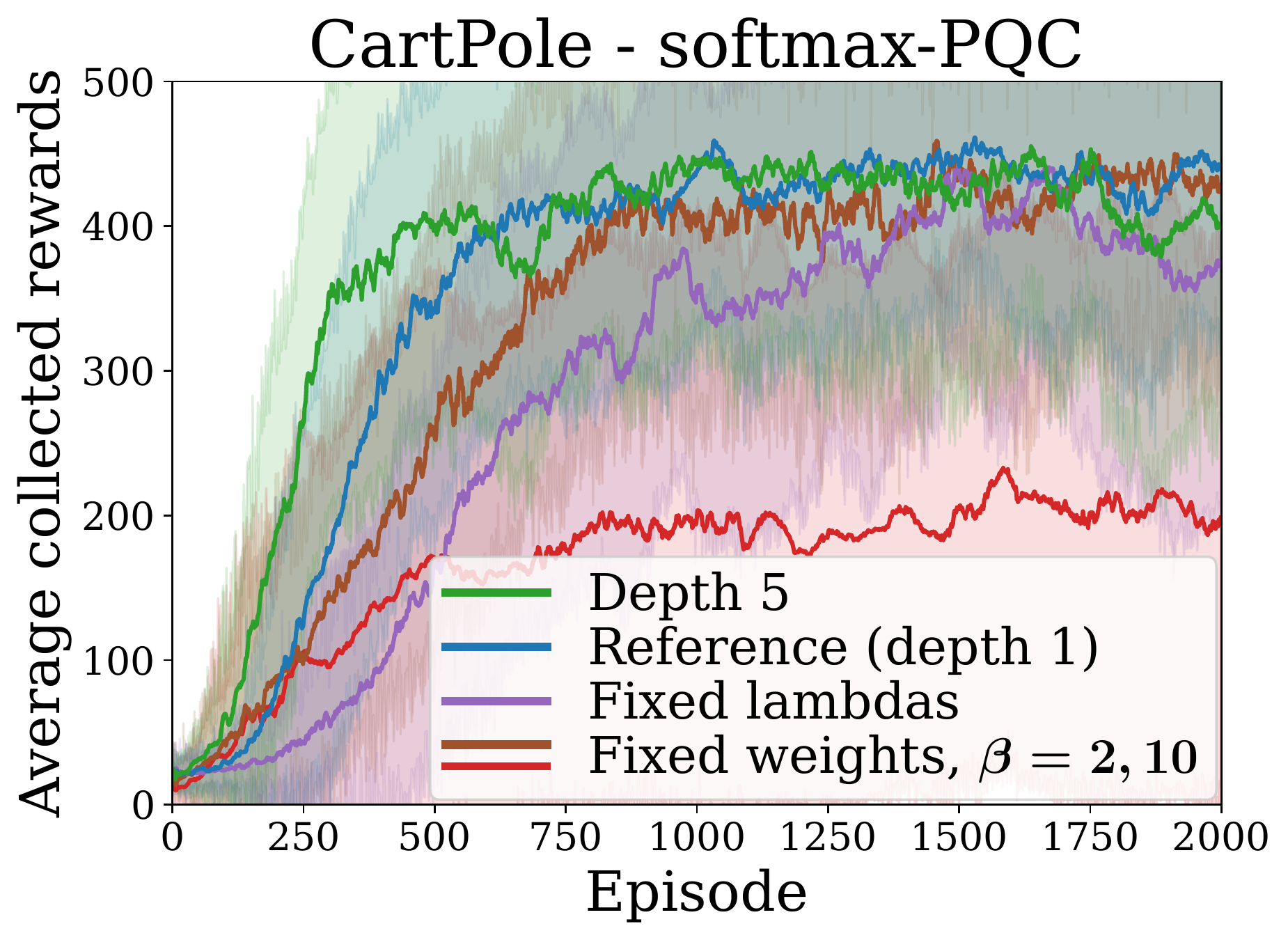}}\hspace{0em}%
	\subfloat{\label{fig:mountaincar-softmax}\includegraphics[width=0.33\linewidth, valign=c]{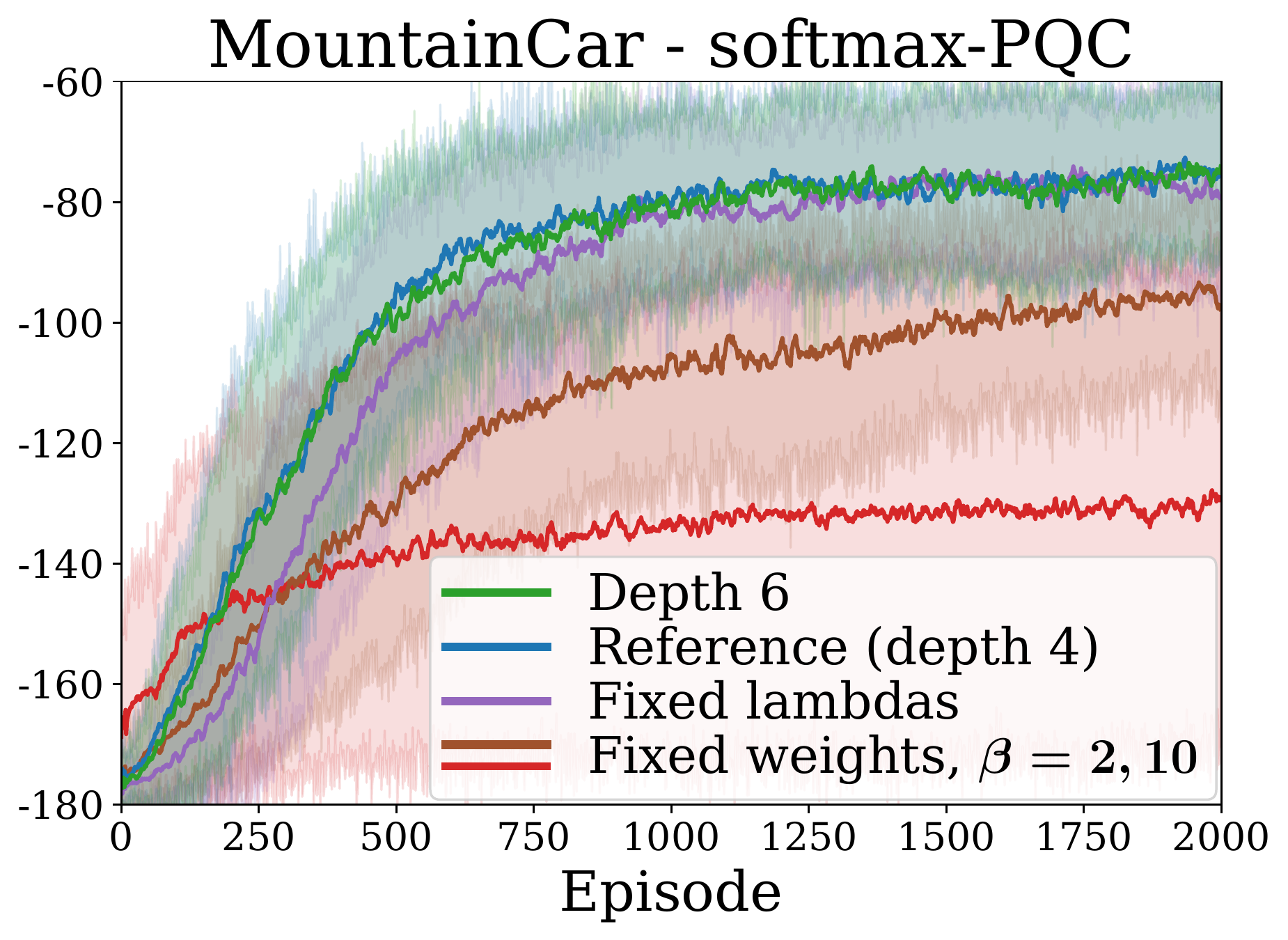}}\hspace{0em}%
	\subfloat{\label{fig:acrobot-softmax}\includegraphics[width=0.33\linewidth, valign=c]{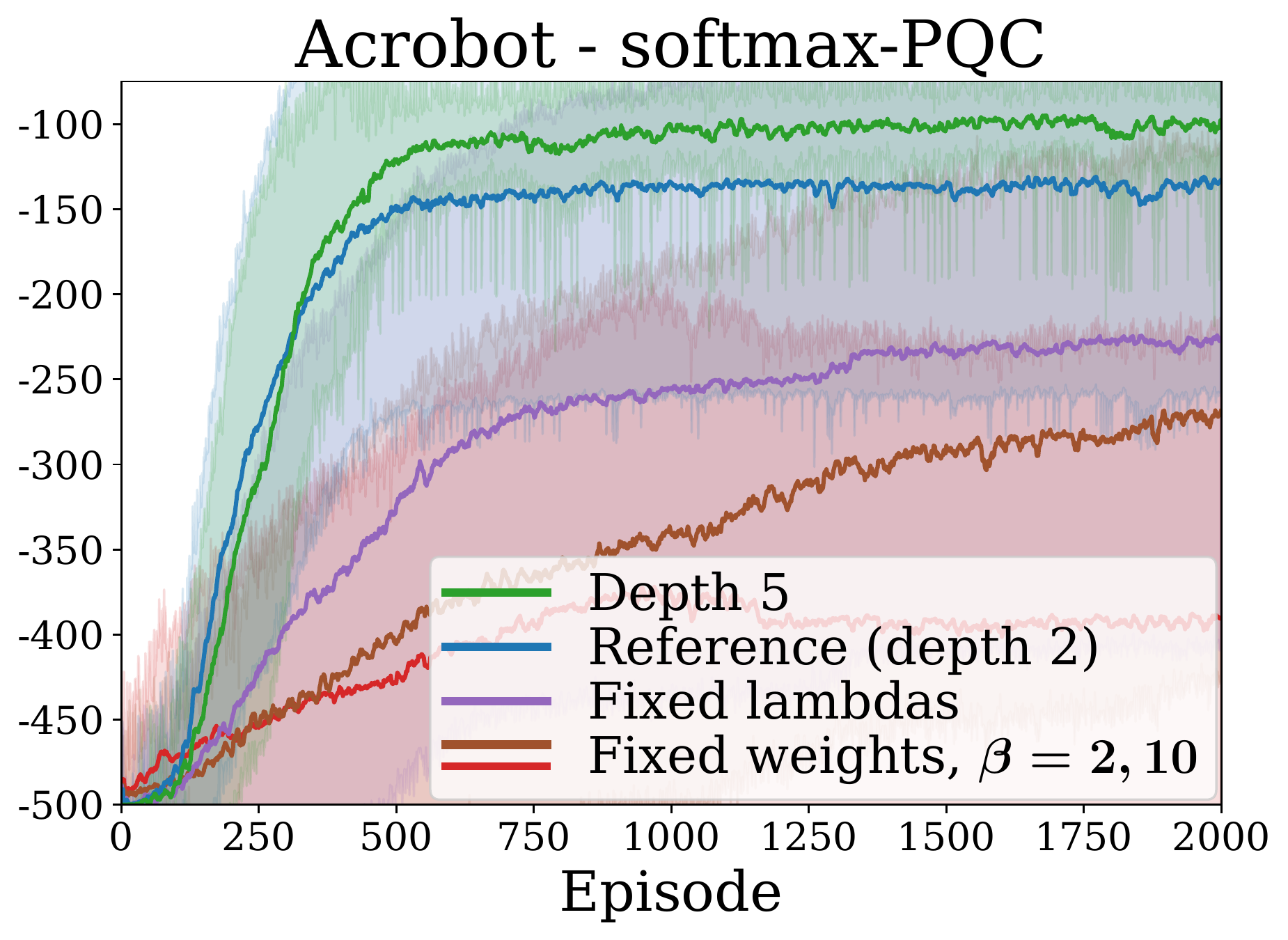}}\hspace{0em}
  \caption{\textbf{Influence of the model architecture for \textsc{softmax-PQC} agents. }The blue curves in each plot correspond to the learning curves from Fig.\ \ref{fig:softmax-vs-raw} and are taken as a reference. Other curves highlight the influence of individual hyperparameters. For \textsc{raw-PQC} agents, see Appendix \ref{sec:PQC-DNN}.}
  \label{fig:architecture-compare}
\end{figure*}\vspace{-0.5em}

In our first set of experiments, presented in Fig.\ \ref{fig:softmax-vs-raw}, we evaluate the general performance of our proposed policies. The aim of these experiments is twofold: first, to showcase that quantum policies based on shallow PQCs and acting on very few qubits can be trained to good performance in our selected environments; second, to test the advantage of \textsc{softmax-PQC} policies over \textsc{raw-PQC} policies that we conjectured in the Sec.\ \ref{sec:model-def}. To assess these claims, we take a similar approach for each of our benchmarking environments, in which we evaluate the average learning performance of $20$ \textsc{raw-PQC} and $20$ \textsc{softmax-PQC} agents. Apart from the PQC depth, the shared hyperparameters of these two models were jointly picked as to give the best overall performance for both; the hyperparameters specific to each model were optimized independently. As for the PQC depth $D_\text{enc}$, the latter was chosen as the minimum depth for which near-optimal performance was observed for either model. The simulation results confirm both our hypotheses: quantum policies can achieve good performance on the three benchmarking tasks that we consider, and we can see a clear separation between the performance of \textsc{softmax-PQC} and \textsc{raw-PQC} agents.

\subsection{Influence of architectural choices\label{sec:architecture-sim}}

The results of the previous subsection however do not indicate whether other design choices we have made in Sec.\ \ref{sec:model-def} had an influence on the performance of our quantum agents. To address this, we run a second set of experiments, presented in Fig.\ \ref{fig:architecture-compare}. In these simulations, we evaluate the average performance of our \textsc{softmax-PQC} agents after modifying one of three design choices: we either increment the depth of the PQC (until no significant increase in performance is observed), fix the input-scaling parameters $\lambdas$ to $\bm{1}$, or fix the observable weights $\weights$ to $\bm{1}$. By comparing the performance of these agents with that of the agents from Fig.\ \ref{fig:softmax-vs-raw}, we can make the following observations:
\begin{itemize}[leftmargin=4mm]
\item \textbf{Influence of depth:} Increasing the depth of the PQC generally improves (not strictly) the performance of the agents. Note that the maximum depth we tested was $D_\text{enc}=10$.
\item \textbf{Influence of scaling parameters $\lambdas$:} We observe that training these scaling parameters in general benefits the learning performance of our PQC policies, likely due to their increased expressivity.
\item \textbf{Influence of trainable observable weights $\weights$:} our final consideration relates to the importance of having a policy with ``trainable greediness'' in RL scenarios. For this, we consider \textsc{softmax-PQC} agents with fixed observables $\beta O_a$ throughout training. We observe that this has the general effect of decreasing the performance and/or the speed of convergence of the agents. We also see that policies with fixed high $\beta$ (or equivalently, a large observable norm $\beta\norm{O_a}$) tend to have a poor learning performance, likely due to their lack of exploration in the RL environments.
\end{itemize}
Finally, note that all the numerical simulations performed here did not include any source of noise in the PQC evaluations. It would be an interesting research direction to assess the influence of (simulated or hardware-induced) noise on the learning performance of PQC agents.

\section{Quantum advantage of PQC agents in RL environments\label{sec:quantum-advantage}}

The proof-of-concept experiments of the previous section show that our PQC agents can learn in basic classical environments, where they achieve comparable performance to standard DNN policies. This observation naturally raises the question of whether there exist RL environments where PQC policies can provide a learning advantage over standard classical policies. In this section, we answer this question in the affirmative by constructing: a) environments with a provable separation in learning performance between quantum and any classical (polynomial-time) learners, and b) environments where our PQC policies of Sec.\ \ref{sec:pqc-policies} show an empirical learning advantage over standard DNN policies.

\subsection{Quantum advantage of PQC policies over any classical learner\label{sec:DLP-RL}}

In this subsection, we construct RL environments with theoretical guarantees of separation between quantum and classical learning agents. These constructions are predominantly based on the recent work of Liu \emph{et al.} \cite{liu20}, which defines a classification task out of the discrete logarithm problem (DLP), i.e., the problem solved in the seminal work of Shor \cite{shor99}. 
In broad strokes, this task can be viewed as an encryption of an easy-to-learn problem. For an ``un-encrypted'' version, one defines a labeling $f_s$ of integers between $0$ and $p-2$ (for a large prime $p$), where the integers are labeled positively if and only if they lie in the segment $[s, s+(p-3)/2]$ ($\text{mod}\ p-1$). Since this labeling is linearly separable, the concept class $\{f_s\}_s$ is then easy to learn. To make it hard, the input integers $x$ (now between $1$ and $p-1$) are first encrypted using modular exponentiation, i.e., the secure operation performed in the Diffie–Hellman key exchange protocol. In the encrypted problem, the logarithm of the input integer $\log_g(x)$ (for a generator $g$ of $\mathbbmss{Z}_p^*$, see Appendix \ref{sec:DLP-task}) hence determines the label of $x$. Without the ability to decrypt by solving DLP, which is widely believed to be classically intractable, the numbers appear randomly labeled. Moreover, Liu \emph{et al.}\ show that achieving non-trivial labeling accuracy $1/2 + 1/\text{poly}(n)$ (for $n=\log(p)$, i.e., slightly better than random guessing) with a classical polynomial-time algorithm using $\text{poly}(n)$ examples would lead to an efficient classical algorithm that solves DLP \cite{liu20}. In contrast, the same authors construct a family of quantum learners based on Shor's algorithm, that can achieve a labeling accuracy larger than $0.99$ with high probability.

\paragraph{SL-DLP}
Our objective is to show that analogous separations between classical and quantum learners can be established for RL environments, in terms of their attainable value functions. We start by pointing out that supervised learning (SL) tasks (and so the classification problem of Liu \emph{et al.}) can be trivially embedded into RL environments \cite{dunjko17b}: for a given concept $f_s$, the states $x$ are datapoints, an action $a$ is an agent's guess on the label of $x$, an immediate reward specifies if it was correct (i.e., $f_s(x)=a$), and subsequent states are chosen uniformly at random. In such settings, the value function is trivially related to the testing accuracy of the SL problem, yielding a direct reduction of the separation result of Liu \emph{et al.} \cite{liu20} to an RL setting. We call this family of environments SL-DLP.

\paragraph{Cliffwalk-DLP}
In the SL-DLP construction, we made the environment fully random in order to simulate the process of obtaining i.i.d.\ samples in an SL setting. It is an interesting question whether similar results can be obtained for environments that are less random, and endowed with temporal structure, which is characteristic of RL. In our second family of environments (Cliffwalk-DLP), we supplement the SL-DLP construction with next-state transitions inspired by the textbook ``cliff walking'' environment of Sutton \& Barto \cite{sutton98}: all states are ordered in a chain and some actions of the agent can lead to immediate episode termination. We keep however stochasticity in the environment  by allowing next states to be uniformly sampled, with a certain probability $\delta$ (common in RL to ensure that an agent is not simply memorizing a correct sequence of actions). This allows us to show that, as long as sufficient randomness is provided, we still have a simple classical-quantum separation. 

\paragraph{Deterministic-DLP}
In the two families constructed above, each environment instance provided the randomness needed for a reduction from the SL problem. This brings us to the question of whether separations are also possible for fully deterministic environments. In this case, it is clear that for any given environment, there exists an efficient classical agent which performs perfectly over any polynomial horizon (a lookup-table will do). However, we show in our third family of environments (Deterministic-DLP) that a separation can still be attained by moving the randomness to the choice of the environment itself: assuming an efficient classical agent is successful in most of exponentially-many randomly generated (but otherwise deterministic) environments, implies the existence of a classical efficient algorithm for DLP.

We summarize our results in the following theorem, detailed and proven in Appendices \ref{sec:proof-thm-separations} through \ref{sec:proof-deterministic-DLP}.

\begin{theorem}\label{thm:separations-DLP}
There exist families of reinforcement learning environments which are: i) fully random (i.e., subsequent states are independent from the previous state and action); ii) partially random (i.e., the previous moves determine subsequent states, except with a probability $\delta$ at least $0.86$ where they are chosen uniformly at random), and iii) fully deterministic; such that there exists a separation in the value functions achievable by a given quantum polynomial-time agent and any classical polynomial-time agent. Specifically, the value of the initial state for the quantum agent $V_q(s_0)$ is $\varepsilon-$close to the optimal value function (for a chosen $\varepsilon$, and with probability above 2/3). Further, if there exists a classical efficient learning agent that achieves a value $V_c(s_0)$ better than $V_\textnormal{rand}(s_0)+\varepsilon'$ (for a chosen $\varepsilon'$, and with probability above 0.845), then there exists a classical efficient algorithm to solve DLP. Finally, we have $V_q(s_0)-V_c(s_0)$ larger than some constant, which depends on the details of the environment.
\end{theorem}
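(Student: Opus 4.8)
The plan is to derive all three items by reduction to the classical hardness of the DLP-based concept class $\{f_s\}_s$ of Liu \emph{et al.}\ \cite{liu20}, instantiating the quantum agent from Shor's algorithm \cite{shor99}. In each construction the per-step reward is $1$ for a correct label guess and $0$ otherwise over a horizon $H=\mathrm{poly}(n)$ (with $n=\log p$), so that the optimal value is $V^\ast(s_0)=H$ and a uniformly random policy has $V_\textnormal{rand}(s_0)=H/2$; ``$\varepsilon$-close to optimal'' and ``better than $V_\textnormal{rand}+\varepsilon'$'' then translate directly into classification accuracies of $1-\varepsilon/H$ and $1/2+\varepsilon'/H$.

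\textbf{Fully random (SL-DLP).} First I would make precise the embedding of Dunjko \emph{et al.}\ \cite{dunjko17b}: states are DLP-encoded inputs $x\in\mathbb{Z}_p^\ast$, actions are guesses $a\in\{0,1\}$, the reward is $\mathbbm{1}[a=f_s(x)]$, and the next state is drawn uniformly at random independently of $(x,a)$, so subsequent states are independent of the past, as required by item (i). Then $V_\pi(s_0)=H\cdot\mathrm{acc}(\pi)$, with $\mathrm{acc}(\pi)$ the accuracy of the classifier induced by $\pi$ under the uniform input distribution. A quantum agent recovers $y=\log_g x$ with Shor's subroutine and, using $\mathrm{poly}(n)$ reward observations, learns the separating threshold $s$ on the line (a PAC-learnable problem post-decryption) up to error $\varepsilon/H$; repeating to suppress Shor's failure probability gives $V_q(s_0)\ge H-\varepsilon$ with probability above $2/3$. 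Conversely, from the action statistics of any classical poly-time agent with $V_c(s_0)>V_\textnormal{rand}(s_0)+\varepsilon'$ (holding with probability above $0.845$) one reads off a classical classifier of accuracy $1/2+\varepsilon'/H=1/2+1/\mathrm{poly}(n)$ built from $\mathrm{poly}(n)$ examples, which by Liu \emph{et al.}\ yields an efficient classical DLP solver; the constant $0.845$ is dictated by the error budget of this reduction, and the gap $V_q(s_0)-V_c(s_0)$ is a constant fraction of $H$.

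\textbf{Partially random (Cliffwalk-DLP) --- the main obstacle.} Next I would add temporal structure: the states form a chain $x_0,x_1,\dots$ of independently sampled DLP inputs; with probability $1-\delta$ the environment advances one step along the chain and, as in the cliff-walking environment of \cite{sutton98}, an incorrect guess terminates the episode with no further reward, while with probability $\delta$ the next state is resampled uniformly. The quantum agent again learns $s$ and plays near-optimally. For the classical bound, the crucial observation is that on each freshly resampled state --- a $\delta$-fraction of steps in expectation --- the agent's action is, conditioned on its transcript, a guess on an i.i.d.\ sample, so a Chernoff bound over the $\mathrm{poly}(n)$ such steps shows that $V_c(s_0)>V_\textnormal{rand}(s_0)+\varepsilon'$ forces a non-trivial classification advantage on these samples, reducing to the previous case. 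One then picks $\delta$ so that simultaneously (a) enough fresh samples arise for the reduction to be statistically sound and (b) the chain-plus-termination dynamics do not let a classical agent approach $V_q$ merely by surviving; solving the resulting inequality produces the explicit threshold $\delta\ge 0.86$. This case is the delicate heart of the argument: unlike the clean i.i.d.\ embedding of SL-DLP, one must certify that partial resampling still yields samples usable in the reduction of Liu \emph{et al.}\ despite correlations introduced by the chain and by early termination, while at the same time guaranteeing a constant value gap --- reconciling these two quantitative demands is precisely what fixes the constants.

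\textbf{Fully deterministic (Deterministic-DLP).} Finally, since any single deterministic environment is solved by a lookup table, I would move the randomness into the environment's \emph{description}: sample $(p,g,s)$ from an exponentially large family and fix a deterministic ``scrambling'' next-state permutation on the chain. The quantum agent is unaffected, since Shor's algorithm solves every instance and the agent then plays near-optimally. Conversely, if a classical efficient agent achieved $V_c(s_0)>V_\textnormal{rand}(s_0)+\varepsilon'$ on more than a $0.845$-fraction of these environments, an averaging argument would give non-trivial classification accuracy for a non-negligible fraction of secrets $s$; invoking the random self-reducibility of DLP (equivalently, of $\{f_s\}_s$ under shifts of $s$) then upgrades this to a worst-case efficient classical DLP solver, a contradiction. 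In all three cases the quantum agent reaches within $\varepsilon$ of $V^\ast(s_0)$ while any successful classical agent would break DLP, and the separation $V_q(s_0)-V_c(s_0)$ is lower-bounded by a positive constant determined by $H$, $\varepsilon$, $\varepsilon'$ and (for Cliffwalk) $\delta$.
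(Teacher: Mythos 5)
Your overall strategy (embed the Liu \emph{et al.}\ concept class into three RL environments of decreasing randomness, get quantum learnability from a quantum learner and classical hardness by reduction to DLP) matches the paper, and your SL-DLP case is essentially the paper's argument up to a rescaling of rewards. But the two non-trivial cases contain genuine gaps. For Cliffwalk-DLP, the paper's classical-hardness mechanism is not a Chernoff bound over ``freshly resampled'' steps: it is a policy-agnostic inequality relating the value of \emph{any} policy to its global classification accuracy, obtained because the chain runs over all of $\mathbb{Z}_p^*$ (with circular boundary, uniformly random episode starts, and uniform slips), namely $V_\pi(s_0) \leq \frac{\mathrm{Acc}_\pi - 1}{1-\mathrm{Acc}_\pi\,\gamma\delta}$ together with $V_\textnormal{rand}(s_0) = -\frac{1}{2-\gamma}$; the constants $\delta \geq 0.86$, $\gamma \leq 0.9$ come from a monotonicity analysis of $g(x,\delta,\gamma)=\frac{x-1}{1-\delta\gamma x}+\frac{1}{2-\gamma}$ at the accuracy threshold $x=0.51$. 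Your version, a \emph{fixed} chain of polynomially many sampled inputs, would fail as stated: since wrong guesses reveal labels and the environment persists across episodes, a classical agent can simply memorize the chain over polynomially many episodes and approach optimal value without any classification advantage, and your claim that ``solving the resulting inequality produces $\delta\ge0.86$'' is asserted rather than derived. The correlation/termination issue you flag as the ``delicate heart'' is precisely what the paper's exponential-size chain and value-to-accuracy bound are designed to dispose of, and that idea is missing from your sketch.

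For Deterministic-DLP, your construction is underspecified in a way that breaks the reduction: you never say how the agent obtains labeled information or where rewards sit, whereas the paper's environment is a chain of $k=\mathrm{poly}(n)$ \emph{labeled} training states $(x,f_s(x))$ followed by a single unlabeled test state whose wrong answer leads to an unrewarded limbo state, so that the value of any policy literally equals its probability of labeling the one test point correctly given the training transcript. That identification is what lets the paper turn ``value $\geq 1/2+1/\mathrm{poly}(n)$ with probability $\geq 0.845$ over the random environment'' into an average accuracy bound ($\geq 0.844155 > \frac{5}{6}+\frac{1}{96}$ after majority-vote boosting), then via a counting argument into accuracy $\frac12+1/\mathrm{poly}(n)$ on at least $2/3$ of training sets, which Liu \emph{et al.}'s Theorem 8 converts into a DLP solver; this is also where the constant $0.845$ actually comes from, rather than from the Cliffwalk analysis as your write-up suggests. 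Your appeal to an ``averaging argument'' plus random self-reducibility is the right spirit (self-reducibility is what underlies Liu \emph{et al.}'s average-case statement), but without an environment in which value provably equals test accuracy, the averaging step has nothing to average. Your quantum agent (Shor-decrypt then learn the threshold) is a legitimate alternative to the paper's kernel-based learner for this theorem, but it does not repair the classical-hardness gaps above.
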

The remaining point we need to address here is that the learning agents of Liu \emph{et al.} do not rely on PQCs but rather support vector machines (SVMs) based on quantum kernels \cite{havlivcek19,schuld19b}. Nonetheless, using a connection between these quantum SVMs and PQCs \cite{schuld19b}, we construct PQC policies which are as powerful in solving the DLP environments as the agents of Liu \emph{et al.} (even under similar noise considerations). We state our result in the following informal theorem, that we re-state formally, along with the details of our construction in Appendices \ref{sec:PQC-agent-DLP} and \ref{sec:training-SL-DLP}.
\begin{theorem}[informal version]
Using a training set of size polynomial in $n = \log(p)$ and a number of (noisy) quantum circuit evaluations also polynomial in $n$, we can train a PQC classifier on the DLP task of Liu \emph{et al.} of size $n$ that achieves a testing accuracy arbitrarily close to optimal, with high probability. This PQC classifier can in turn be used to construct close-to-optimal quantum agents in our DLP environments, as prescribed by Theorem \ref{thm:separations-DLP}.
\end{theorem}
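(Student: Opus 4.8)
The plan is to split the statement into two independent parts and handle them in turn: (a) build a polynomially-sized PQC classifier of the form of Def.~\ref{def:raw-softmax-PQC} that matches Liu \emph{et al.}'s quantum-SVM accuracy on the DLP task, trainable with $\text{poly}(n)$ labelled examples and $\text{poly}(n)$ noisy circuit evaluations; and (b) show that any such classifier, wrapped into a policy, attains a near-optimal value function in each of the three environment families of Theorem~\ref{thm:separations-DLP}. The bridge for (a) is the equivalence between quantum-kernel methods and ``explicit'' quantum models with trainable observables \cite{schuld19b}: a kernel SVM is a linear classifier in feature Hilbert space, and by the representer theorem its optimal hyperplane $w = \sum_i \alpha_i y_i \ket{\psi_{x_i}}\!\bra{\psi_{x_i}}$ is a Hermitian operator with a $\text{poly}(n)$-size description, whose decision function is precisely the expectation value $\expval{w}_{s,\params} = \ev{w}{\psi_{s,\phis,\lambdas}}$ — exactly the object a \textsc{softmax-PQC} computes when its trainable observable is $O_a = \sum_i w_{a,i} H_{a,i}$.

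\emph{Step 1 (the PQC classifier).} I would take the encoding unitary $U_\text{enc}$ of the PQC to be Liu \emph{et al.}'s DLP feature-state preparation, which uses $\mathcal{O}(\text{poly}(n))$ qubits and gates and internally calls Shor's period-finding routine, producing states $\ket{\psi_{x}}$ whose pairwise overlaps realise the DLP kernel. By Liu \emph{et al.}, there is a unit vector $\weights_s$ in feature space with $\mathrm{sign}(\langle \weights_s, \Phi_\text{DLP}(x)\rangle) = f_s(x)$ at margin $\Delta = \Omega(1/\text{poly}(n))$ on the DLP distribution, and this guarantee is robust to the error incurred when overlaps are estimated from finitely many measurements. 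I would then either decompose the corresponding Hermitian operator $w$ into a $\text{poly}(n)$-size family $\{H_{a,i}\}$ of Pauli products / computational-basis projections, or use $w$ directly as the (polynomially-described) observable, and define the associated \textsc{softmax-PQC} (or \textsc{raw-PQC}, via the label-subspace projections) policy of Def.~\ref{def:raw-softmax-PQC}.

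\emph{Step 2 (training under noise).} Optimising the weights $\weights$ is a convex problem (a linear model with hinge or log-loss over the feature embedding), so it has no spurious minima and can be carried out either by solving the SVM or by the policy-gradient updates of Alg.~\ref{alg} with the log-policy gradient of Lemma~\ref{lem:gradient-exp}; a standard margin-based generalization bound gives testing accuracy $1-\eta$ from $\mathcal{O}(1/(\eta\Delta^2)) = \text{poly}(n)$ examples. The quantum subtlety is that every expectation value and gradient entry is available only up to additive error $\varepsilon$; by Lemma~\ref{lemma:tv-policy} and its $\ell_\infty$ gradient analogue, taking $\varepsilon = o(\Delta)$ — hence $\mathcal{O}(\varepsilon^{-2}) = \text{poly}(n)$ circuit executions per estimate — preserves the decision sign on all but an $\mathcal{O}(\eta)$-fraction of inputs and keeps the sampled policy close in total variation to the exact one. \emph{This is where I expect the main difficulty:} simultaneously controlling all noise sources — finite-sample overlap estimation inside the feature map, expectation-value estimation at the output, and the sampling error of Lemma~\ref{lemma:tv-policy} — so that each stays below the inverse-polynomial margin $\Delta$ while the totals of examples, circuit executions, and the failure probabilities stay polynomial; this amounts to re-deriving Liu \emph{et al.}'s robustness analysis inside the PQC picture.

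\emph{Step 3 (from classifier to RL agent).} Finally I would plug the trained PQC classifier, composed with $\text{softmax}_\beta$ for a large constant $\beta$, into the three constructions of Theorem~\ref{thm:separations-DLP}. In SL-DLP the value of the initial state equals the testing accuracy up to normalization, so $V_q(s_0) \ge V^\star(s_0) - \mathcal{O}(\eta)$ immediately. In Cliffwalk-DLP a wrong action may terminate the episode, but a union bound over the polynomial horizon $H$ shows the agent follows an optimal trajectory with probability at least $1 - H\eta$, hence $V_q(s_0)$ is within $\mathcal{O}(H\eta)$ of optimal; since $H = \text{poly}(n)$ and $\Delta = \Omega(1/\text{poly}(n))$, one can drive $\eta$ and the per-estimate failure probability small enough with $\text{poly}(n)$ resources to obtain the asserted $\varepsilon$-closeness with probability above $2/3$ after union-bounding. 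In Deterministic-DLP the same per-environment argument applies with the randomness now over the choice of secret $s$, so the averaged guarantee matches the statement of Theorem~\ref{thm:separations-DLP}. As the classical-hardness side of that theorem is untouched by this construction, the PQC agent obtained in this way exhibits exactly the claimed quantum–classical separation, which completes the proof.
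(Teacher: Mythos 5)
There is a genuine gap at the center of your proposal: the entire content of the theorem is the claim that the classifier can be \emph{trained to near-optimal accuracy under estimation noise with polynomially many examples and circuit evaluations}, and your Step 2 explicitly defers exactly this (``this is where I expect the main difficulty \dots re-deriving Liu \emph{et al.}'s robustness analysis inside the PQC picture''). That deferral is not automatic. Liu \emph{et al.}'s robustness guarantee is proved for their specific kernel-SVM procedure on noisy kernel estimates; it does not transfer as-is to an explicit model trained by hinge/log-loss gradient descent, and certainly not to the \textsc{REINFORCE} variant you float, for which no accuracy guarantee exists here (Lemma \ref{lemma:tv-policy} only controls sampling fidelity of a \emph{given} policy, not the quality of the learned one). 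Moreover, your representer-theorem observable $w=\sum_i \alpha_i y_i \ket{\psi_{x_i}}\bra{\psi_{x_i}}$ makes the ``PQC classifier'' the implicit kernel machine in disguise: evaluating $\ev{w}{\psi_x}$ means estimating all overlaps $\abs{\braket{\psi_{x_i}}{\psi_x}}^2$, i.e.\ the degenerate regime discussed in Appendix \ref{sec:trainable-observables} where the variational circuit plays no role, so your construction can only be made rigorous by invoking Liu \emph{et al.}'s Theorem 2 wholesale as a black box --- which is precisely the part you would need to actually write out, including the bookkeeping that every additive error stays below the inverse-polynomial margin while the total number of shots remains polynomial.

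The paper's own proof (Appendices \ref{sec:PQC-agent-DLP} and \ref{sec:training-SL-DLP}) avoids this machinery entirely by exploiting the structure of the concept class: every $f_s$ is a half-interval in log-space, so the hypothesis class is restricted to the one-parameter family of hyperplanes normal to $\ket{\phi_{s'}}$, realized by a fixed observable $\ket{0^{\otimes n}}\bra{0^{\otimes n}}$ and a variational unitary $\hat V(s')$ of the same form as the feature map. Training is then a discrete search: evaluate the empirical noisy loss only at the $\abs{X}$ candidates $g^{s'}=x\in X$ and take the argmin ($\abs{X}^2$ estimated overlaps in total), and the noise analysis reduces to Chebyshev plus union bounds (Lemmas \ref{lemma:no-mismatch} and \ref{lemma:PAC-proof}, Corollary \ref{cor:PAC-optim}), yielding the explicit $1-\varepsilon$ accuracy with probability $1-\delta$ of Theorem \ref{thm:perf-PQC} --- no convex duality, margin generalization bound, or SVM robustness theorem is needed. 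To salvage your route you must either carry out the noise-propagation analysis for your convex training yourself, or cite Liu \emph{et al.}'s learnability theorem as a black box and only argue the (easy) re-expression of their classifier as an expectation value; as written, neither is done. Your Step 3 is right in spirit, but also glosses the environment-specific ingredients the paper uses to reach the stated success probabilities (collecting \emph{distinct} samples in Cliffwalk-DLP to avoid the bias from the temporal structure, and majority-vote boosting over segments of the training chain in Deterministic-DLP).
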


\subsection{Quantum advantage of PQC policies over DNN policies\label{sec:PQC-env}}

While the DLP environments establish a proof of the learning advantage PQC policies can have in theory, these environments remain extremely contrived and artificial. They are based on algebraic properties that agents must explicitly decrypt in order to perform well. Instead, we would like to consider environments that are less tailored to a specific decryption function, which would allow more general agents to learn. To do this, we take inspiration from the work of Havlí\v{c}ek \emph{et al.} \cite{havlivcek19}, who, in order to test their PQC classifiers, define a learning task generated by similar quantum circuits.

\subsubsection{PQC-generated environments}

We generate our RL environments out of random \textsc{raw-PQC}s. To do so, we start by uniformly sampling a \textsc{raw-PQC} that uses the alternating-layer architecture of Fig.\ \ref{fig:pqc-architecture} for $n=2$ qubits and depth $D_\text{enc} = 4$. We use this \textsc{raw-PQC} to generate a labeling function $f(s)$ by assigning a label $+1$ to the datapoints $s$ in $[0,2\pi]^2$ for which $\expval{ZZ}_{s,\params} \geq 0$ and a label $-1$ otherwise. We create a dataset $S$ of $10$ datapoints per label by uniformly sampling points in $[0,2\pi]^2$ for which $|\expval{ZZ}_{s,\params}| \geq \frac{\Delta}{2} = 0.15$. This dataset allows us to define two RL environments, similar to the SL-DLP and Cliffwalk-DLP environments of Sec.\ \ref{sec:DLP-RL}:

\begin{figure*}
	\subfloat{\label{fig:pqc-env}\includegraphics[width=0.315\linewidth, valign=c]{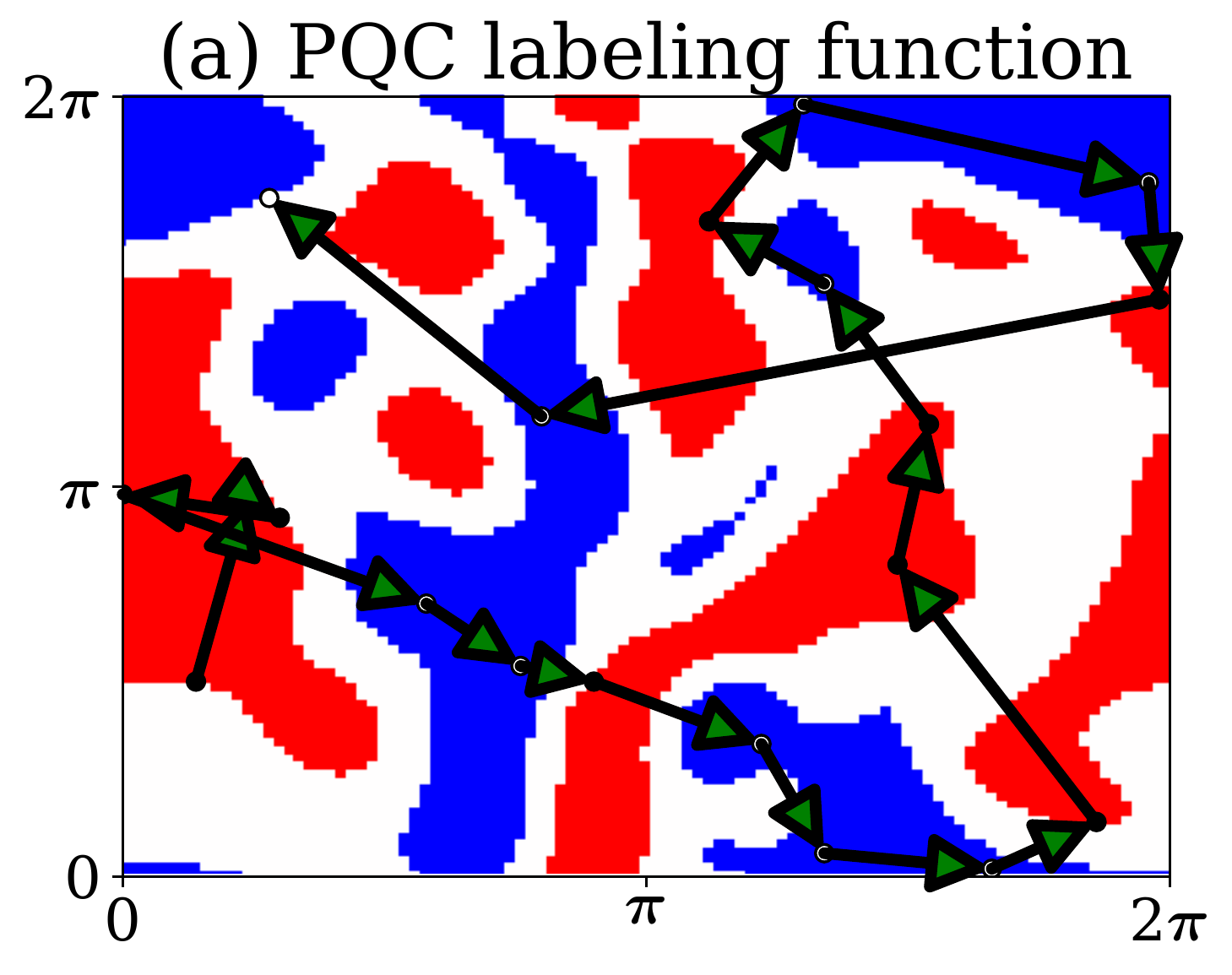}}\hspace{0em}%
	\subfloat{\label{fig:supervised}\includegraphics[width=0.34\linewidth, valign=c]{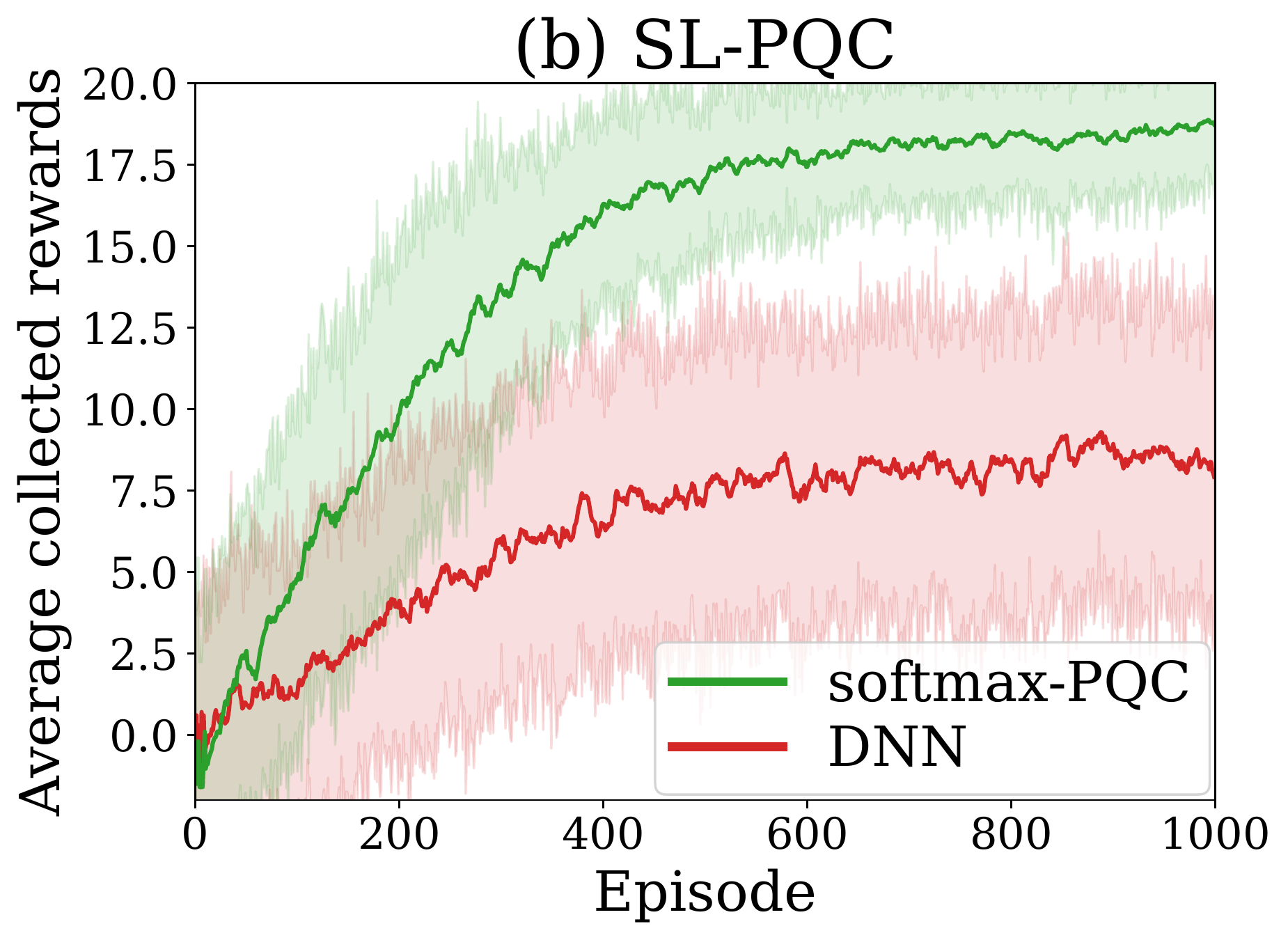}}\hspace{0em}%
	\subfloat{\label{fig:cliffwalk}\includegraphics[width=0.34\linewidth, valign=c]{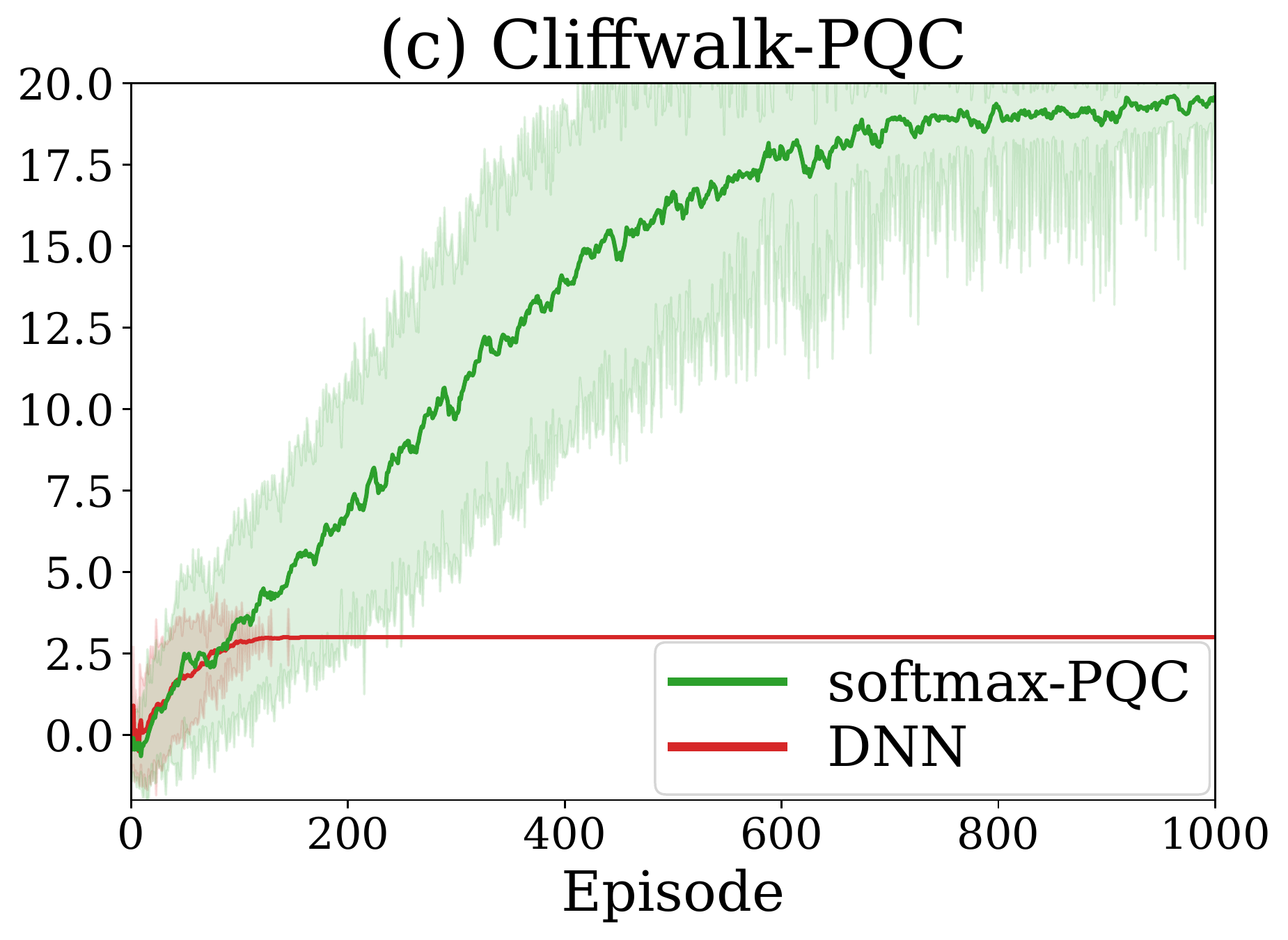}}\hspace{0em}%
  \caption{\textbf{Numerical evidence of the advantage of PQC policies over DNN policies in PQC-generated environments. }(a) Labeling function and training data used for both RL environments. The data labels (red for $+1$ label and blue for $-1$ label) are generated using a \textsc{raw-PQC} of depth $D_\text{enc} = 4$ with a margin $\Delta = 0.3$ (white areas). The training samples are uniformly sampled from the blue and red regions, and arrows indicate the rewarded path of the cliffwalk environment.  (b) and (c) The learning curves ($20$ agents per curve) of randomly-initialized \textsc{softmax-PQC} agents and DNN agents in RL environments where input states are (b) uniformly sampled from the dataset and (c) follow cliffwalk dynamics. Each curve is temporally averaged with a time window of $10$ episodes.}
  \label{fig:pqc-vs-nn}
\end{figure*}

\begin{itemize}[leftmargin=4mm]
\item \textbf{SL-PQC:} this degenerate RL environment encodes a classification task in an episodic RL environment: at each interaction step of a $20$-step episode, a sample state $s$ is uniformly sampled from the dataset $S$, the agent assigns a label $a=\pm1$ to it and receives a reward $\delta_{f(s),a}=\pm1$.

\item \textbf{Cliffwalk-PQC:} this environment essentially adds a temporal structure to SL-PQC: each episode starts from a fixed state $s_0 \in S$, and if an agent assigns the correct label to a state $s_i$, $0\leq i \leq 19$, it moves to a fixed state $s_{i+1}$ and receives a $+1$ reward, otherwise the episode is instantly terminated and the agent gets a $-1$ reward. Reaching $s_{20}$ also causes termination.\\
\end{itemize}

\subsubsection{Performance comparison\label{sec:PQC-env-sim}}

Having defined our PQC-generated environments, we now evaluate the performance of \textsc{softmax-PQC} and DNN policies in these tasks. The particular models we consider are \textsc{softmax-PQC}s with PQCs sampled from the same family as that of the \textsc{raw-PQC}s generating the environments (but with re-initialized parameters $\params$), and DNNs using Rectified Linear Units (ReLUs) in their hidden layers. In our hyperparameter search, we evaluated the performance of DNNs with a wide range of depths (number of hidden layers between $2$ to $10$) and widths (number of units per hidden layer between $8$ and $64$), and kept the architecture with the best average performance (depth $4$, width $16$).

Despite this hyperparametrization, we find (see Fig.\ \ref{fig:pqc-vs-nn}, and Fig.\ \ref{fig:pqc-vs-nn-2} in Appendix \ref{sec:PQC-DNN} for different environment instances) that the performance of DNN policies on these tasks remains limited compared to that of \textsc{softmax-PQC}s, that learn close-to-optimal policies on both tasks. Moreover, we observe that the separation in performance gets boosted by the cliffwalk temporal structure. This is likely do to the increased complexity of this task, as, in order to move farther in the cliffwalk, the policy family should allow learning new labels without ``forgetting'' the labels of earlier states. In these particular case studies, the \textsc{softmax-PQC} policies exhibited sufficient flexibility in this sense, whereas the DNNs we considered did not (see Appendix \ref{sec:PQC-DNN} for a visualization of these policies). Note that these results do not reflect the difficulty of our tasks at the sizes we consider (a look-up table would perform optimally) but rather highlight the inefficacy of these DNNs at learning PQC functions.

\newpage

\section{Conclusion}

In this work, we have investigated the design of quantum RL agents based on PQCs. We proposed several constructions and showed the impact of certain design choices on learning performance. In particular, we introduced the \textsc{softmax-PQC} model, where a softmax policy is computed from expectation values of a PQC with both trainable observable weights and input scaling parameters. These added features to standard PQCs used in ML (e.g., as quantum classifiers) enhance both the expressivity and flexibility of PQC policies, which allows them to achieve a learning performance on benchmarking environments comparable to that of standard DNNs. We additionally demonstrated the existence of task environments, constructed out of PQCs, that are very natural for PQC agents, but on which DNN agents have a poor performance. To strengthen this result, we constructed several RL environments, each with a different degree of degeneracy (i.e., closeness to a supervised learning task), where we showed a rigorous separation between a class of PQC agents and any classical learner, based on the widely-believed classical hardness of the discrete logarithm problem. We believe that our results constitute strides toward a practical quantum advantage in RL using near-term quantum~devices.

\section{Broad impact\label{sec:broad-impact}}

We expect our work to have an overall positive societal impact. Notably, we believe that our approach to QRL could be beneficial in the two following ways:
\begin{itemize}[leftmargin=4mm]
	\item Modern-day RL is known to be very resource-heavy in terms of compute power and energy consumption (see, e.g., the resources needed to train AlphaGo Zero \cite{silver17}). In other computational problems, e.g., the quantum supremacy problem of Google \cite{arute19}, it was shown that, because of their computational advantages, quantum computers could save many orders of magnitude in energy consumption compared to classical supercomputers \cite{villalonga20}. Therefore, a quantum learning advantage as showcased in our work could potentially alleviate the computational demands of RL, making it more economically appealing and environmentally-friendly.
	\item Aside from the game-based problems that we consider in our work, the areas of application of RL are constantly increasing \cite{kober13,mahmud18,yu19}. The learning advantages of QRL could potentially make these existing applications more accessible technologically and economically, but also unlock new applications, e.g., in problems in quantum information \cite{albarran18,wu20} or quantum chemistry \cite{peruzzo14}.
\end{itemize}

At the same time, our work may have certain negative consequences. Notably, QRL will inherit many of the problems that are already present in classical RL and ML in general. For instance, it is not clear whether the question of interpretability of learning models \cite{linardatos21} will be negatively or positively impacted by switching to quantum models. One could argue that the inability to fully access the quantum Hilbert spaces in which quantum computers operate can turn learning models even further into ``black-boxes'' than existing classical models. Also, similarly to the fact that current state-of-the-art ML/RL requires supercomputers that are not accessible to everyone, private and select access to quantum computers could emphasize existing inequalities in developing and using AI.

\begin{ack}
The authors would like to thank Srinivasan Arunachalam for clarifications on the testing accuracy of their quantum classifier in the DLP classification task. The authors would also like to thank Andrea Skolik and Arjan Cornelissen for helpful discussions and comments. CG thanks Thomas Moerland for discussions in the early phases of this project. SJ and HJB acknowledge support from the Austrian Science Fund (FWF) through the projects DK-ALM:W1259-N27 and SFB BeyondC F7102. SJ also acknowledges the Austrian Academy of Sciences as a recipient of the DOC Fellowship. This work was in part supported by the Dutch Research Council (NWO/OCW), as part of the Quantum Software Consortium program (project number 024.003.037). VD and SM acknowledge the support by the project NEASQC funded from the European Union’s Horizon 2020 research and innovation programme (grant agreement No 951821). VD and SM also acknowledge partial funding by an unrestricted gift from Google Quantum AI. The computational results presented here have been achieved in part using the LEO HPC infrastructure of the University of Innsbruck.
\end{ack}

\bibliographystyle{unsrtnat}
\bibliography{references}

\begin{thebibliography}{62}
\providecommand{\natexlab}[1]{#1}
\providecommand{\url}[1]{\texttt{#1}}
\expandafter\ifx\csname urlstyle\endcsname\relax
  \providecommand{\doi}[1]{doi: #1}\else
  \providecommand{\doi}{doi: \begingroup \urlstyle{rm}\Url}\fi

\bibitem[Preskill(2018)]{preskill18}
John Preskill.
\newblock Quantum computing in the nisq era and beyond.
\newblock \emph{Quantum}, 2:\penalty0 79, 2018.

\bibitem[Bharti et~al.(2021)Bharti, Cervera-Lierta, Kyaw, Haug, Alperin-Lea,
  Anand, Degroote, Heimonen, Kottmann, Menke, et~al.]{bharti21}
Kishor Bharti, Alba Cervera-Lierta, Thi~Ha Kyaw, Tobias Haug, Sumner
  Alperin-Lea, Abhinav Anand, Matthias Degroote, Hermanni Heimonen, Jakob~S
  Kottmann, Tim Menke, et~al.
\newblock Noisy intermediate-scale quantum (nisq) algorithms.
\newblock \emph{arXiv preprint arXiv:2101.08448}, 2021.

\bibitem[Benedetti et~al.(2019)Benedetti, Lloyd, Sack, and
  Fiorentini]{benedetti19}
Marcello Benedetti, Erika Lloyd, Stefan Sack, and Mattia Fiorentini.
\newblock Parameterized quantum circuits as machine learning models.
\newblock \emph{Quantum Science and Technology}, 4\penalty0 (4):\penalty0
  043001, 2019.

\bibitem[Farhi and Neven(2018)]{farhi18}
Edward Farhi and Hartmut Neven.
\newblock Classification with quantum neural networks on near term processors.
\newblock \emph{arXiv preprint arXiv:1802.06002}, 2018.

\bibitem[Schuld et~al.(2020)Schuld, Bocharov, Svore, and Wiebe]{schuld20b}
Maria Schuld, Alex Bocharov, Krysta~M Svore, and Nathan Wiebe.
\newblock Circuit-centric quantum classifiers.
\newblock \emph{Physical Review A}, 101\penalty0 (3):\penalty0 032308, 2020.

\bibitem[Havl{\'\i}{\v{c}}ek et~al.(2019)Havl{\'\i}{\v{c}}ek, C{\'o}rcoles,
  Temme, Harrow, Kandala, Chow, and Gambetta]{havlivcek19}
Vojt{\v{e}}ch Havl{\'\i}{\v{c}}ek, Antonio~D C{\'o}rcoles, Kristan Temme,
  Aram~W Harrow, Abhinav Kandala, Jerry~M Chow, and Jay~M Gambetta.
\newblock Supervised learning with quantum-enhanced feature spaces.
\newblock \emph{Nature}, 567\penalty0 (7747):\penalty0 209--212, 2019.

\bibitem[Schuld and Killoran(2019)]{schuld19b}
Maria Schuld and Nathan Killoran.
\newblock Quantum machine learning in feature hilbert spaces.
\newblock \emph{Physical review letters}, 122\penalty0 (4):\penalty0 040504,
  2019.

\bibitem[Peters et~al.(2021)Peters, Caldeira, Ho, Leichenauer, Mohseni, Neven,
  Spentzouris, Strain, and Perdue]{peters21}
Evan Peters, Joao Caldeira, Alan Ho, Stefan Leichenauer, Masoud Mohseni,
  Hartmut Neven, Panagiotis Spentzouris, Doug Strain, and Gabriel~N Perdue.
\newblock Machine learning of high dimensional data on a noisy quantum
  processor.
\newblock \emph{arXiv preprint arXiv:2101.09581}, 2021.

\bibitem[Liu and Wang(2018)]{liu18}
Jin-Guo Liu and Lei Wang.
\newblock Differentiable learning of quantum circuit born machines.
\newblock \emph{Physical Review A}, 98\penalty0 (6):\penalty0 062324, 2018.

\bibitem[Zhu et~al.(2019)Zhu, Linke, Benedetti, Landsman, Nguyen, Alderete,
  Perdomo-Ortiz, Korda, Garfoot, Brecque, et~al.]{zhu19}
Daiwei Zhu, Norbert~M Linke, Marcello Benedetti, Kevin~A Landsman, Nhung~H
  Nguyen, C~Huerta Alderete, Alejandro Perdomo-Ortiz, Nathan Korda, A~Garfoot,
  Charles Brecque, et~al.
\newblock Training of quantum circuits on a hybrid quantum computer.
\newblock \emph{Science advances}, 5\penalty0 (10):\penalty0 eaaw9918, 2019.

\bibitem[Otterbach et~al.(2017)Otterbach, Manenti, Alidoust, Bestwick, Block,
  Bloom, Caldwell, Didier, Fried, Hong, et~al.]{otterbach17}
JS~Otterbach, R~Manenti, N~Alidoust, A~Bestwick, M~Block, B~Bloom, S~Caldwell,
  N~Didier, E~Schuyler Fried, S~Hong, et~al.
\newblock Unsupervised machine learning on a hybrid quantum computer.
\newblock \emph{arXiv preprint arXiv:1712.05771}, 2017.

\bibitem[Huang et~al.(2021)Huang, Broughton, Mohseni, Babbush, Boixo, Neven,
  and McClean]{huang20}
Hsin-Yuan Huang, Michael Broughton, Masoud Mohseni, Ryan Babbush, Sergio Boixo,
  Hartmut Neven, and Jarrod~R McClean.
\newblock Power of data in quantum machine learning.
\newblock \emph{Nature communications}, 12\penalty0 (1):\penalty0 1--9, 2021.

\bibitem[Du et~al.(2020)Du, Hsieh, Liu, and Tao]{du20}
Yuxuan Du, Min-Hsiu Hsieh, Tongliang Liu, and Dacheng Tao.
\newblock Expressive power of parametrized quantum circuits.
\newblock \emph{Physical Review Research}, 2\penalty0 (3):\penalty0 033125,
  2020.

\bibitem[Liu et~al.(2021)Liu, Arunachalam, and Temme]{liu20}
Yunchao Liu, Srinivasan Arunachalam, and Kristan Temme.
\newblock A rigorous and robust quantum speed-up in supervised machine
  learning.
\newblock \emph{Nature Physics}, 17\penalty0 (9):\penalty0 1013--1017, 2021.

\bibitem[Sweke et~al.(2021)Sweke, Seifert, Hangleiter, and Eisert]{sweke20}
Ryan Sweke, Jean-Pierre Seifert, Dominik Hangleiter, and Jens Eisert.
\newblock On the quantum versus classical learnability of discrete
  distributions.
\newblock \emph{Quantum}, 5:\penalty0 417, 2021.

\bibitem[Mnih et~al.(2015)Mnih, Kavukcuoglu, Silver, Rusu, Veness, Bellemare,
  Graves, Riedmiller, Fidjeland, Ostrovski, et~al.]{mnih15}
Volodymyr Mnih, Koray Kavukcuoglu, David Silver, Andrei~A Rusu, Joel Veness,
  Marc~G Bellemare, Alex Graves, Martin Riedmiller, Andreas~K Fidjeland, Georg
  Ostrovski, et~al.
\newblock Human-level control through deep reinforcement learning.
\newblock \emph{nature}, 518\penalty0 (7540):\penalty0 529--533, 2015.

\bibitem[Silver et~al.(2017)Silver, Schrittwieser, Simonyan, Antonoglou, Huang,
  Guez, Hubert, Baker, Lai, Bolton, et~al.]{silver17}
David Silver, Julian Schrittwieser, Karen Simonyan, Ioannis Antonoglou, Aja
  Huang, Arthur Guez, Thomas Hubert, Lucas Baker, Matthew Lai, Adrian Bolton,
  et~al.
\newblock Mastering the game of go without human knowledge.
\newblock \emph{Nature}, 550\penalty0 (7676):\penalty0 354, 2017.

\bibitem[Berner et~al.(2019)Berner, Brockman, Chan, Cheung, D{\k{e}}biak,
  Dennison, Farhi, Fischer, Hashme, Hesse, et~al.]{berner19}
Christopher Berner, Greg Brockman, Brooke Chan, Vicki Cheung, Przemys{\l}aw
  D{\k{e}}biak, Christy Dennison, David Farhi, Quirin Fischer, Shariq Hashme,
  Chris Hesse, et~al.
\newblock Dota 2 with large scale deep reinforcement learning.
\newblock \emph{arXiv preprint arXiv:1912.06680}, 2019.

\bibitem[Mirowski et~al.(2018)Mirowski, Grimes, Malinowski, Hermann, Anderson,
  Teplyashin, Simonyan, Zisserman, Hadsell, et~al.]{mirowski18}
Piotr Mirowski, Matt Grimes, Mateusz Malinowski, Karl~Moritz Hermann, Keith
  Anderson, Denis Teplyashin, Karen Simonyan, Andrew Zisserman, Raia Hadsell,
  et~al.
\newblock Learning to navigate in cities without a map.
\newblock \emph{Advances in Neural Information Processing Systems},
  31:\penalty0 2419--2430, 2018.

\bibitem[Chen et~al.(2020)Chen, Yang, Qi, Chen, Ma, and Goan]{chen20}
Samuel Yen-Chi Chen, Chao-Han~Huck Yang, Jun Qi, Pin-Yu Chen, Xiaoli Ma, and
  Hsi-Sheng Goan.
\newblock Variational quantum circuits for deep reinforcement learning.
\newblock \emph{IEEE Access}, 8:\penalty0 141007--141024, 2020.

\bibitem[Lockwood and Si(2020)]{lockwood20}
Owen Lockwood and Mei Si.
\newblock Reinforcement learning with quantum variational circuit.
\newblock In \emph{Proceedings of the AAAI Conference on Artificial
  Intelligence and Interactive Digital Entertainment}, volume~16, pages
  245--251, 2020.

\bibitem[Wu et~al.(2020)Wu, Jin, Wen, and Wang]{wu20}
Shaojun Wu, Shan Jin, Dingding Wen, and Xiaoting Wang.
\newblock Quantum reinforcement learning in continuous action space.
\newblock \emph{arXiv preprint arXiv:2012.10711}, 2020.

\bibitem[Jerbi et~al.(2021)Jerbi, Trenkwalder, Poulsen~Nautrup, Briegel, and
  Dunjko]{jerbi19}
Sofiene Jerbi, Lea~M. Trenkwalder, Hendrik Poulsen~Nautrup, Hans~J. Briegel,
  and Vedran Dunjko.
\newblock Quantum enhancements for deep reinforcement learning in large spaces.
\newblock \emph{PRX Quantum}, 2:\penalty0 010328, Feb 2021.

\bibitem[Brockman et~al.(2016)Brockman, Cheung, Pettersson, Schneider,
  Schulman, Tang, and Zaremba]{brockman16}
Greg Brockman, Vicki Cheung, Ludwig Pettersson, Jonas Schneider, John Schulman,
  Jie Tang, and Wojciech Zaremba.
\newblock Openai gym.
\newblock \emph{arXiv preprint arXiv:1606.01540}, 2016.

\bibitem[Shor(1999)]{shor99}
Peter~W Shor.
\newblock Polynomial-time algorithms for prime factorization and discrete
  logarithms on a quantum computer.
\newblock \emph{SIAM review}, 41\penalty0 (2):\penalty0 303--332, 1999.

\bibitem[Blum and Micali(1984)]{blum84}
Manuel Blum and Silvio Micali.
\newblock How to generate cryptographically strong sequences of pseudorandom
  bits.
\newblock \emph{SIAM journal on Computing}, 13\penalty0 (4):\penalty0 850--864,
  1984.

\bibitem[Skolik et~al.(2021)Skolik, Jerbi, and Dunjko]{skolik21b}
Andrea Skolik, Sofiene Jerbi, and Vedran Dunjko.
\newblock Quantum agents in the gym: a variational quantum algorithm for deep
  q-learning.
\newblock \emph{arXiv preprint arXiv:2103.15084}, 2021.

\bibitem[P{\'e}rez-Salinas et~al.(2020)P{\'e}rez-Salinas, Cervera-Lierta,
  Gil-Fuster, and Latorre]{perez20}
Adri{\'a}n P{\'e}rez-Salinas, Alba Cervera-Lierta, Elies Gil-Fuster, and
  Jos{\'e}~I Latorre.
\newblock Data re-uploading for a universal quantum classifier.
\newblock \emph{Quantum}, 4:\penalty0 226, 2020.

\bibitem[Dong et~al.(2008)Dong, Chen, Li, and Tarn]{dong08}
Daoyi Dong, Chunlin Chen, Hanxiong Li, and Tzyh-Jong Tarn.
\newblock Quantum reinforcement learning.
\newblock \emph{IEEE Transactions on Systems, Man, and Cybernetics, Part B
  (Cybernetics)}, 38\penalty0 (5):\penalty0 1207--1220, 2008.

\bibitem[Paparo et~al.(2014)Paparo, Dunjko, Makmal, Martin-Delgado, and
  Briegel]{paparo14}
Giuseppe~Davide Paparo, Vedran Dunjko, Adi Makmal, Miguel~Angel Martin-Delgado,
  and Hans~J Briegel.
\newblock Quantum speedup for active learning agents.
\newblock \emph{Physical Review X}, 4\penalty0 (3):\penalty0 031002, 2014.

\bibitem[Dunjko et~al.(2016)Dunjko, Taylor, and Briegel]{dunjko16}
Vedran Dunjko, Jacob~M Taylor, and Hans~J Briegel.
\newblock Quantum-enhanced machine learning.
\newblock \emph{Physical review letters}, 117\penalty0 (13):\penalty0 130501,
  2016.

\bibitem[Crawford et~al.(2018)Crawford, Levit, Ghadermarzy, Oberoi, and
  Ronagh]{crawford18}
Daniel Crawford, Anna Levit, Navid Ghadermarzy, Jaspreet~S Oberoi, and Pooya
  Ronagh.
\newblock Reinforcement learning using quantum boltzmann machines.
\newblock \emph{Quantum Information \& Computation}, 18\penalty0
  (1-2):\penalty0 51--74, 2018.

\bibitem[Neukart et~al.(2018)Neukart, Von~Dollen, Seidel, and
  Compostella]{neukart18}
Florian Neukart, David Von~Dollen, Christian Seidel, and Gabriele Compostella.
\newblock Quantum-enhanced reinforcement learning for finite-episode games with
  discrete state spaces.
\newblock \emph{Frontiers in Physics}, 5:\penalty0 71, 2018.

\bibitem[Grover(1996)]{grover96}
Lov~K Grover.
\newblock A fast quantum mechanical algorithm for database search.
\newblock In \emph{Proceedings of the twenty-eighth annual ACM symposium on
  Theory of computing}, pages 212--219, 1996.

\bibitem[Johnson et~al.(2011)Johnson, Amin, Gildert, Lanting, Hamze, Dickson,
  Harris, Berkley, Johansson, Bunyk, et~al.]{johnson11}
Mark~W Johnson, Mohammad~HS Amin, Suzanne Gildert, Trevor Lanting, Firas Hamze,
  Neil Dickson, Richard Harris, Andrew~J Berkley, Jan Johansson, Paul Bunyk,
  et~al.
\newblock Quantum annealing with manufactured spins.
\newblock \emph{Nature}, 473\penalty0 (7346):\penalty0 194--198, 2011.

\bibitem[Quantum(2021)]{tfq21}
TensorFlow Quantum.
\newblock Parametrized quantum circuits for reinforcement learning.
\newblock
  \textsc{url:}~\href{https://www.tensorflow.org/quantum/tutorials/quantum_reinforcement_learning}{\mbox{tensorflow.org/quantum/tutorials/quantum\_reinforcement\_learning}},
  2021.

\bibitem[Broughton et~al.(2020)Broughton, Verdon, McCourt, Martinez, Yoo,
  Isakov, Massey, Niu, Halavati, Peters, et~al.]{broughton20}
Michael Broughton, Guillaume Verdon, Trevor McCourt, Antonio~J Martinez,
  Jae~Hyeon Yoo, Sergei~V Isakov, Philip Massey, Murphy~Yuezhen Niu, Ramin
  Halavati, Evan Peters, et~al.
\newblock Tensorflow quantum: A software framework for quantum machine
  learning.
\newblock \emph{arXiv preprint arXiv:2003.02989}, 2020.

\bibitem[Nielsen and Chuang(2000)]{nielsen00}
Michael~A. Nielsen and Isaac~L. Chuang.
\newblock \emph{Quantum Computation and Quantum Information}.
\newblock Cambridge University Press, 2000.

\bibitem[De~Wolf(2019)]{dewolf19}
Ronald De~Wolf.
\newblock Quantum computing: Lecture notes.
\newblock \emph{arXiv preprint arXiv:1907.09415}, 2019.

\bibitem[Kandala et~al.(2017)Kandala, Mezzacapo, Temme, Takita, Brink, Chow,
  and Gambetta]{kandala17}
Abhinav Kandala, Antonio Mezzacapo, Kristan Temme, Maika Takita, Markus Brink,
  Jerry~M Chow, and Jay~M Gambetta.
\newblock Hardware-efficient variational quantum eigensolver for small
  molecules and quantum magnets.
\newblock \emph{Nature}, 549\penalty0 (7671):\penalty0 242--246, 2017.

\bibitem[Schuld et~al.(2021)Schuld, Sweke, and Meyer]{schuld20}
Maria Schuld, Ryan Sweke, and Johannes~Jakob Meyer.
\newblock Effect of data encoding on the expressive power of variational
  quantum-machine-learning models.
\newblock \emph{Physical Review A}, 103\penalty0 (3):\penalty0 032430, 2021.

\bibitem[Schuld et~al.(2019)Schuld, Bergholm, Gogolin, Izaac, and
  Killoran]{schuld19}
Maria Schuld, Ville Bergholm, Christian Gogolin, Josh Izaac, and Nathan
  Killoran.
\newblock Evaluating analytic gradients on quantum hardware.
\newblock \emph{Physical Review A}, 99\penalty0 (3):\penalty0 032331, 2019.

\bibitem[Weng(2018)]{weng18}
Lilian Weng.
\newblock Policy gradient algorithms.
\newblock
  \textsc{url:}~\href{https://lilianweng.github.io/lil-log/2018/04/08/policy-gradient-algorithms.html}{lilianweng.github.io/lil-log},
  2018.

\bibitem[Sutton et~al.(1998)Sutton, Barto, et~al.]{sutton98}
Richard~S Sutton, Andrew~G Barto, et~al.
\newblock \emph{Reinforcement learning: An introduction}.
\newblock 1998.

\bibitem[Williams(1992)]{williams92}
Ronald~J Williams.
\newblock Simple statistical gradient-following algorithms for connectionist
  reinforcement learning.
\newblock \emph{Machine learning}, 8\penalty0 (3-4):\penalty0 229--256, 1992.

\bibitem[Mitarai et~al.(2018)Mitarai, Negoro, Kitagawa, and Fujii]{mitarai18}
Kosuke Mitarai, Makoto Negoro, Masahiro Kitagawa, and Keisuke Fujii.
\newblock Quantum circuit learning.
\newblock \emph{Physical Review A}, 98\penalty0 (3):\penalty0 032309, 2018.

\bibitem[Greensmith et~al.(2004)Greensmith, Bartlett, and Baxter]{greensmith04}
Evan Greensmith, Peter~L Bartlett, and Jonathan Baxter.
\newblock Variance reduction techniques for gradient estimates in reinforcement
  learning.
\newblock \emph{Journal of Machine Learning Research}, 5\penalty0
  (Nov):\penalty0 1471--1530, 2004.

\bibitem[Duan et~al.(2016)Duan, Chen, Houthooft, Schulman, and Abbeel]{duan16}
Yan Duan, Xi~Chen, Rein Houthooft, John Schulman, and Pieter Abbeel.
\newblock Benchmarking deep reinforcement learning for continuous control.
\newblock In \emph{International conference on machine learning}, pages
  1329--1338. PMLR, 2016.

\bibitem[OpenAI(2020)]{gym20}
OpenAI.
\newblock Leaderboard of openai gym environments.
\newblock
  \textsc{url:}~\href{https://github.com/openai/gym/wiki/Leaderboard}{github.com/openai/gym/wiki},
  2020.

\bibitem[Dunjko et~al.(2017)Dunjko, Liu, Wu, and Taylor]{dunjko17b}
Vedran Dunjko, Yi-Kai Liu, Xingyao Wu, and Jacob~M Taylor.
\newblock Exponential improvements for quantum-accessible reinforcement
  learning.
\newblock \emph{arXiv preprint arXiv:1710.11160}, 2017.

\bibitem[Arute et~al.(2019)Arute, Arya, Babbush, Bacon, Bardin, Barends,
  Biswas, Boixo, Brandao, Buell, et~al.]{arute19}
Frank Arute, Kunal Arya, Ryan Babbush, Dave Bacon, Joseph~C Bardin, Rami
  Barends, Rupak Biswas, Sergio Boixo, Fernando~GSL Brandao, David~A Buell,
  et~al.
\newblock Quantum supremacy using a programmable superconducting processor.
\newblock \emph{Nature}, 574\penalty0 (7779):\penalty0 505--510, 2019.

\bibitem[Villalonga et~al.(2020)Villalonga, Lyakh, Boixo, Neven, Humble,
  Biswas, Rieffel, Ho, and Mandr{\`a}]{villalonga20}
Benjamin Villalonga, Dmitry Lyakh, Sergio Boixo, Hartmut Neven, Travis~S
  Humble, Rupak Biswas, Eleanor~G Rieffel, Alan Ho, and Salvatore Mandr{\`a}.
\newblock Establishing the quantum supremacy frontier with a 281 pflop/s
  simulation.
\newblock \emph{Quantum Science and Technology}, 5\penalty0 (3):\penalty0
  034003, 2020.

\bibitem[Kober et~al.(2013)Kober, Bagnell, and Peters]{kober13}
Jens Kober, J~Andrew Bagnell, and Jan Peters.
\newblock Reinforcement learning in robotics: A survey.
\newblock \emph{The International Journal of Robotics Research}, 32\penalty0
  (11):\penalty0 1238--1274, 2013.

\bibitem[Mahmud et~al.(2018)Mahmud, Kaiser, Hussain, and Vassanelli]{mahmud18}
Mufti Mahmud, Mohammed~Shamim Kaiser, Amir Hussain, and Stefano Vassanelli.
\newblock Applications of deep learning and reinforcement learning to
  biological data.
\newblock \emph{IEEE transactions on neural networks and learning systems},
  29\penalty0 (6):\penalty0 2063--2079, 2018.

\bibitem[Yu et~al.(2019)Yu, Liu, and Nemati]{yu19}
Chao Yu, Jiming Liu, and Shamim Nemati.
\newblock Reinforcement learning in healthcare: A survey.
\newblock \emph{arXiv preprint arXiv:1908.08796}, 2019.

\bibitem[Albarr{\'a}n-Arriagada et~al.(2018)Albarr{\'a}n-Arriagada, Retamal,
  Solano, and Lamata]{albarran18}
Francisco Albarr{\'a}n-Arriagada, Juan~C Retamal, Enrique Solano, and Lucas
  Lamata.
\newblock Measurement-based adaptation protocol with quantum reinforcement
  learning.
\newblock \emph{Physical Review A}, 98\penalty0 (4):\penalty0 042315, 2018.

\bibitem[Peruzzo et~al.(2014)Peruzzo, McClean, Shadbolt, Yung, Zhou, Love,
  Aspuru-Guzik, and O’brien]{peruzzo14}
Alberto Peruzzo, Jarrod McClean, Peter Shadbolt, Man-Hong Yung, Xiao-Qi Zhou,
  Peter~J Love, Al{\'a}n Aspuru-Guzik, and Jeremy~L O’brien.
\newblock A variational eigenvalue solver on a photonic quantum processor.
\newblock \emph{Nature communications}, 5\penalty0 (1):\penalty0 1--7, 2014.

\bibitem[Linardatos et~al.(2021)Linardatos, Papastefanopoulos, and
  Kotsiantis]{linardatos21}
Pantelis Linardatos, Vasilis Papastefanopoulos, and Sotiris Kotsiantis.
\newblock Explainable ai: A review of machine learning interpretability
  methods.
\newblock \emph{Entropy}, 23\penalty0 (1):\penalty0 18, 2021.

\bibitem[Goto et~al.(2021)Goto, Tran, and Nakajima]{goto21}
Takahiro Goto, Quoc~Hoan Tran, and Kohei Nakajima.
\newblock Universal approximation property of quantum machine learning models
  in quantum-enhanced feature spaces.
\newblock \emph{Physical Review Letters}, 127\penalty0 (9):\penalty0 090506,
  2021.

\bibitem[P{\'e}rez-Salinas et~al.(2021)P{\'e}rez-Salinas,
  L{\'o}pez-N{\'u}{\~n}ez, Garc{\'\i}a-S{\'a}ez, Forn-D{\'\i}az, and
  Latorre]{perez21}
Adri{\'a}n P{\'e}rez-Salinas, David L{\'o}pez-N{\'u}{\~n}ez, Artur
  Garc{\'\i}a-S{\'a}ez, Pol Forn-D{\'\i}az, and Jos{\'e}~I Latorre.
\newblock One qubit as a universal approximant.
\newblock \emph{Physical Review A}, 104\penalty0 (1):\penalty0 012405, 2021.

\bibitem[Google(2018)]{cirq18}
Google.
\newblock Cirq: A python framework for creating, editing, and invoking noisy
  intermediate scale quantum circuits.
\newblock
  \textsc{url:}~\href{https://github.com/quantumlib/Cirq}{github.com/quantumlib/Cirq},
  2018.

\bibitem[Suzuki et~al.(2020)Suzuki, Kawase, Masumura, Hiraga, Nakadai, Chen,
  Nakanishi, Mitarai, Imai, Tamiya, et~al.]{suzuki20}
Yasunari Suzuki, Yoshiaki Kawase, Yuya Masumura, Yuria Hiraga, Masahiro
  Nakadai, Jiabao Chen, Ken~M Nakanishi, Kosuke Mitarai, Ryosuke Imai, Shiro
  Tamiya, et~al.
\newblock Qulacs: a fast and versatile quantum circuit simulator for research
  purpose.
\newblock \emph{arXiv preprint arXiv:2011.13524}, 2020.

\end{thebibliography}

\newpage

\newpage

\vbox{%
    \hsize\textwidth
    \linewidth\hsize
    \hrule height 4pt
    \vskip 0.25in
    \vskip -\parskip%
    \centering
    {\LARGE\bf Supplementary Material for:\\ {\fontsize{15}{15} \selectfont Parametrized Quantum Policies for Reinforcement Learning}\par}
    \vskip 0.29in
    \vskip -\parskip
    \hrule height 1pt
    \vskip 0.3in%
  }
  
\vbox{
    \centering
    {\bf{\fontsize{11}{11} \selectfont Sofiene Jerbi, Casper Gyurik, Simon C.~Marshall, Hans J.~Briegel, Vedran Dunjko}\par}
    \vskip 0.1in
}

\appendix

\paragraph{Outline}
The Supplementary Material is organized as follows.
In Appendix \ref{sec:gradient-log-prob}, we derive the expression of the log-policy gradient for \textsc{softmax-PQC}s presented in Lemma \ref{lem:gradient-exp}. In Appendix~\ref{sec:proofs-softmax-pqc}, we prove Lemmas \ref{lemma:tv-policy} and \ref{lemma:log-pol-grad} on the efficient policy sampling and the efficient estimation of the log-policy gradient for \textsc{softmax-PQC} policies. In Appendix \ref{sec:trainable-observables}, we clarify the role of the trainable observables in our definition of \textsc{softmax-PQC} policies.  In Appendix \ref{sec:env-spec-hyper}, we give a specification of the environments considered in our numerical simulations, as well the hyperparameters we used to train all RL agents. In Appendix \ref{sec:PQC-DNN}, we present additional plots and numerical simulations that help our understanding and visualization of PQC polices. In Appendix \ref{sec:DLP-task}, we give a succinct description of the DLP classification task of Liu \emph{et al.} In Appendices \ref{sec:proof-thm-separations} to \ref{sec:proof-deterministic-DLP}, we prove our main Theorem \ref{thm:separations-DLP} on learning separations in DLP environments.  In appendix \ref{sec:PQC-agent-DLP}, we construct PQC agents with provable guarantees of solving the DLP environments, stated and proven in Theorem \ref{thm:perf-PQC}. 

\section{Derivation of the log-policy gradient\label{sec:gradient-log-prob}}

For a \textsc{softmax-PQC} defined in Def.\ \ref{def:raw-softmax-PQC}, we have:
\begin{align*}
	\nabla_{\params} \log\policy(a|s) &= \nabla_{\params} \log e^{\beta\expval{O_a}_{s,\params}} - \nabla_{\params} \log \sum_{a'} e^{\beta\expval{O_{a'}}_{s,\params}}\\
	&= \beta\nabla_{\params} \expval{O_a}_{s,\params} - \sum_{a'} \frac{e^{\beta\expval{O_{a'}}_{s,\params}}  \beta\nabla_{\params} \expval{O_{a'}}_{s,\params}}{\sum_{a''} e^{\beta\expval{O_{a''}}_{s,\params}}} \\
	&= \beta \left(\nabla_{\params}\expval{O_a}_{s,\params} - \sum_{a'} \policy(a'|s) \nabla_{\params}\expval{O_{a'}}_{s,\params}\right).
\end{align*}

\vspace{0.5em}

\section{Efficient implementation of \textsc{softmax-PQC} policies\label{sec:proofs-softmax-pqc}}

\subsection{Efficient approximate policy sampling}

In this section we prove Lemma \ref{lemma:tv-policy}, restated below:
\tvpolicy*
\begin{proof}
Consider $\abs{A}$ estimates $\left\{\langle\widetilde{O_a}\rangle_{s,\params}\right\}_{1\leq a \leq \abs{A}}$, obtained all to additive error $\varepsilon$, i.e., 
\begin{equation*}
\abs{\langle\widetilde{O_a}\rangle_{s,\params} - \expval{O_a}_{s,\params}} \leq \varepsilon, \quad \forall a
\end{equation*}
and used to compute an approximate policy 
\begin{equation*}
\widetilde{\pi}_{\bm{\theta}}(a|s) = \frac{e^{\beta \langle\widetilde{O_a}\rangle_{s,\params}}}{\sum_{a'} e^{\beta \langle\widetilde{O_{a'}}\rangle_{s,\params}}}.
\end{equation*}

Due to the monoticity of the exponential, we have, for all $a$:
\newlength{\tmp}
\settowidth{\tmp}{$\frac{e^{\beta \langle\widetilde{O_a}\rangle_{s,\params}}}{\sum_{a'} e^{\beta \langle\widetilde{O_{a'}}\rangle_{s,\params}}}$}
\begin{alignat}{2}\label{eq:approx-policy-ineq}
	\frac{e^{-\beta\varepsilon} e^{\beta \expval{O_a}_{s,\params}}}{e^{\beta\varepsilon}\sum_{a'}e^{\beta \expval{O_{a'}}_{s,\params}}} &\leq &\frac{e^{\beta \langle\widetilde{O_a}\rangle_{s,\params}}}{\sum_{a'} e^{\beta \langle\widetilde{O_{a'}}\rangle_{s,\params}}} &\leq \frac{e^{\beta\varepsilon} e^{\beta \expval{O_a}_{s,\params}}}{e^{-\beta\varepsilon}\sum_{a'}e^{\beta \expval{O_{a'}}_{s,\params}}} \nonumber\\
	\Leftrightarrow \quad\quad e^{-2\beta\varepsilon}\policy(a|s) &\leq &\makebox[\tmp]{$\widetilde{\pi}_{\bm{\theta}}(a|s)$} &\leq e^{2\beta\varepsilon}\policy(a|s).
\end{alignat}
Hence,
\allowdisplaybreaks
\begin{align*}
	\text{TV}(\policy,\widetilde{\pi}_{\bm{\theta}}) &= \sum_a \abs{\widetilde{\pi}_{\bm{\theta}}(a|s)-\policy(a|s)}\\
	&\leq \sum_a \abs{e^{2\beta\varepsilon}\policy(a|s)-e^{-2\beta\varepsilon}\policy(a|s)}\\
	&= \sum_a \abs{e^{2\beta\varepsilon}-e^{-2\beta\varepsilon}}\policy(a|s)\\
	&= 2\abs{\sinh(2\beta\varepsilon)} \mathrel{\underset{\beta\varepsilon \to 0^+}{=}} 4 \beta\varepsilon + \mathcal{O}\left((\beta\varepsilon)^3\right),
\end{align*}
where we used $\left\{\widetilde{\pi}_{\bm{\theta}}(a|s),\policy(a|s)\right\} \in [e^{-2\beta\varepsilon}\policy(a|s),e^{2\beta\varepsilon}\policy(a|s)]$ in the first inequality.
\end{proof}

\subsection{Efficient estimation of the log-policy gradient}
Using a similar approach to the proof of the previous section, we show the following lemma:
\begin{lemma}\label{lemma:log-pol-grad}
	For a \textsc{softmax-PQC} policy $\policy$ defined by a unitary $U(s,\params)$ and observables $O_a$, call $\partial_i\langle\widetilde{O_{a}}\rangle_{s,\params}$ approximations of the true derivatives $\partial_i\langle O_a\rangle_{s,\params}$ with at most $\varepsilon$ additive error, and $\langle\widetilde{O_a}\rangle_{s,\params}$ approximations of the true expectation values $\expval{O_a}_{s,\params}$ with at most $\varepsilon' = \varepsilon(4\beta\max_a\norm{O_a})^{-1}$ additive error. Then the approximate log-policy gradient $\nabla_{\params} \log{\widetilde{\policy}(a|s)} = \beta \big( \nabla_{\params}\langle\widetilde{O_{a}}\rangle_{s,\params} - \sum_{a'} \widetilde{\policy}(a'|s) \nabla_{\params}\langle\widetilde{O_{a'}}\rangle_{s,\params} \big)$ has distance $\mathcal{O}(\beta\varepsilon)$ to $\nabla_{\params} \log{\policy(a|s)}$ in $\ell_\infty$-norm.
\end{lemma}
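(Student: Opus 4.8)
The plan is to bound the $\ell_\infty$-distance componentwise. Fix a parameter index $i$ and an action $a$, and apply the closed form of Lemma~\ref{lem:gradient-exp} to both $\partial_i\log\policy(a|s)$ and $\partial_i\log\widetilde{\policy}(a|s)$. Subtracting, $\beta$ factors out and the difference equals $\beta$ times the sum of two contributions: a ``diagonal'' term $\partial_i\expval{O_a}_{s,\params} - \partial_i\langle\widetilde{O_a}\rangle_{s,\params}$, which is at most $\varepsilon$ in absolute value by hypothesis, and a ``baseline'' term $\sum_{a'}\big(\policy(a'|s)\,\partial_i\expval{O_{a'}}_{s,\params} - \widetilde{\policy}(a'|s)\,\partial_i\langle\widetilde{O_{a'}}\rangle_{s,\params}\big)$, which is the part requiring work.

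For the baseline term I would add and subtract $\widetilde{\policy}(a'|s)\,\partial_i\expval{O_{a'}}_{s,\params}$ inside the sum and apply the triangle inequality, splitting it into (i) $\sum_{a'}\widetilde{\policy}(a'|s)\,\abs{\partial_i\expval{O_{a'}}_{s,\params} - \partial_i\langle\widetilde{O_{a'}}\rangle_{s,\params}}$ and (ii) $\sum_{a'}\abs{\policy(a'|s)-\widetilde{\policy}(a'|s)}\,\abs{\partial_i\expval{O_{a'}}_{s,\params}}$. Since $\widetilde{\policy}(\cdot|s)$ is a probability distribution and each derivative error is at most $\varepsilon$, part (i) is at most $\varepsilon$. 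For part (ii), I would use that the expectation value of $O_{a'}$ on any normalized state has magnitude at most $\norm{O_{a'}}$, so the parameter-shift identity~(\ref{eq:psr}) — which writes $\partial_i\expval{O_{a'}}_{s,\params}$ as half the difference of two such expectation values — gives $\abs{\partial_i\expval{O_{a'}}_{s,\params}}\le \max_a\norm{O_a}$ (the weight-derivative terms $\partial_{w_{a,i}}\expval{O_a}_{s,\params}=\ev{H_{a,i}}{\psi_{s,\phis,\lambdas}}$ being bounded analogously by the norms of the $H_{a,i}$, which I fold into the same constant). Hence part (ii) is at most $\max_a\norm{O_a}\cdot\mathrm{TV}(\policy,\widetilde{\policy})$.

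The final ingredient is Lemma~\ref{lemma:tv-policy} applied to the estimates $\langle\widetilde{O_a}\rangle_{s,\params}$, whose additive error is $\varepsilon'=\varepsilon(4\beta\max_a\norm{O_a})^{-1}$: this yields $\mathrm{TV}(\policy,\widetilde{\policy})=\mathcal{O}(\beta\varepsilon')$, and with the stated choice of $\varepsilon'$ part (ii) becomes $\mathcal{O}(\varepsilon)$ — the $1/(4\beta\max_a\norm{O_a})$ factor in $\varepsilon'$ is precisely calibrated so that the $\beta$ and $\max_a\norm{O_a}$ dependence cancels. Collecting the three bounds gives $\abs{\partial_i\log\policy(a|s)-\partial_i\log\widetilde{\policy}(a|s)}\le\beta\big(\varepsilon+\varepsilon+\mathcal{O}(\varepsilon)\big)=\mathcal{O}(\beta\varepsilon)$, uniformly over $i$ and $a$, which is the claimed $\ell_\infty$ bound. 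For the efficiency remark one notes that each $\expval{O_a}_{s,\params}$ and, via~(\ref{eq:psr}), each $\partial_i\expval{O_a}_{s,\params}$ is an average of bounded measurement outcomes, hence estimable to additive error $\eta$ with $\mathcal{O}(\eta^{-2})$ circuit executions by Hoeffding's inequality, so the whole log-policy gradient is obtained with $\mathrm{poly}$-many quantum-circuit runs.

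I would expect the only genuinely delicate point to be threading the two error scales: one must bound $\abs{\partial_i\expval{O_{a'}}_{s,\params}}$ uniformly (which is why the parameter-shift structure is invoked rather than treated as a black box) and then choose $\varepsilon'$ small enough that the perturbation of the softmax baseline — which through Lemma~\ref{lemma:tv-policy} scales like $\beta\varepsilon'$ — contributes only $\mathcal{O}(\varepsilon)$ rather than a $\beta$-amplified error. Everything else is triangle-inequality bookkeeping.
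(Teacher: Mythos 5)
Your proposal is correct and follows essentially the same route as the paper's proof: both split the error into the direct derivative-estimation error plus the perturbation of the softmax-weighted baseline, bound $\abs{\partial_i\expval{O_{a'}}_{s,\params}}\le\max_a\norm{O_a}$ via the parameter-shift rule (and $\norm{H_{a,i}}\le\norm{O_a}$ for weight derivatives), and use the choice $\varepsilon'=\varepsilon(4\beta\max_a\norm{O_a})^{-1}$ to cancel the $\beta\max_a\norm{O_a}$ amplification, yielding a bound of about $3\beta\varepsilon$ up to higher-order terms. The only cosmetic difference is that you invoke Lemma~\ref{lemma:tv-policy}'s total-variation conclusion as a black box after an add-and-subtract triangle inequality, whereas the paper reuses the intermediate multiplicative sandwich $e^{-2\beta\varepsilon'}\policy(a|s)\le\widetilde{\pi}_{\bm{\theta}}(a|s)\le e^{2\beta\varepsilon'}\policy(a|s)$ and expands $\sinh$ and $\cosh$ — the ingredients and calibration are identical.
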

\begin{proof}
	Call $x_{a,i} = \policy(a|s)\partial_i\langle O_a\rangle_{s,\params}$ and $\widetilde{x}_{a,i} = \widetilde{\pi}_{\bm{\theta}}(a|s) \partial_i\langle\widetilde{O_{a}}\rangle_{s,\params}$, such that:
\begin{equation*}
	\partial_i \log{\widetilde{\policy}(a|s)} =  \beta \Big( \partial_i \langle\widetilde{O_{a}}\rangle_{s,\params} - \sum\nolimits_{a'} \widetilde{x}_{a',i} \Big).
\end{equation*}
and similarly for $\partial_i \log{\policy(a|s)}$.\\
\settowidth{\tmp}{$\sum_a(a|s) \partial_i\langle\widetilde{O_{a}}\rangle_{s,\params}$}
	Using Eq.\ (\ref{eq:approx-policy-ineq}) and that $|\partial_i\langle O_a \rangle_{s,\params} - \partial_i\langle\widetilde{O_a}\rangle_{s,\params}| \leq \varepsilon, \forall a,i$, we have:
\begin{alignat*}{2}
	e^{-2\beta\varepsilon'}\policy(a|s)\left(\partial_i\langle O_a\rangle_{s,\params} - \varepsilon\right) &\leq &\widetilde{\pi}_{\bm{\theta}}(a|s) \partial_i\langle\widetilde{O_{a}}\rangle_{s,\params} &\leq e^{2\beta\varepsilon'}\policy(a|s)\left(\partial_i\langle O_a\rangle_{s,\params} + \varepsilon\right)\\
	\Rightarrow \quad\quad\quad e^{-2\beta\varepsilon'}\left(\sum_a x_{a,i} - \varepsilon\right) &\leq &\makebox[\tmp]{$\mathlarger{\sum\limits}_a\ \widetilde{x}_{a,i}$} &\leq e^{2\beta\varepsilon'}\left(\sum_a x_{a,i} + \varepsilon\right)
\end{alignat*}
where we summed the first inequalities over all $a$. Hence:
\begin{align}\label{eq:diff-xai}
	\abs{\sum_a x_{a,i} - \sum_a \widetilde{x}_{a,i}} &\leq \abs{e^{2\beta\varepsilon'}\left(\sum_a x_{a,i} + \varepsilon\right) - e^{-2\beta\varepsilon'}\left(\sum_a x_{a,i} - \varepsilon\right)} \nonumber\\
	&\leq \abs{(e^{2\beta\varepsilon'}+e^{-2\beta\varepsilon'})\varepsilon + (e^{2\beta\varepsilon'}-e^{-2\beta\varepsilon'})\sum_a x_{a,i}} \nonumber\\
	&\leq \abs{2\cosh(2\beta\varepsilon')\varepsilon + 2\sinh(2\beta\varepsilon')\sum_a x_{a,i}} \nonumber\\
	&\mathrel{\underset{\beta\varepsilon' \to 0^+}{=}} \abs{\varepsilon + 4\beta\varepsilon'\sum_a x_{a,i} + \mathcal{O}\left((\beta\varepsilon')^2\varepsilon\right) + \mathcal{O}\left((\beta\varepsilon')^3\right)}.
\end{align}
We also have 
\begin{equation*}
\abs{\sum_a x_{a,i}} = \abs{\sum_a \policy(a|s) \partial_i\langle O_a\rangle_{s,\params}} \leq \max_{a,i} \abs{\partial_i\langle O_a\rangle_{s,\params}}\leq \max_a \norm{O_a}
\end{equation*}
where the last inequality derives from the parameter-shift rule (Eq.\ (\ref{eq:psr})) formulation of $\partial_i\expval{O_{a}}$ for derivatives w.r.t.\ rotation angles of the PQC and the fact that $\partial_i\expval{O_{a}}$ are simply expectation values $\expval{H_{a,i}}$ with $\norm{H_{a,i}} \leq \norm{O_a}$ for observable weights.\\
Applying the triangular inequality on the right side of Eq.\ (\ref{eq:diff-xai}), we hence have:
\begin{align*}
	\abs{\sum_a x_{a,i} - \sum_a \widetilde{x}_{a,i}} \mathrel{\underset{\beta\varepsilon' \to 0^+}{\leq}} \varepsilon + 4\beta\varepsilon'\max_a\norm{O_a} + \mathcal{O}\left((\beta\varepsilon')^2\varepsilon\right) + \mathcal{O}\left((\beta\varepsilon')^3\right).
\end{align*}
For $\varepsilon' = \varepsilon(4\beta\max_a\norm{O_a})^{-1}$ and using $|\partial_i\langle O_a \rangle_{s,\params} - \partial_i\langle\widetilde{O_a}\rangle_{s,\params}| \leq \varepsilon, \forall a,i$, we finally have:
\begin{equation*} 
	\abs{\partial_i \log{\policy(a|s)} - \partial_i \log{\widetilde{\policy}(a|s)}} \mathrel{\underset{\beta\varepsilon \to 0^+}{\leq}} 3\beta\varepsilon + \mathcal{O}(\beta\varepsilon^3) \quad \forall i \qedhere
\end{equation*}
\end{proof}

\section{The role of trainable observables in \textsc{softmax-PQC} policies\label{sec:trainable-observables}}

In Sec.~\ref{sec:model-def}, we presented a general definition of the \textsc{softmax-PQC} observables $O_a = \sum_i w_{a,i} H_{a,i}$ in terms of an arbitrary weighted sum of Hermitian matrices $H_{a,i}$. In this appendix, we clarify the role of such a decomposition.

\subsection{Training the eigenbasis and the eigenvalues of an observable}

Consider a projective measurement defined by an observable $O = \sum_m \alpha_m P_m$, to be performed on a quantum state of the form $V(\params) \ket{\psi}$, where $V(\params)$ denotes a (variational) unitary. Equivalently, one could also measure the observable $V^{\dagger}(\params)OV(\params)$ on the state $\ket{\psi}$. Indeed, these two measurements have the same probabilities $p(m) = \bra{\psi} V^{\dagger}(\params) P_m V(\params) \ket{\psi}$ of measuring any outcome $\alpha_m$. Note also that the possible outcomes $\alpha_m$ (i.e., the eigenvalues of the observable $O$) remain unchanged.

From this observation, it is then clear that, by defining an observable $O = \sum_m \alpha_m P_m$ using projections $P_m$ on each computational basis state of the Hilbert space $\mathcal{H}$ and arbitrary eigenvalues $\alpha_m\in\mathbb{R}$,\break the addition of a \emph{universal} variational unitary $V(\params)$ prior to the measurement results in a family of observables $\{V^{\dagger}(\params) O V(\params)\}_{\bm{\params},\bm{\alpha}}$ that covers all possible Hermitian observables in $\mathcal{H}$. Moreover, in this setting, the parameters that define the eigenbasis of the observables $V^{\dagger}(\params) O V(\params)$ (i.e., $\params$) are completely distinct from the parameters that define their eigenvalues (i.e., $\bm{\alpha}$). This is not the case for observables that are expressed as linear combinations of non-commuting matrices, for instance.

In our simulations, we consider restricted families of observables. In particular, we take the Hermitian matrices $H_{a,i}$ to be diagonal in the computational basis (e.g., tensor products of Pauli-$Z$ matrices), which means they, as well as $O_a$, can be decomposed in terms of projections on the computational basis states. However, the resulting eigenvalues $\bm{\alpha}$ that we obtain from this decomposition are in our case degenerate, which means that the weights $\bm{w}_{a}$ underparametrize the spectrums of the observables $O_a$.\break Additionally, the last variational unitaries $V_\text{var}(\phis_{L})$ of our PQCs are far from universal, which restricts the accessible eigenbasis of all variational observables $V^\dagger_\text{var}(\phis_{L})O_aV_\text{var}(\phis_{L})$.

\subsection{The power of universal observables}

Equivalently to the universal family of observables $\{V^{\dagger}(\params) O V(\params)\}_{\bm{\params},\bm{\alpha}}$ that we defined in the previous section, one can construct a family of observables $\{O_{\bm{w}} = \sum_i w_{i} H_{i}\}_{\bm{w}}$ that parametrizes all Hermitian matrices in $\mathcal{H}$ (e.g., by taking $H_i$ to be single components of a Hermitian matrix acting on $\mathcal{H}$).\break Note that this family is covered by our definition of \textsc{softmax-PQC} observables. Now, given access to data-dependent quantum states $\ket{\psi_{s}}$ that are expressive enough (e.g., a binary encoding of the input $s$, or so-called universal quantum feature states \cite{goto21}), one can approximate arbitrary functions of $s$ using expectations values of the form $\bra{\psi_{s}}O_{\bm{w}}\ket{\psi_{s}}$. This is because the observables $O_{\bm{w}}$ can encode an arbitrary quantum computation. Hence, in the case of our \textsc{softmax-PQC}s, one could use such observables and such encodings $\ket{\psi_{s}}$ of the input states $s$ to approximate any policy $\pi(a|s)$ (using an additional softmax), without the need for any variational gates in the PQC generating $\ket{\psi_{s}}$.

As we mentioned in the previous section, the observables that we consider in this work are more restricted, and moreover, the way we encode the input states $s$ leads to non-trivial encodings $\ket{\psi_{s,\phis,\lambdas}}$ in general. This implies that the variational parameters $\phis,\lambdas$ of our PQCs have in general a non-trivial role in learning good policies. One can even show here that these degrees of freedom are sufficient to make such PQCs universal function approximators \cite{perez21}.

\section{Environments specifications and hyperpameters\label{sec:env-spec-hyper}}

In Table \ref{table:env-desc}, we present a specification of the environments we consider in our numerical simulations. These are standard benchmarking environments from the OpenAI Gym library \cite{brockman16}, described in Ref.\ \cite{gym20}, PQC-generated environments that we define in Sec.\ \ref{sec:PQC-env}, and the CognitiveRadio environment of Ref.~\cite{chen20} that we discuss in Appendix \ref{sec:PQC-DNN}.

\begin{table*}[h]
	\centering
	\includegraphics[width=\linewidth]{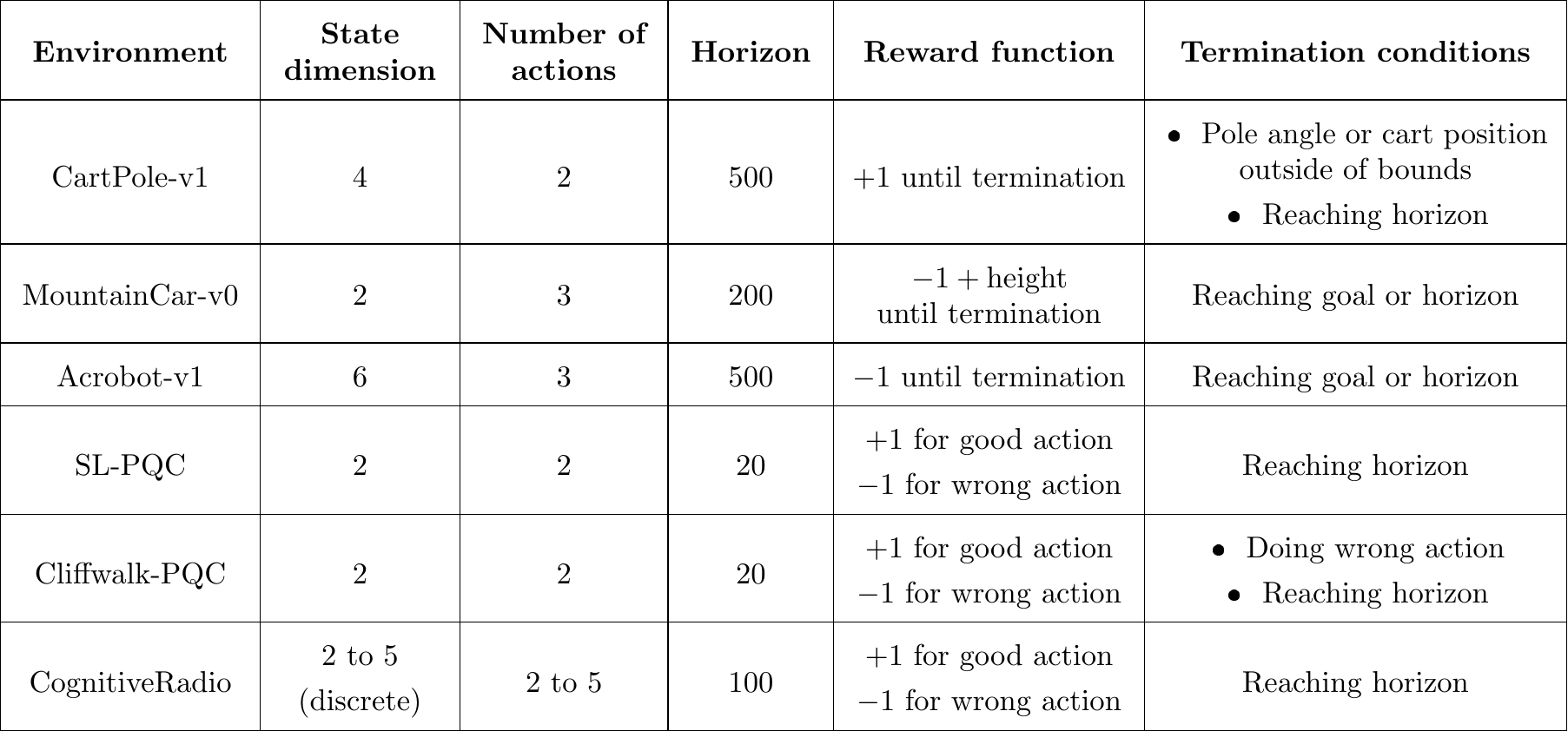}
	\vspace{-1em}
	\caption{\textbf{Environments specifications. }The reward function of Mountaincar-v0 has been modified compared to the standard specification of OpenAI Gym \cite{brockman16}, similarly to Ref.\ \cite{duan16}.}
	\label{table:env-desc}
\end{table*}

In Tables \ref{table:hyperparam-1} and \ref{table:hyperparam-2}, we list the hyperparameters used to train our agents on the various environments we consider. All agents use an ADAM optimizer. For the plots presented in this manuscript, all quantum circuits were implemented using the Cirq library \cite{cirq18} in Python and simulated using a Qulacs backend \cite{suzuki20} in C\texttt{++}. For the tutorial \cite{tfq21}, the TensorFlow Quantum library \cite{broughton20} was used.\\
All simulations were run on the LEO cluster (more than 3000 CPUs) of the University of Innsbruck, with an estimated total compute time (including hyperparametrization) of 20 000 CPU-hours.

\section{Deferred plots and shape of policies learned by PQCs v.s.\ DNNs\label{sec:PQC-DNN}}

\subsection{Influence of architectural choices on \textsc{raw-PQC} agents}

In Fig.\ \ref{fig:architecture-compare2}, we run a similar experiment to that of Sec.\ \ref{sec:architecture-sim} in the main text, but on \textsc{raw-PQC} agents instead of \textsc{softmax-PQC} agents. We observe that both increasing the depth of the PQCs and training the scaling parameters $\lambdas$ have a similar positive influence on the learning performance, and even more pronounced than for \textsc{softmax-PQC} agents. Nonetheless, we also observe that, even at greater depth, the final performance, as well as the speed of convergence, of \textsc{raw-PQC} agents remain limited compared to that of \textsc{softmax-PQC} agents.

\begin{figure}[h]
	\subfloat{\label{fig:cartpole-raw}\includegraphics[width=0.33\linewidth, valign=c]{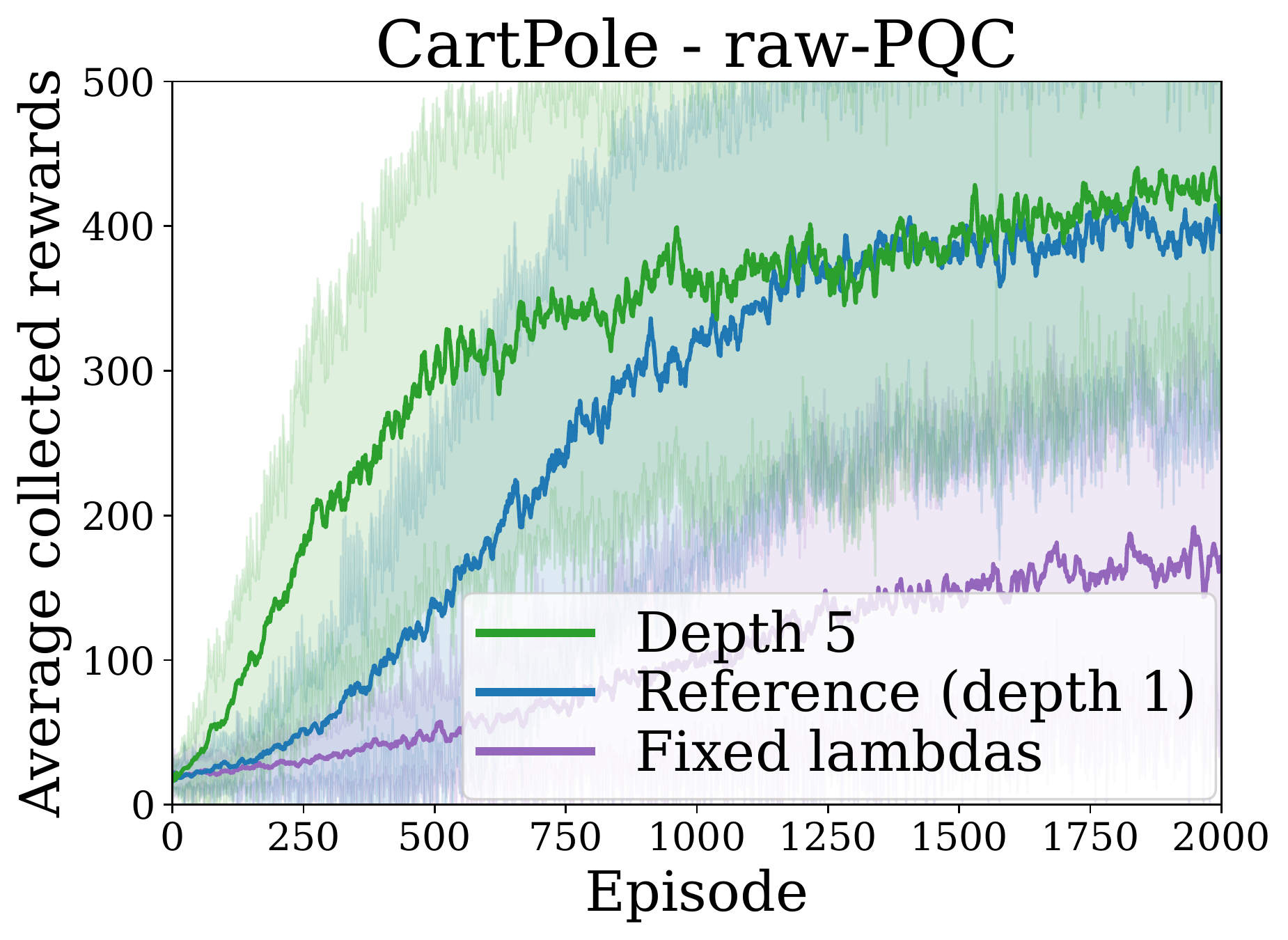}}\hspace{0em}%
	\subfloat{\label{fig:mountaincar-raw}\includegraphics[width=0.33\linewidth, valign=c]{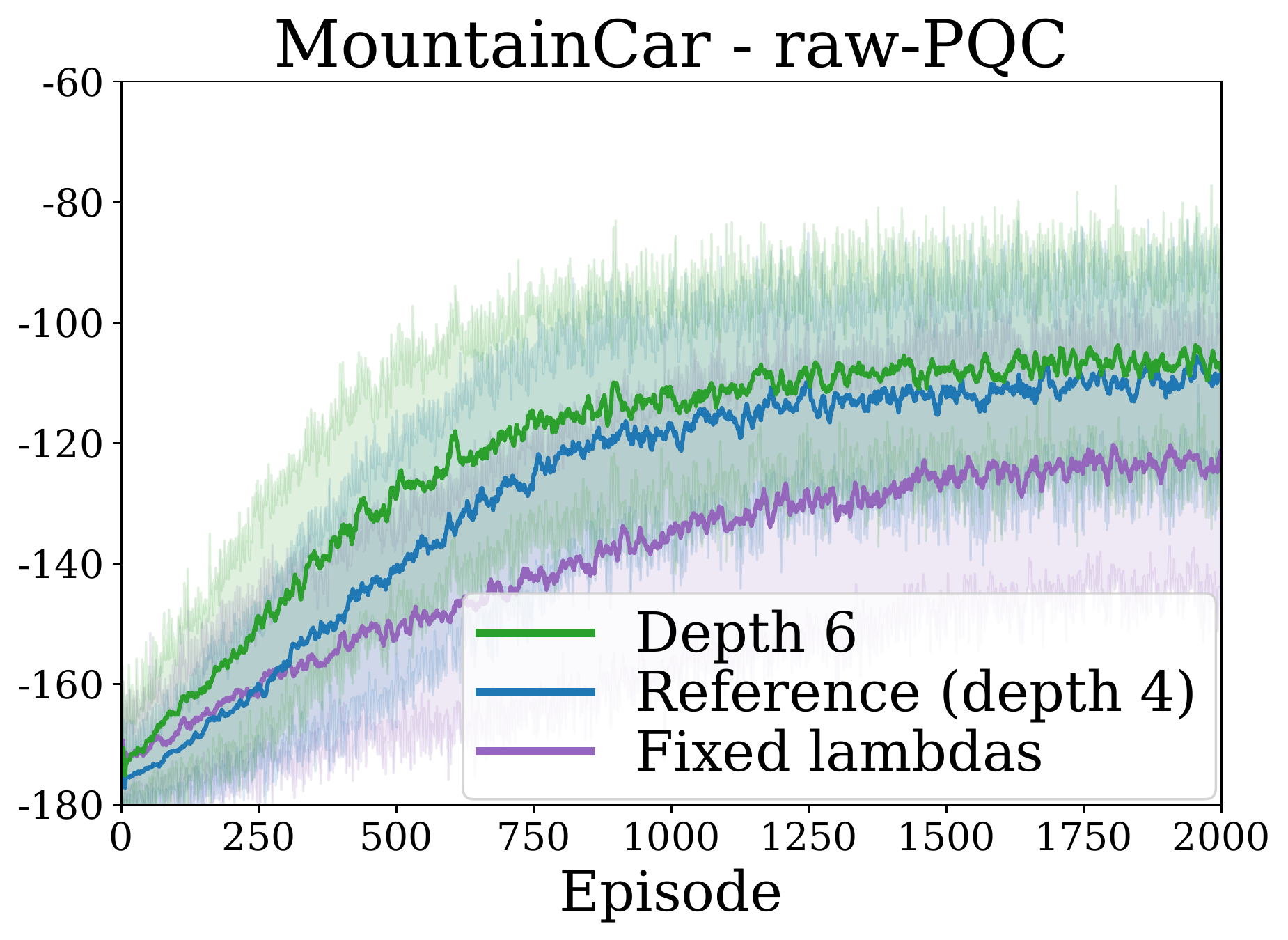}}\hspace{0em}%
	\subfloat{\label{fig:acrobot-raw}\includegraphics[width=0.33\linewidth, valign=c]{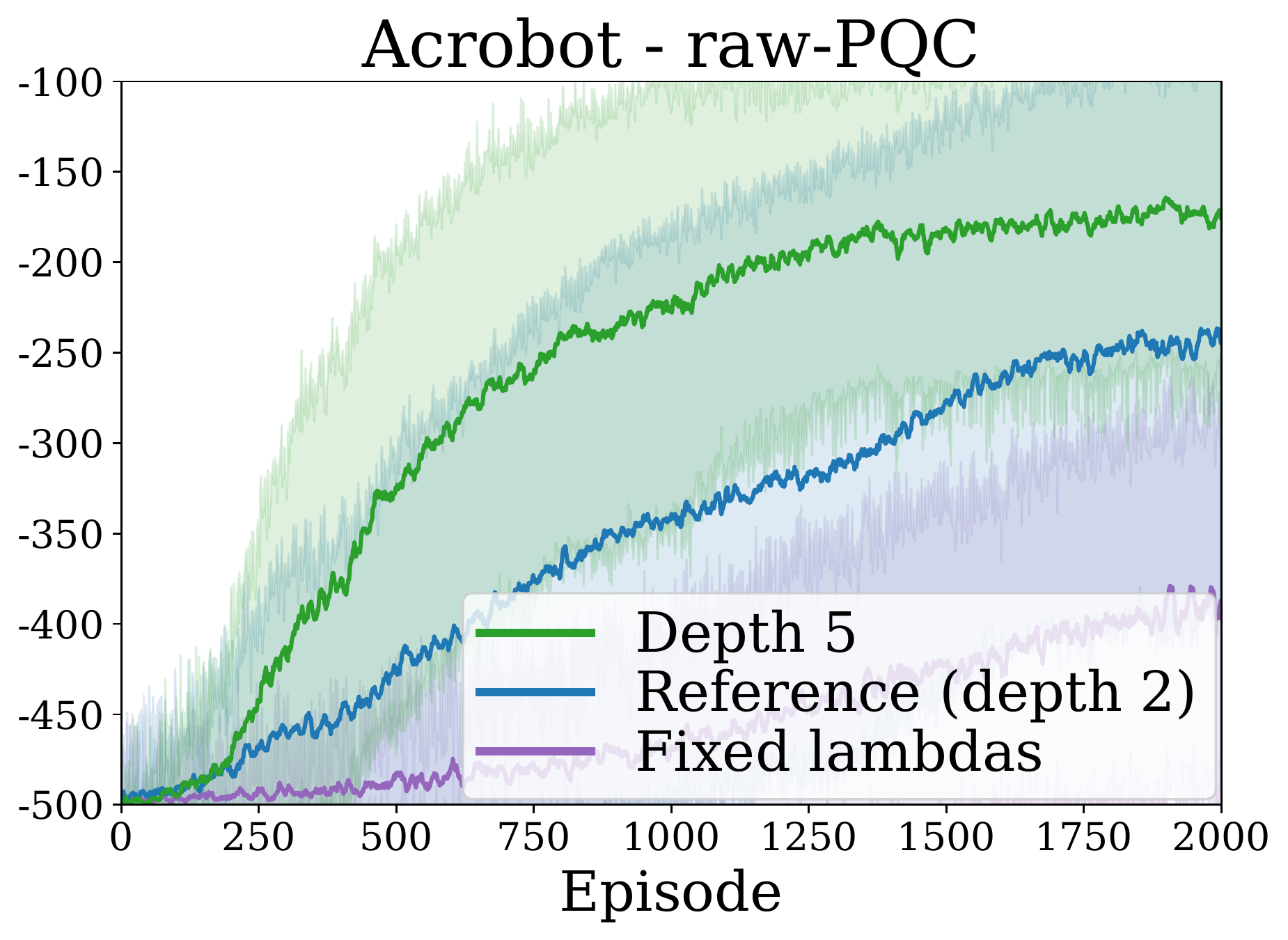}}\hspace{0em}
  \caption{\textbf{Influence of the model architecture for \textsc{raw-PQC} agents. }The blue curves in each plot correspond to the learning curves from Fig.\ \ref{fig:softmax-vs-raw} and are taken as a reference. }
  \label{fig:architecture-compare2}
\end{figure}

\begin{table*}[t]
	\centering
	\includegraphics[width=0.98\textwidth]{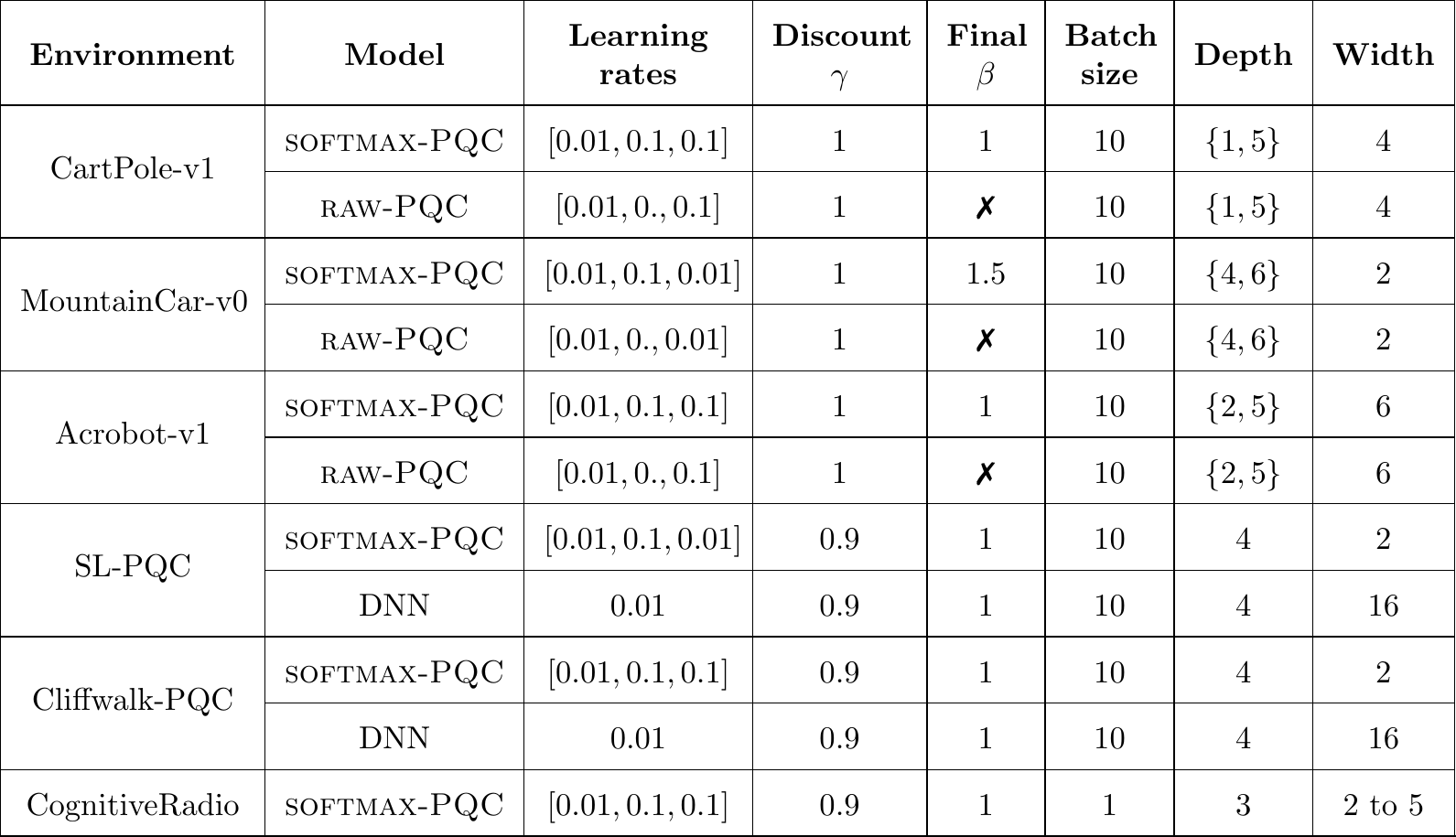}
	\vspace{-0.1em}
	\caption{\textbf{Hyperparmeters 1/2. }For PQC policies, we choose 3 distinct learning rates $[\alpha_{\phis}, \alpha_{\weights}, \alpha_{\lambdas}]$ for rotation angles $\phis$, observable weights $\weights$ and scaling parameters $\lambdas$, respectively. For \textsc{softmax-PQC}s, we take a linear annealing schedule for the inverse temperature parameter $\beta$ starting from $1$ and ending up in the final $\beta$. The batch size is counted in number of episodes used to evaluate the gradient of the value function. Depth indicates the number of encoding layers $D_\text{enc}$ for PQC policies, or the number of hidden layers for a DNN policy. Width corresponds to the number of qubits $n$ on which acts a PQC (also equal to the dimension $d$ of the environment's state space), or the number of units per hidden layer for a DNN.}
	\label{table:hyperparam-1}
\end{table*}

\begin{table*}[t]
	\centering
	\includegraphics[width=0.98\textwidth]{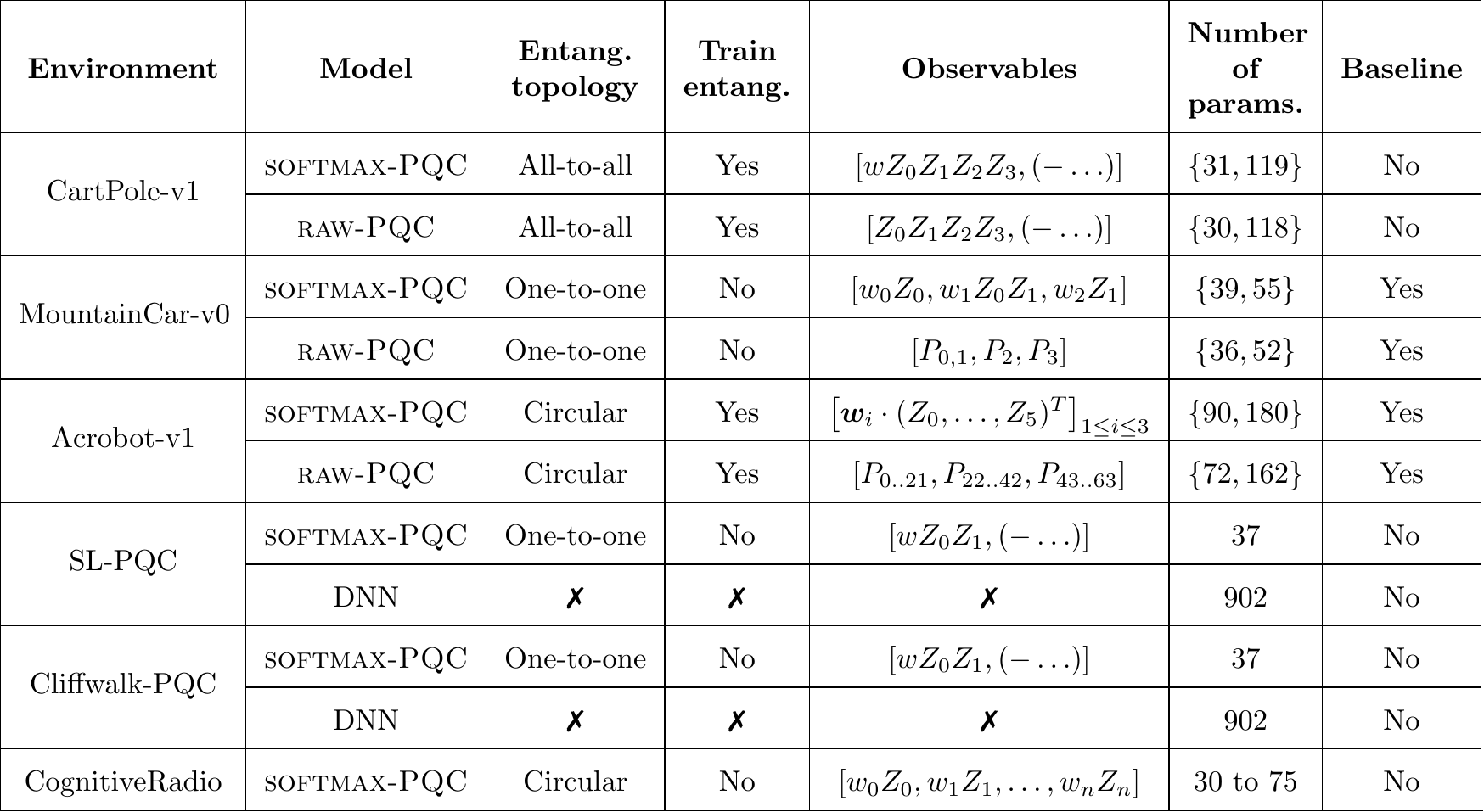}
	\vspace{-0.1em}
	\caption{\textbf{Hyperparmeters 2/2. }We call entangling layer a layer of 2-qubit gates in the PQC. Circular and all-to-all topologies of entangling layers are equivalent for $n=2$ qubits, so we call them one-to-one in that case. When trained, entangling layers are composed of $R_{zz} = e^{-i\theta (Z\otimes Z)/2}$ rotations, otherwise, they are composed of Ctrl-$Z$ gates. For policies with $2$ actions, the same observable, up to a sign change, is used for both actions. $Z_i$ refers to a Pauli-$Z$ observable acting on qubit $i$, while $P_{i..j}$ indicates a projection on basis states $i$ to $j$. In the experiments of Sec.\ \ref{sec:architecture-sim}, when the weights of the \textsc{softmax-PQC} are kept fixed, the observables used for MountainCar-v0 and Acrobot-v1 are $[Z_0, Z_0Z_1,Z_1]$, and those used for CartPole-v1 are $[Z_0Z_1Z_2Z_3, -Z_0Z_1Z_2Z_3]$. The different number of parameters in a given row correspond to the different depths in that same row in Table \ref{table:hyperparam-1}.}
	\label{table:hyperparam-2}
\end{table*}

\clearpage

\subsection{Shape of the policies learned by PQCs v.s.\ DNNs}

\textbf{In CartPole-v1 }
The results of the Sec.\ \ref{sec:benchmark} demonstrate that our PQC policies can be trained to good performance in benchmarking environments. To get a feel of the solutions found by our agents, we compare the \textsc{softmax-PQC} policies learned 
on CartPole to those learned by standard DNNs (with a softmax output layer), which are known to easily learn close-to-optimal behavior on this task. More specifically, we look at the functions learned by these two models, prior to the application of the softmax normalization function (see Eq.\ (\ref{eq:softmax-PQC})). Typical instances of these functions are depicted in Figure \ref{fig:pqc-vs-nn-cartpole}. We observe that, while DNNs learn simple, close to piece-wise linear functions of their input state space, PQCs tend to naturally learn very oscillating functions that are more prone to instability. While the results of Schuld \emph{et al.}\ \cite{schuld19} already indicated that these highly oscillating functions would be natural for PQCs, it is noteworthy to see that these are also the type of functions naturally learned in a direct-policy RL scenario. Moreover, our enhancements to standard PQC classifiers show how to make these highly oscillating functions more amenable to real-world tasks.

\textbf{In PQC-generated environments }
Fig.\ \ref{fig:pqc-vs-nn-2} shows the analog results to Fig.\ \ref{fig:pqc-vs-nn} in the main text but with two different random initializations of the environment-generating PQC. Both confirm our observations. In Fig.\ \ref{fig:pqc-vs-nn-policies}, we compare the policies learned by prototypical \textsc{softmax-PQC} and DNN agents in these PQC-generated environments. We observe that the typical policies learned by DNNs are rather simple, with up to $2$ (or $3$) regions, delimited by close-to-linear boundaries, as opposed to the policies learned by \textsc{softmax-PQC}s, which delimit red from blue regions with wide margins. These observations highlight the inherent flexibility of \textsc{softmax-PQC} policies and their suitability to these PQC-generated environments, as opposed to the DNN (and \textsc{raw-PQC}) policies we consider.

\subsection{Additional numerical simulation on the CognitiveRadio environment}

In a related work on value-based RL with PQCs, the authors of Ref.~\cite{chen20} introduced the CognitiveRadio environment as a benchmark to test their RL agents. In this environment, the agent is presented at each interaction step with a binary vector $(0,0,0,1,0)$ of size $n$ that describes the occupation of $n$ radio channels. Given this state, the agent must select one of the $n$ channels as its communication channel, such as to avoid collision with occupied channels (a $\pm 1$ reward reflects these collisions). The authors of Ref.~\cite{chen20} consider a setting where, in any given state, only one channel is occupied, and its assignment changes periodically over time steps, for an episode length of $100$ steps. While this constitutes a fairly simple task environment with discrete state and action spaces, it allows to test the performance of PQC agents on a family of environments described by their system size $n$ and make claims on the parameter complexity of the PQCs as a function of $n$. As to reproduce the findings of Ref.~\cite{chen20} in a policy-gradient setting, we test the performance of our \textsc{softmax-PQC} agents on this environment. We find numerically (see Fig.~\ref{fig:radio-exp}) that these achieve a very similar performance to the PQC agents of Ref.~\cite{chen20} on the same system sizes they consider ($n=2$ to $5$), using PQCs with the same scaling of number of parameters, i.e., $\mathcal{O}(n)$. 

\begin{figure}[h]
	\subfloat{\includegraphics[width=0.25\linewidth, valign=c]{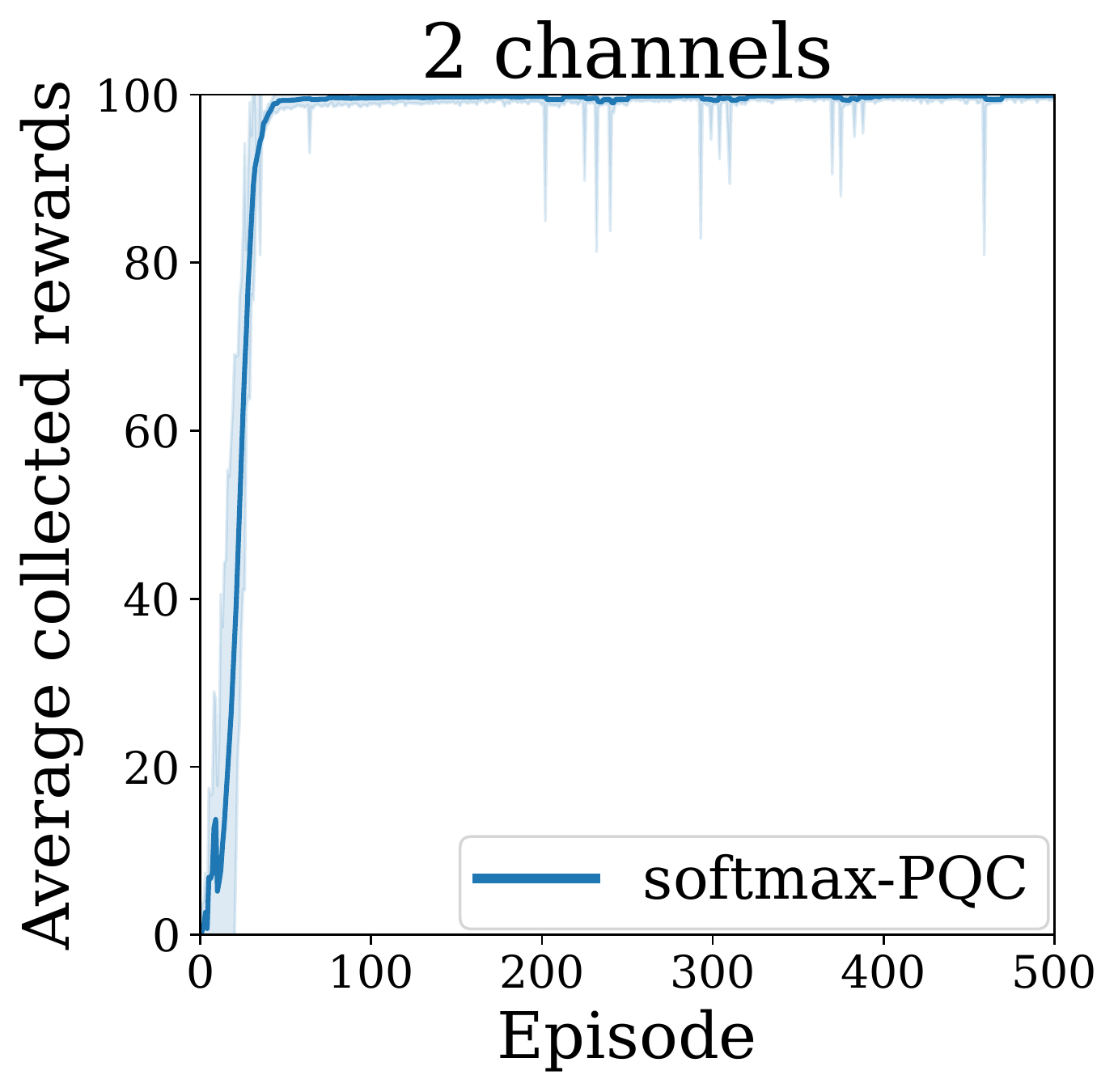}}\hspace{0em}%
	\subfloat{\includegraphics[width=0.25\linewidth, valign=c]{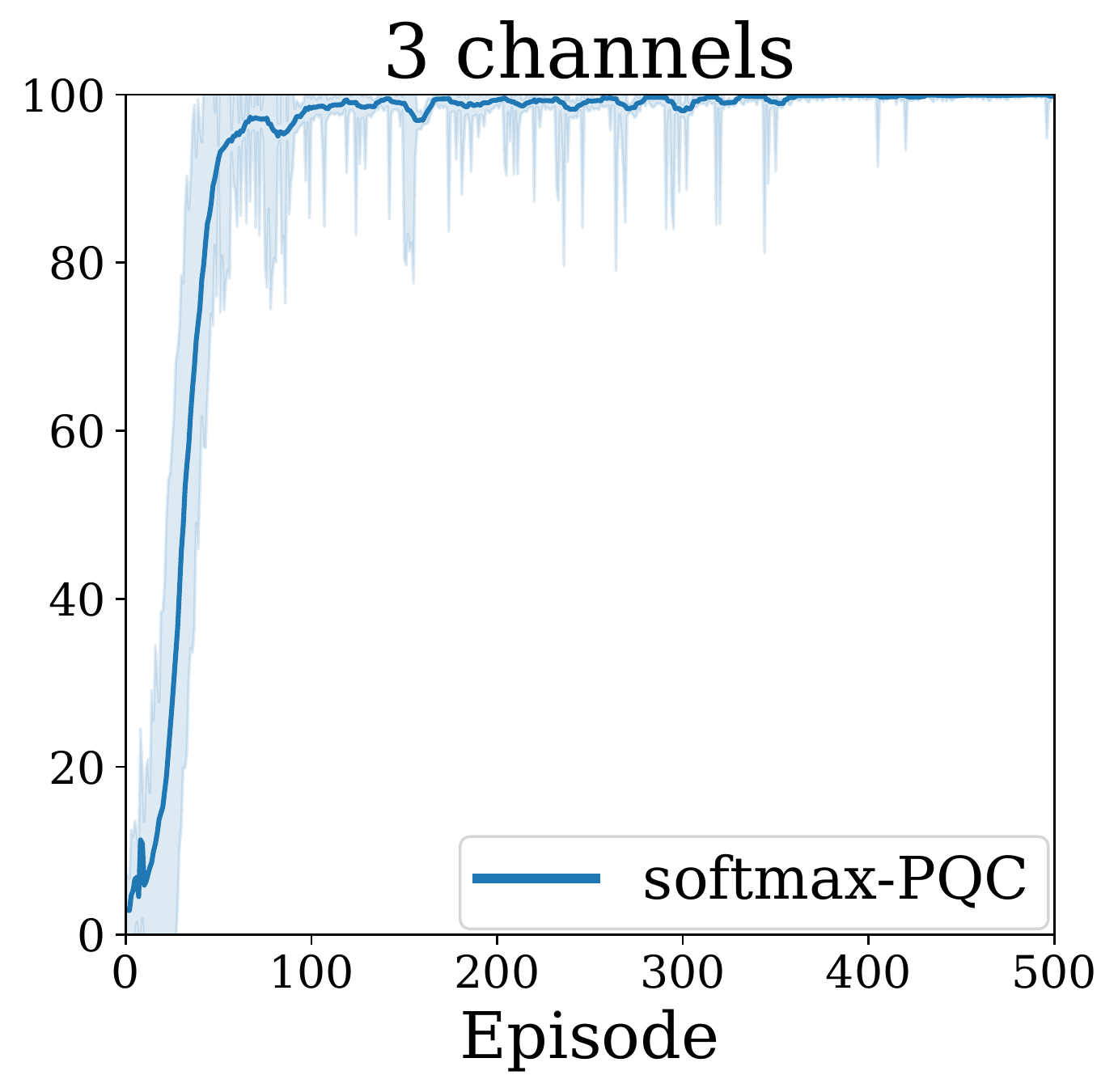}}\hspace{0em}%
	\subfloat{\includegraphics[width=0.25\linewidth, valign=c]{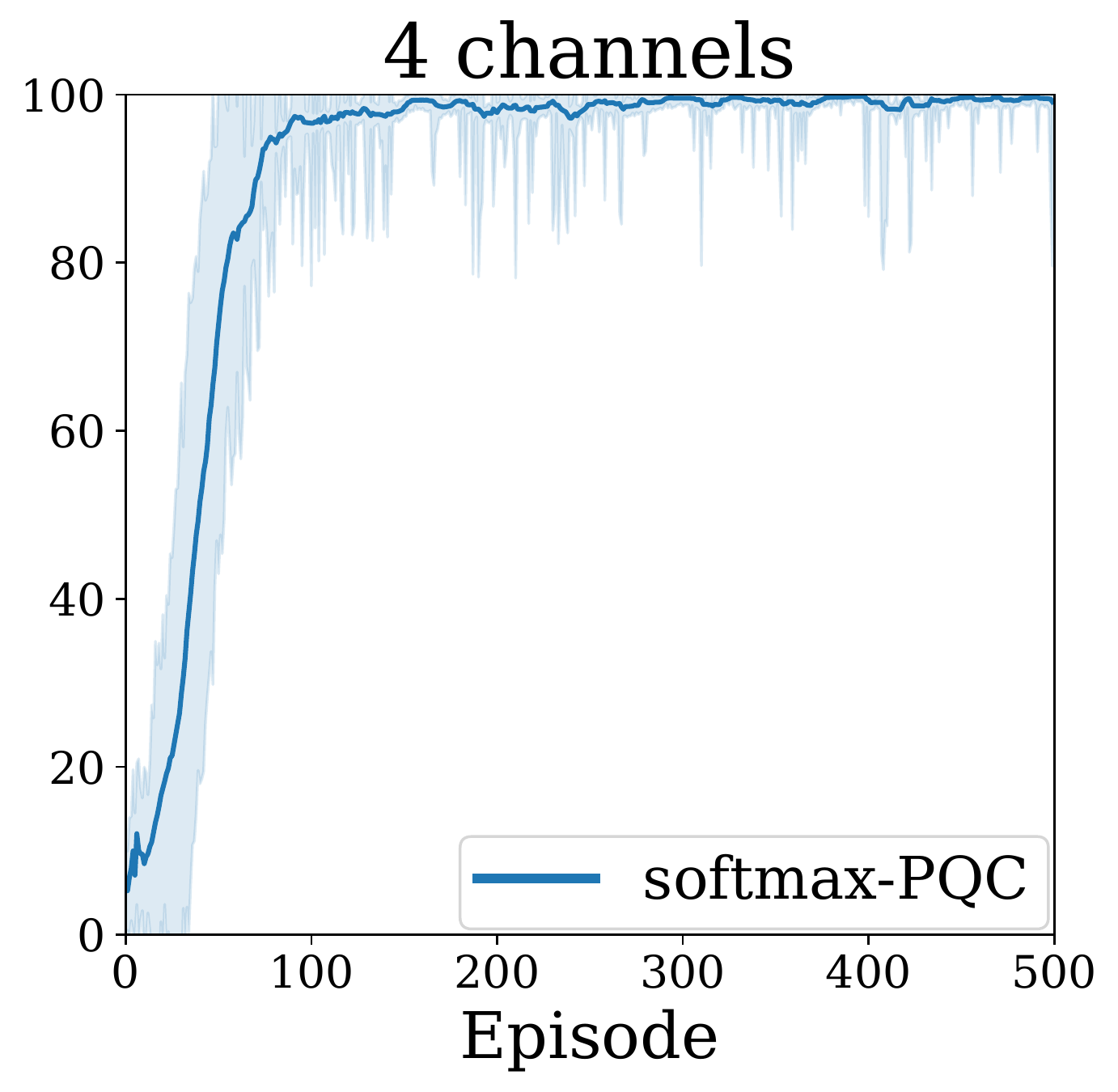}}\hspace{0em}%
	\subfloat{\includegraphics[width=0.25\linewidth, valign=c]{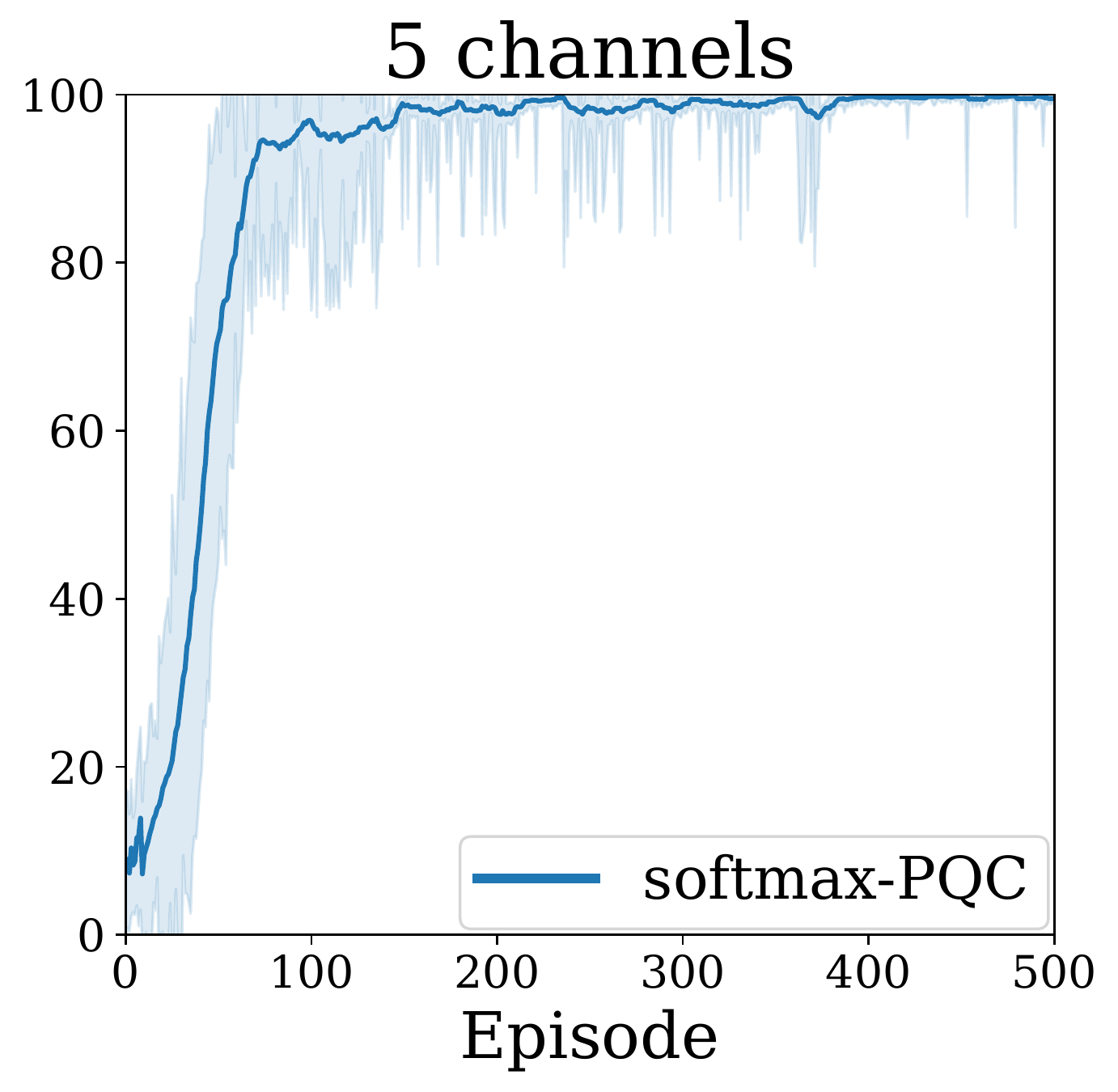}}\hspace{0em}
  \caption{\textbf{Performance of our \textsc{softmax-PQC} agents on the CognitiveRadio environment proposed in Ref.~\cite{chen20}. }Average performance of $20$ agents for system sizes (and number of qubits) $n=2$ to $5$.}
  \label{fig:radio-exp}
\end{figure}

\begin{figure*}[t]
	\centering
	\subfloat[][]{\includegraphics[width=0.96\textwidth, valign=c]{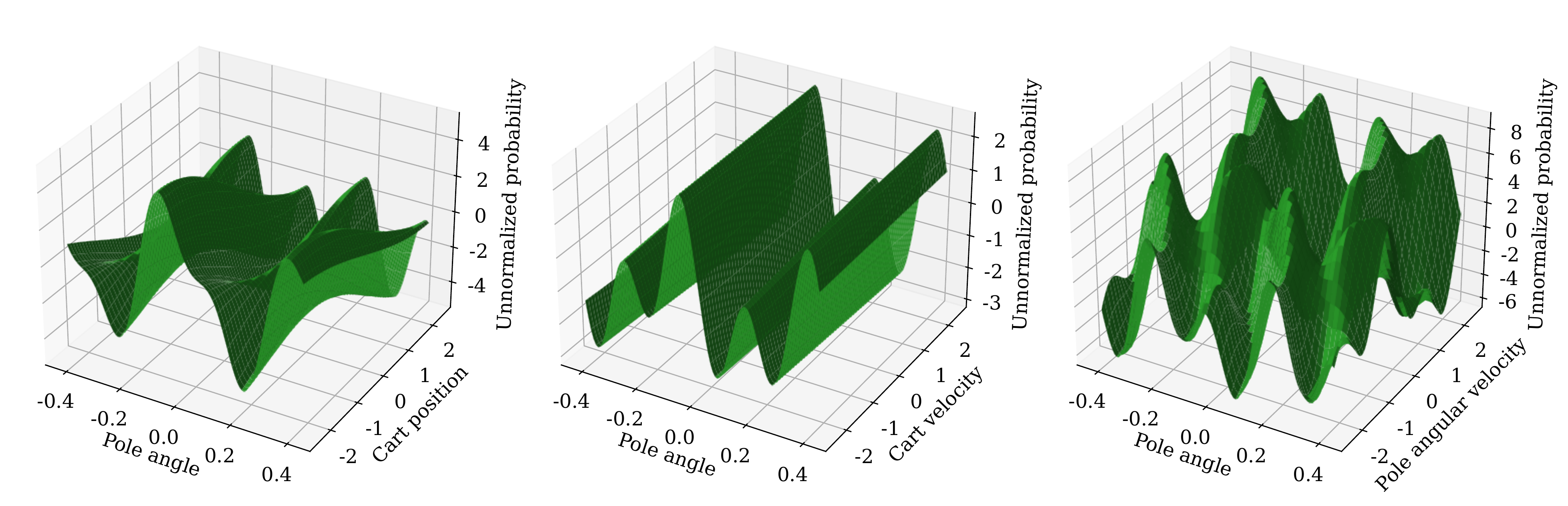}}\hspace{0em}\\ \vspace{-0.75em}
	\subfloat[][]{\includegraphics[width=0.96\textwidth, valign=c]{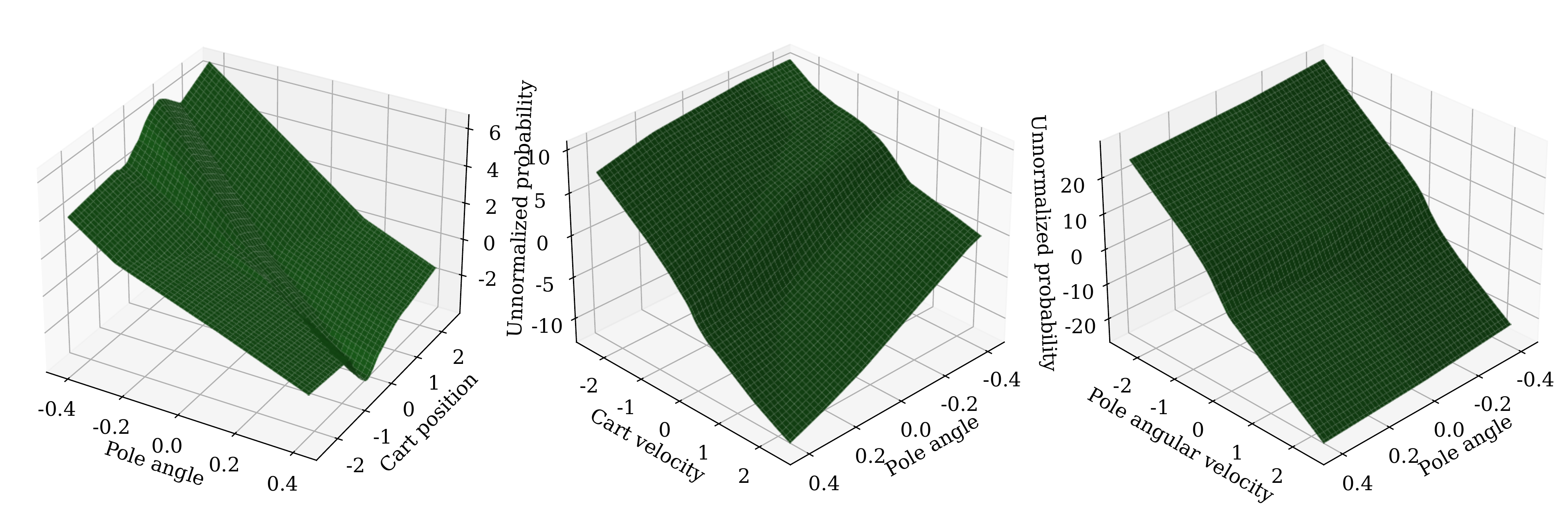}}\hspace{0em}
  \caption{\textbf{Prototypical unnormalized policies learned by \textsc{softmax-PQC} agents and DNN agents in CartPole. }Due to the $4$ dimensions of the state space in CartPole, we represent the unnormalized policies learned by (a) \textsc{softmax-PQC} agents and (b) DNN agents on $3$ subspaces of the state space by fixing unrepresented dimensions to $0$ in each plot. To get the probability of the agent pushing the cart to the left, one should apply the logistic function (i.e., 2-dimensional softmax) $1/(1+exp(-z))$ to the $z$-axis values of each plot.}
  \label{fig:pqc-vs-nn-cartpole}
  \vspace{-0.5em}
\end{figure*}

\begin{figure*}
	\subfloat[][]{\includegraphics[width=0.31\textwidth, valign=c]{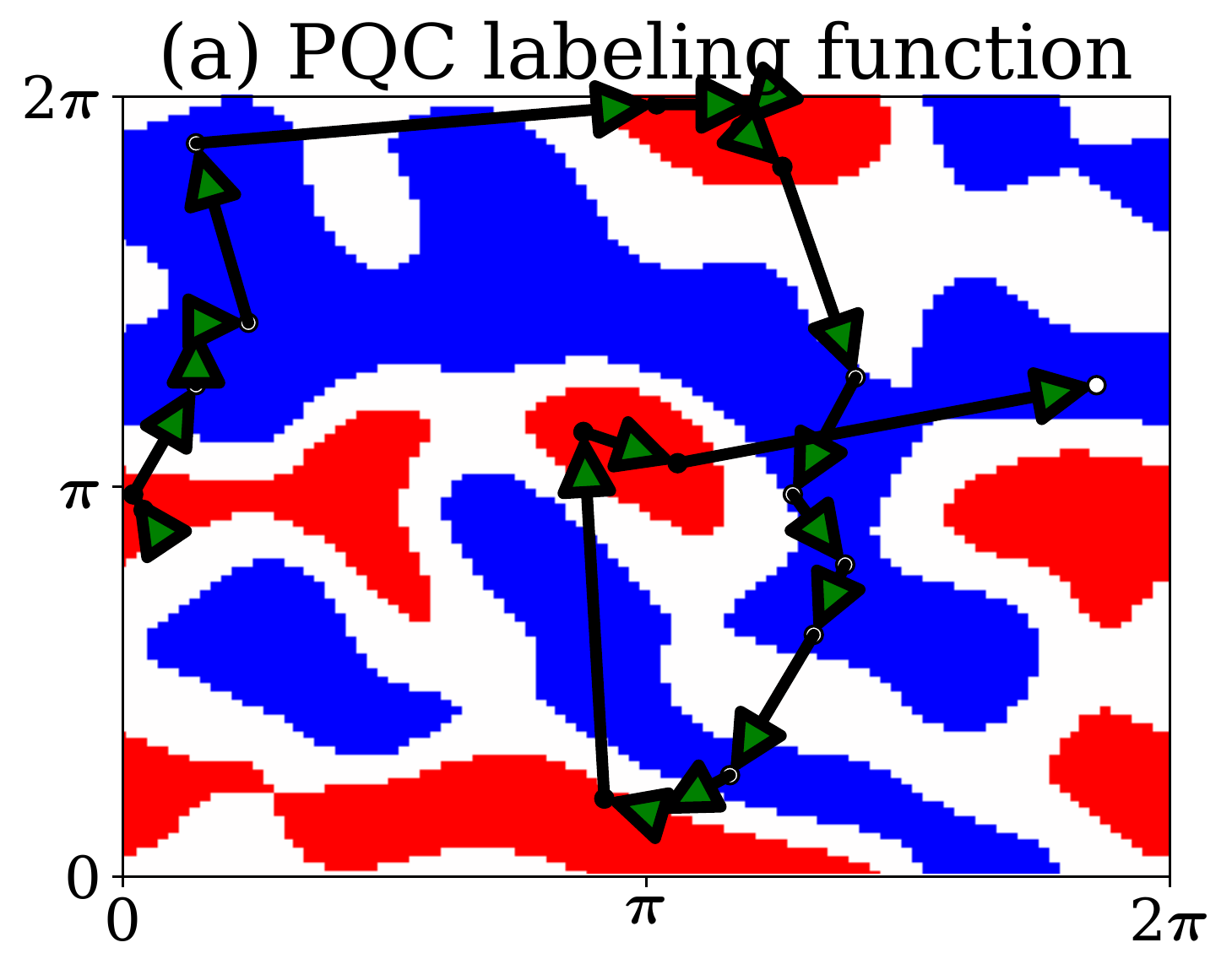}}\hspace{0em}%
	\subfloat[][]{\includegraphics[width=0.34\textwidth, valign=c]{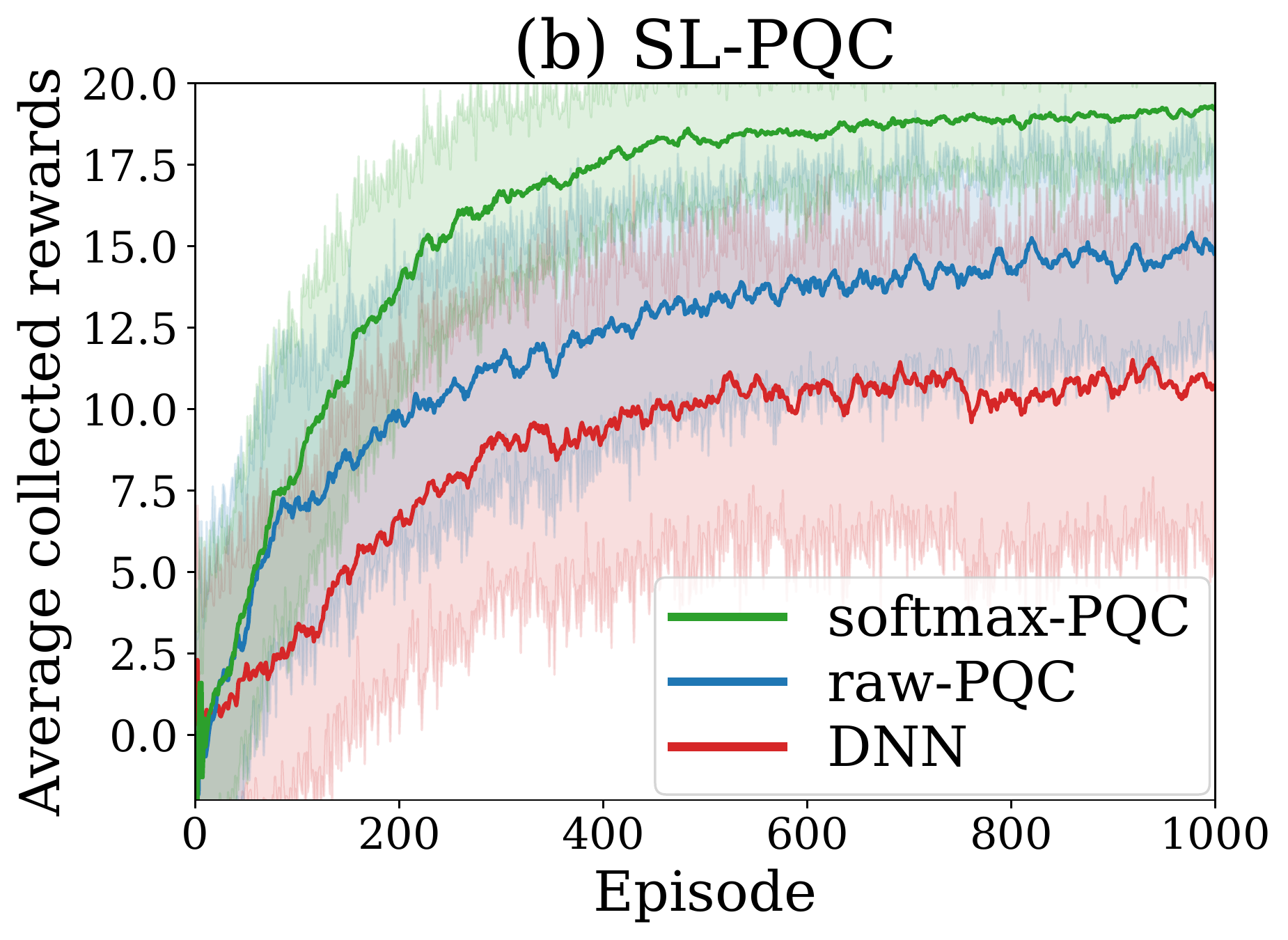}}\hspace{0em}%
	\subfloat[][]{\includegraphics[width=0.34\textwidth, valign=c]{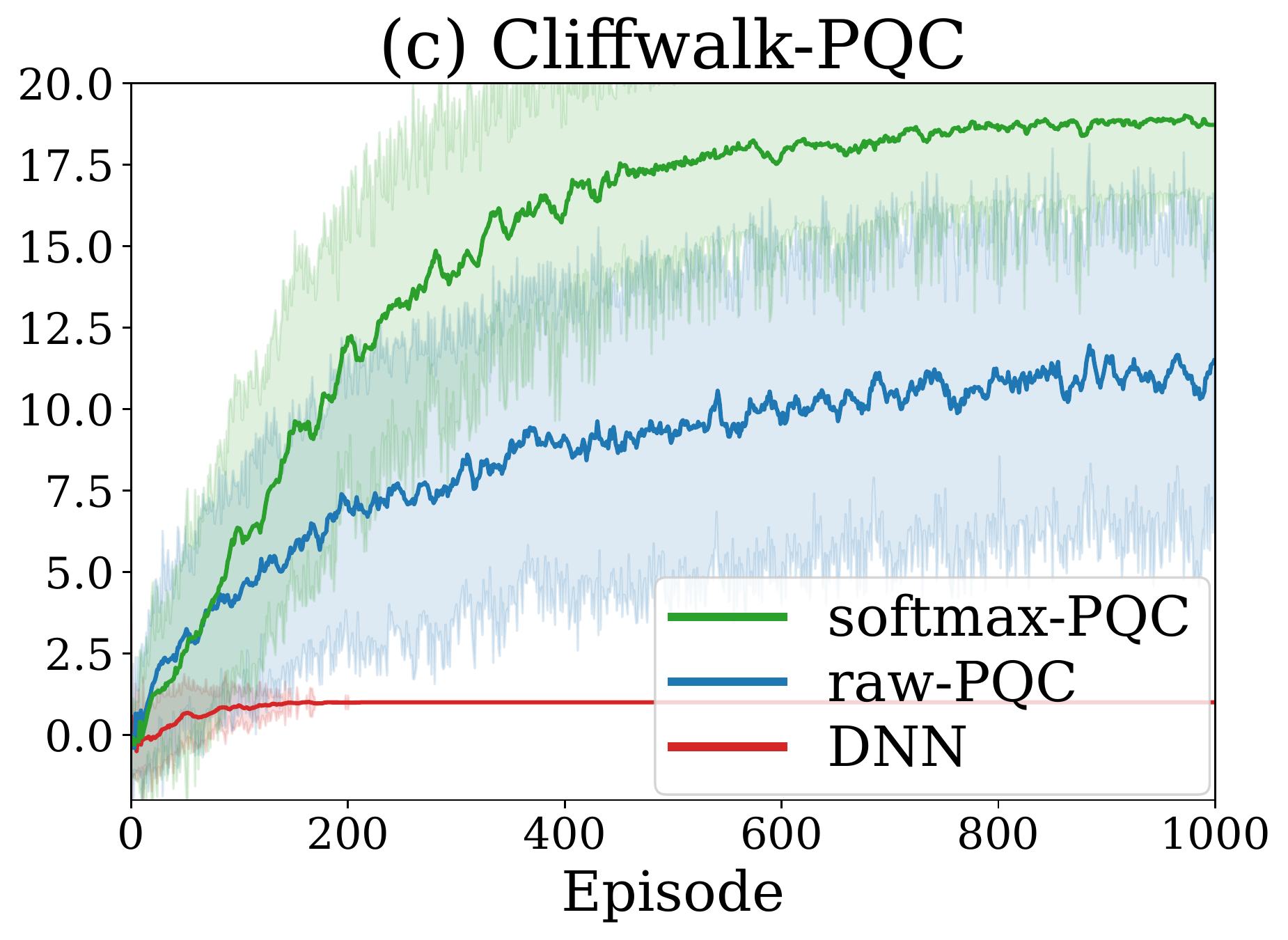}}\hspace{0em}\\%
	\subfloat[][]{\includegraphics[width=0.31\textwidth, valign=c]{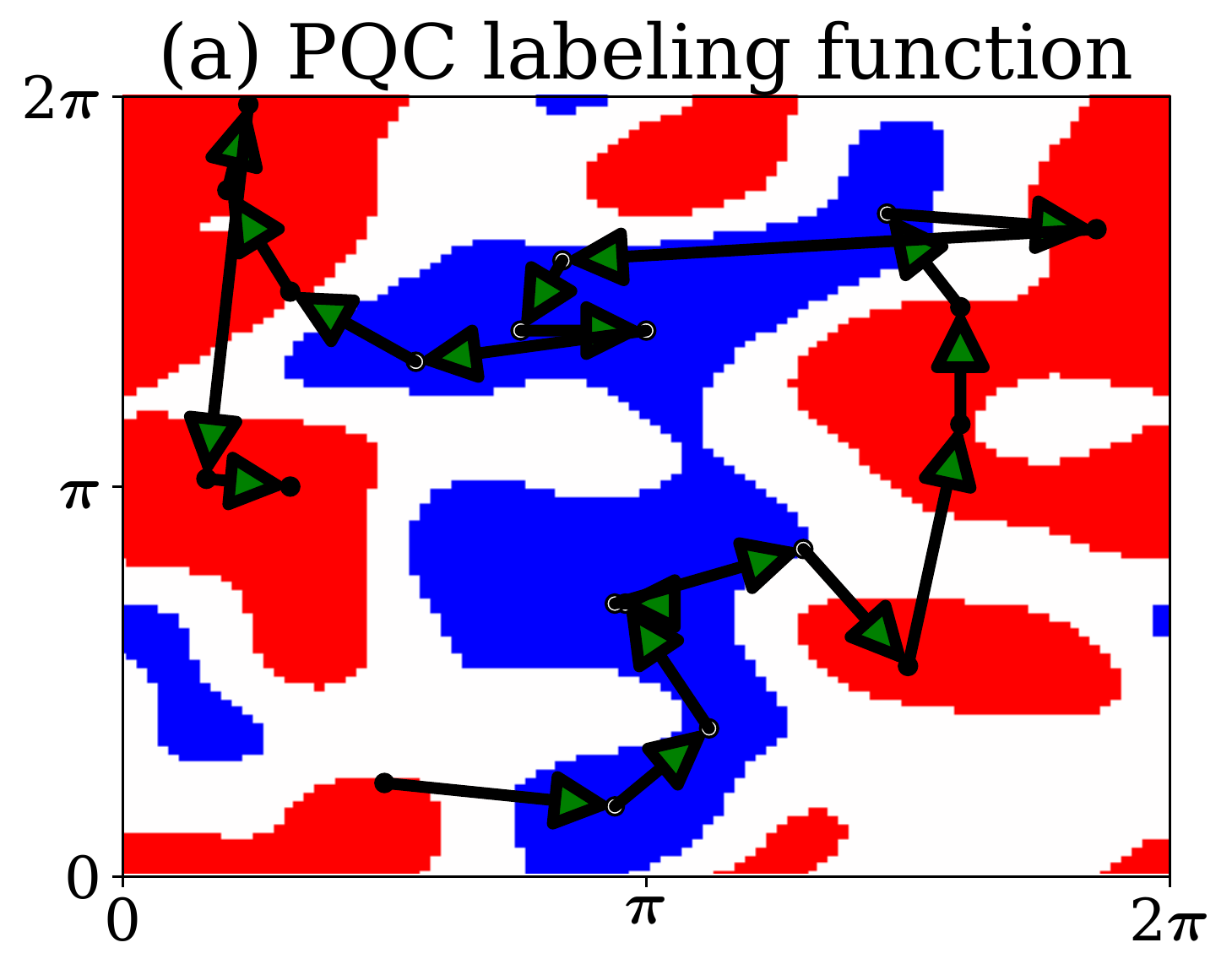}}\hspace{0em}%
	\subfloat[][]{\includegraphics[width=0.34\textwidth, valign=c]{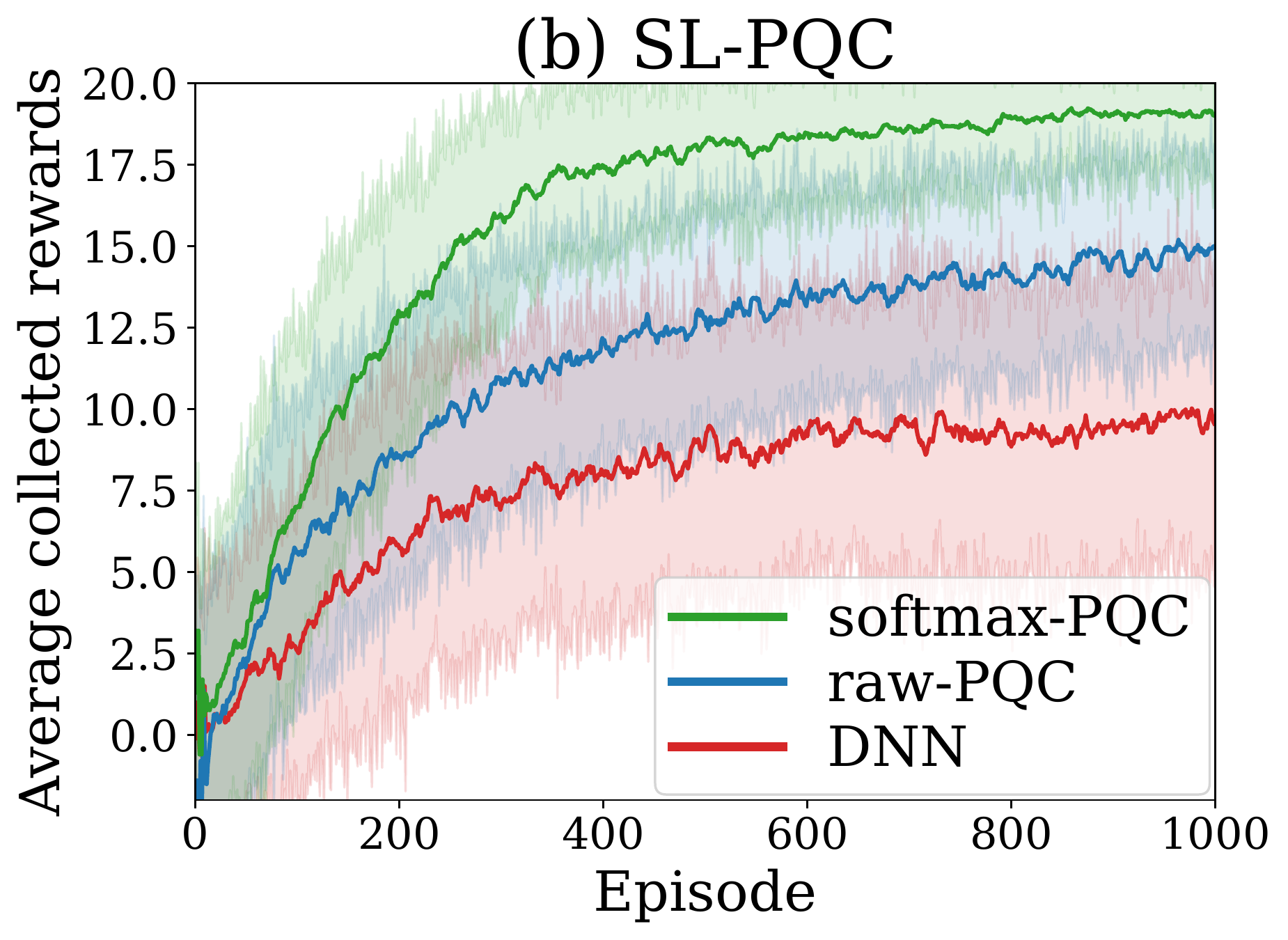}}\hspace{0em}%
	\subfloat[][]{\includegraphics[width=0.34\textwidth, valign=c]{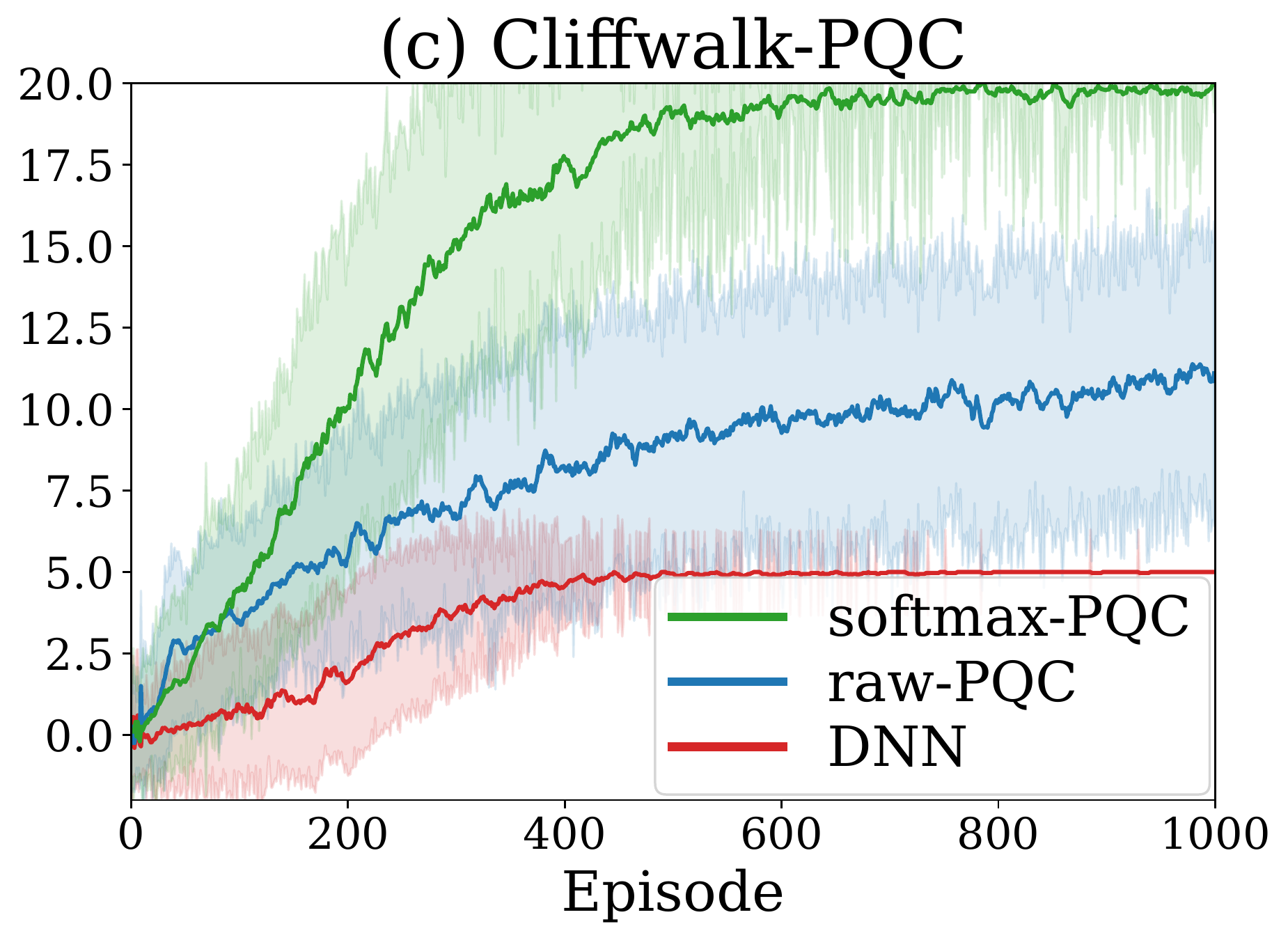}}\hspace{0em}
  \caption{\textbf{Different random initializations of PQC-generated environments and their associated learning curves. }See Fig.\ \ref{fig:pqc-vs-nn} for details. The additional learning curves (20 agents per curve) of randomly-initialized \textsc{raw-PQC} agents highlight the hardness of these environments for PQC policies drawn from the same family as the environment-generating PQCs.}
  \label{fig:pqc-vs-nn-2}
\end{figure*}

\begin{figure*}
	\centering
	\subfloat[][]{\includegraphics[width=0.31\textwidth, valign=c]{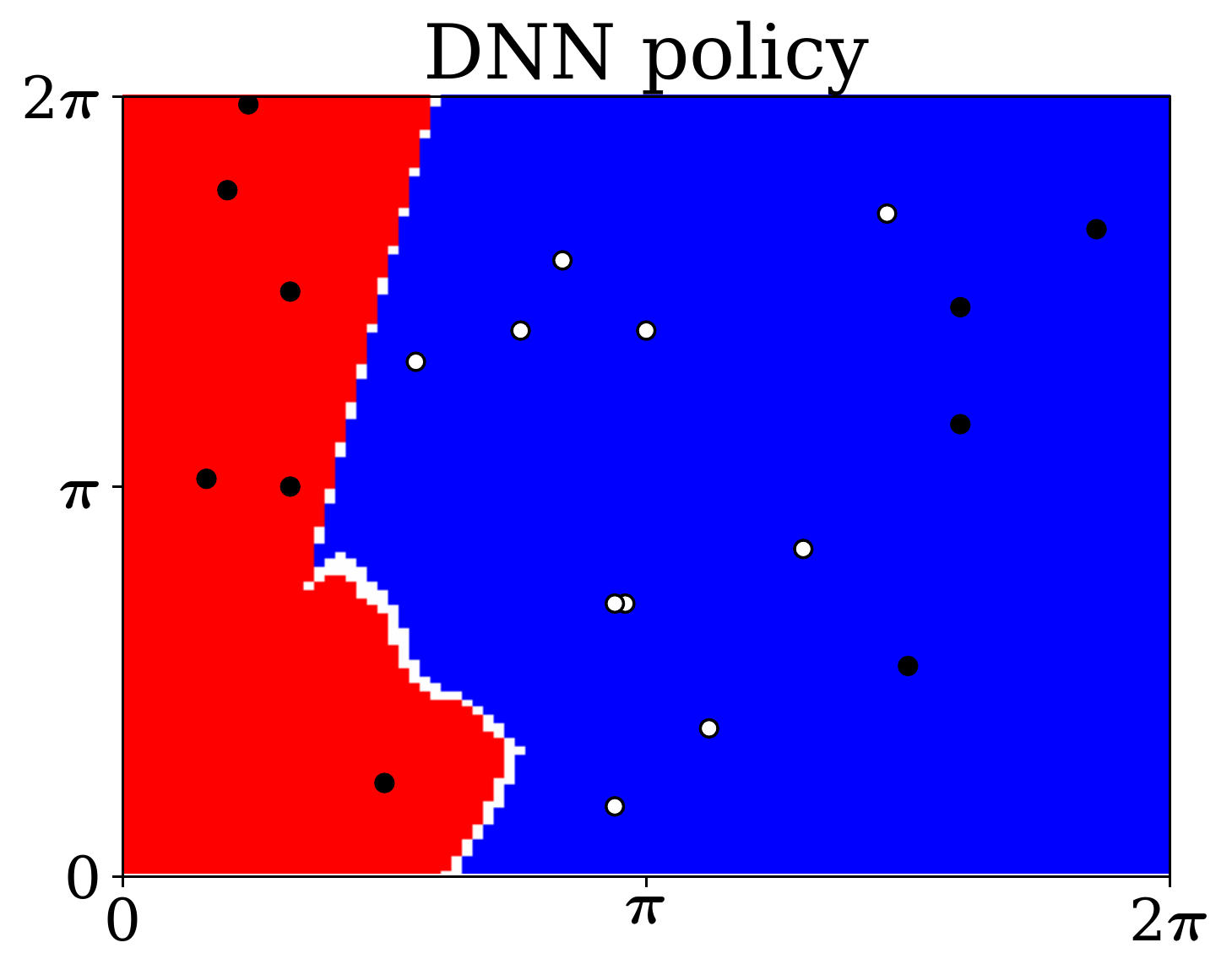}}\hspace{0em}%
	\subfloat[][]{\includegraphics[width=0.31\textwidth, valign=c]{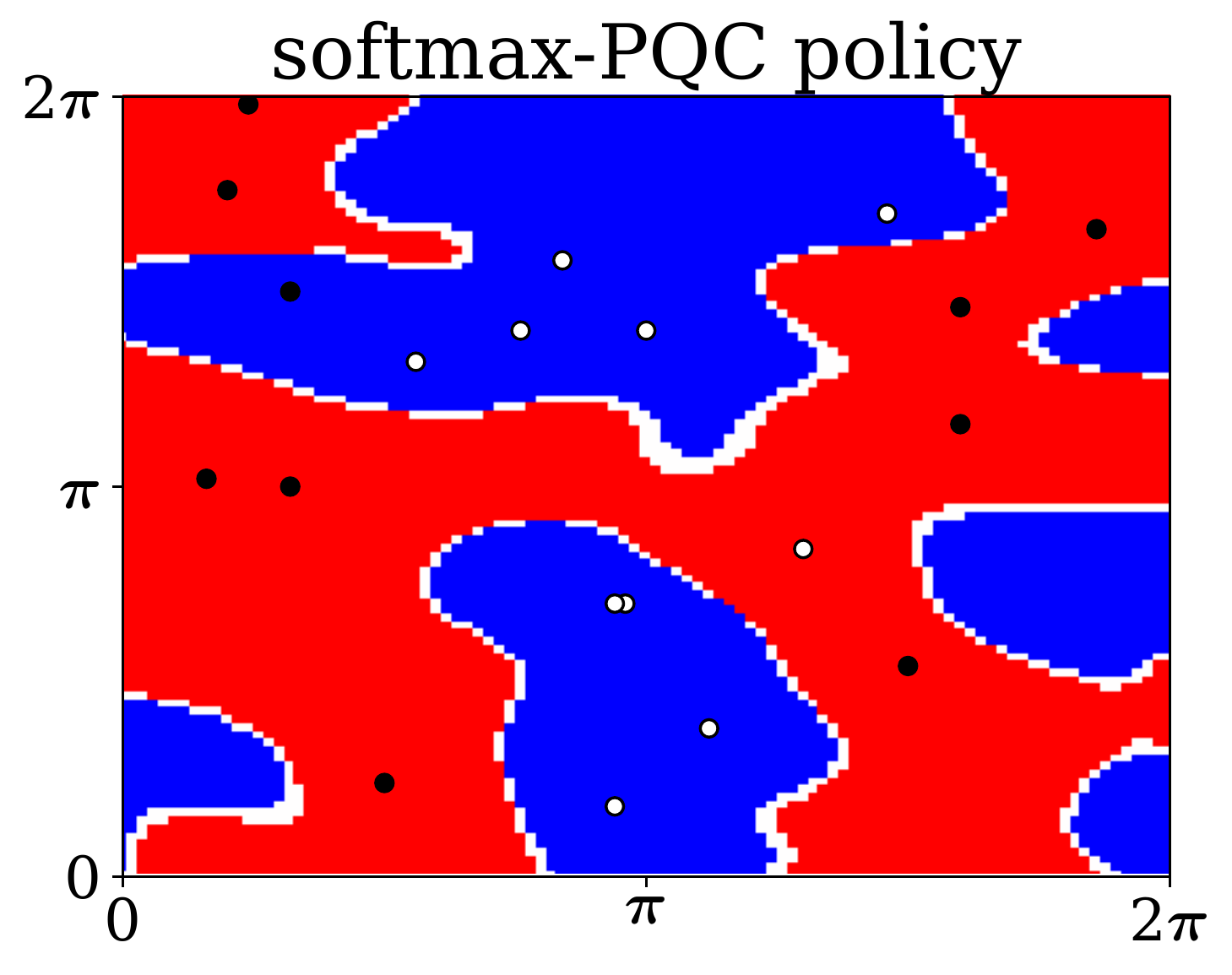}}\hspace{0em}\\
	\subfloat[][]{\includegraphics[width=0.31\textwidth, valign=c]{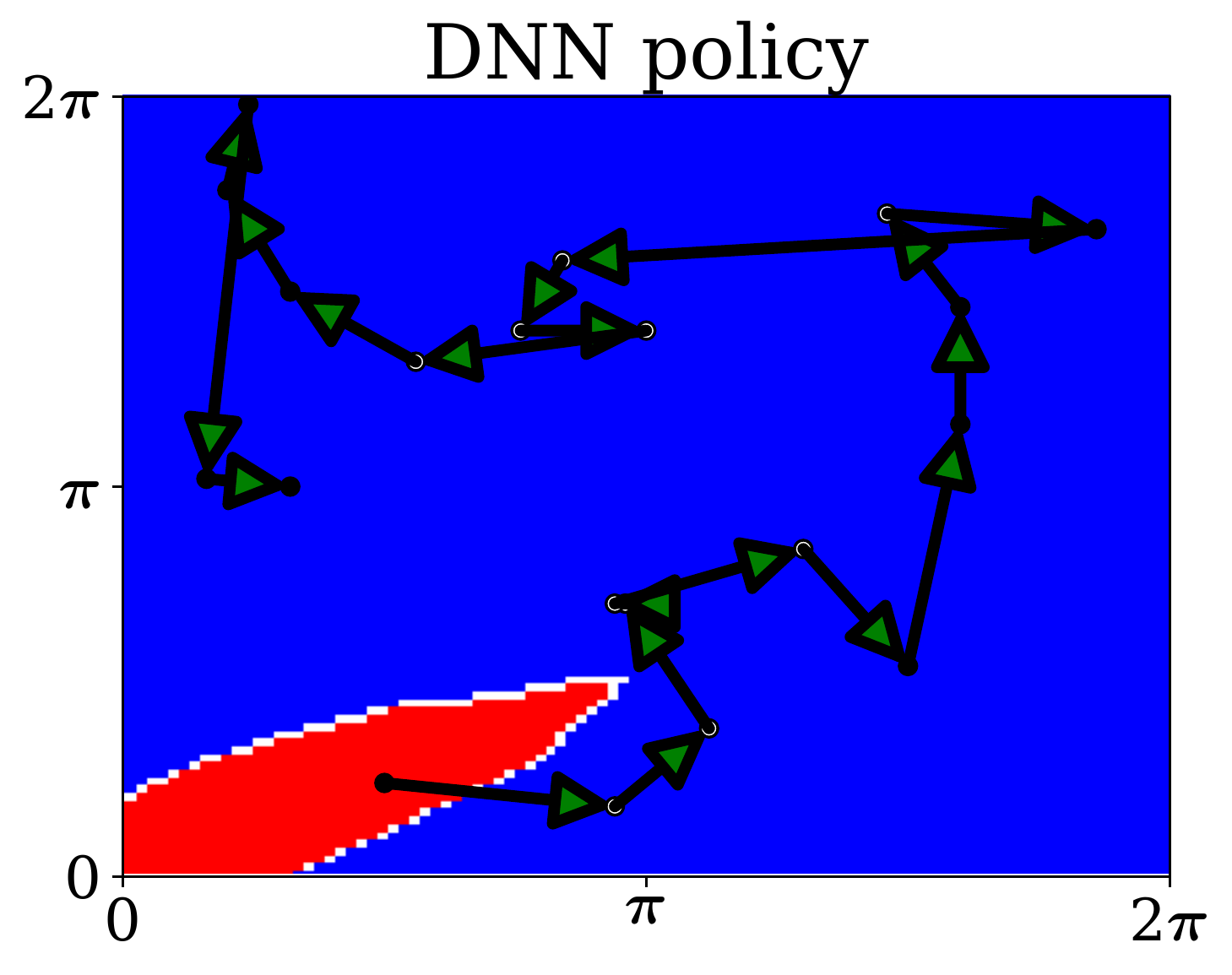}}\hspace{0em}%
	\subfloat[][]{\includegraphics[width=0.31\textwidth, valign=c]{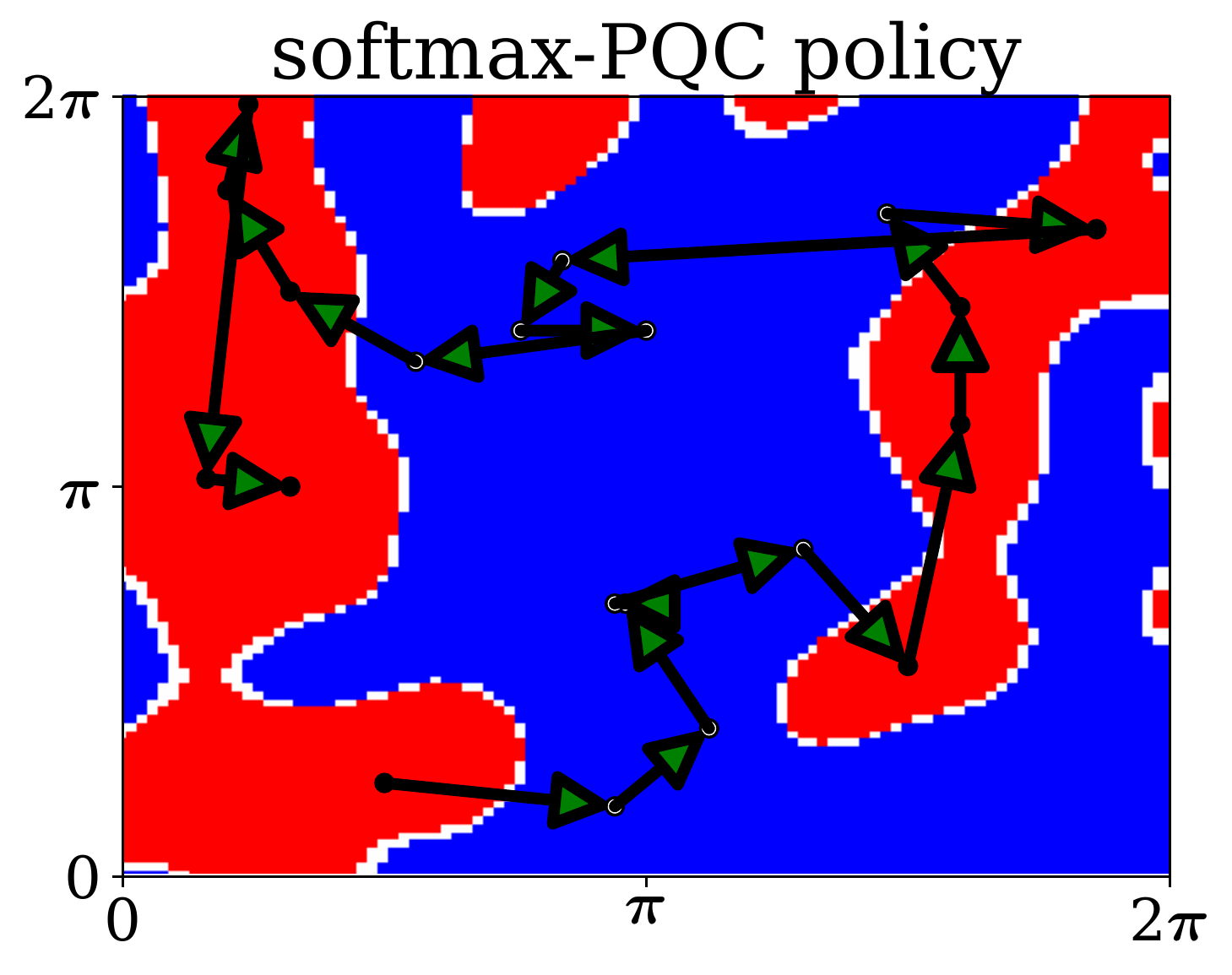}}\hspace{0em}
  \caption{\textbf{Prototypical policies learned by \textsc{softmax-PQC} agents and DNN agents in PQC-generated environments. }All policies are associated to the labeling function of Fig.\ \ref{fig:pqc-vs-nn-2}.d. Policies (a) and (b) are learned in the SL-PQC environment while policies (c) and (d) are learned in the Cliffwalk-PQC environment.}
  \label{fig:pqc-vs-nn-policies}
\end{figure*}

\clearpage

\section{Supervised learning task of Liu \emph{et al.}\label{sec:DLP-task}}

Define $p$ a large prime number, $n = \lceil \log_2(p-1) \rceil$, and $g$ a generator of $\mathbb{Z}_p^*=\{1,2,\dots,p-1\}$ (i.e., a $g \in \mathbb{Z}_p^*$ such that $\{g^y, y\in \mathbb{Z}_{p-1}\} = \mathbb{Z}_p^*$). The DLP consists in computing $\log_g x$ on input $x \in \mathbb{Z}_p^*$. Based on DLP, Liu \emph{et al.}\ \cite{liu20} define a concept class $\mathcal{C}=\{f_s\}_{s\in \mathbb{Z}_{p-1}}$ over the input space $\mathcal{X}=\mathbb{Z}_p^*$, where each labeling function of this concept class is defined as follows:
\begin{equation}\label{eq:labeling-fct-DLP}
    f_s(x)=\begin{cases}+1,&\text{if }\log_g x\in[s,s+\frac{p-3}{2}],\\ -1, &\text{otherwise.}\end{cases}
\end{equation}
Each function $f_s:\mathbb{Z}_p^*\to \{-1,1\}$ hence labels half the elements in $\mathbb{Z}_p^*$ with a label $+1$ and the other half with a label $-1$. We refer to Figure 1 in Ref.\ \cite{liu20} for a good visualization of all these objects.\\
The performance of a classifier $f$ is measured in terms of its testing accuracy
\begin{equation*}
\text{Acc}_f(f_s) = \text{Pr}_{x\sim\mathcal{X}}[f(x) = f_s(x)].
\end{equation*}

\section{Proof of Theorem \ref{thm:separations-DLP}\label{sec:proof-thm-separations}}

In the following, we provide constructions of \emph{a)} fully random, \emph{b)} partially random and \emph{c)} fully deterministic environments satisfying the properties of Theorem \ref{thm:separations-DLP}. We consider the three families of environments separately and provide individual lemmas specifying their exact separation properties.

\paragraph{Fully random: the SL-DLP environment.}
This result is near-trivially obtained by noting that any classification problem can be easily mapped to a (degenerate) RL problem. For this, the environment will be an MDP defined as follows: its state space is the input space of the classification problem, its action space comprises all possible labels, rewards are trivially $+1$ for assigning a correct label to an input state and $-1$ otherwise, and the initial and next-state transition probabilities are state-independent and equal to the input distribution of the classification task. The optimal policy of this MDP is clearly the optimal classifier of the corresponding SL task. Consider now the classification task of Liu \emph{et al.}, defined in detail in Appendix \ref{sec:DLP-task}: the input distribution is taken to be uniform on the state space, i.e., $P(s_t) = \frac{1}{|S|}$, and the performance of a classifier $f$ w.r.t.\ a labeling (or ground truth) function $f^*$ is measured in terms of a testing accuracy
\begin{equation}
\text{Acc}_f(f^*) = \frac{1}{|S|} \sum_{s} \text{Pr}[f(s) = f^*(s)].
\end{equation}
For the MDP associated to this classification task and length-$1$ episodes of interaction, the value function of any policy $\pi(a|s)$ is given by
\begin{align*}
V_\pi(s_0) &= \frac{1}{|S|}\sum_{s_0} \left(\pi(f^*(s_0)|s_0) - \pi(-f^*(s_0)|s_0)\right) \\
&= \frac{1}{|S|}\sum_{s_0} 2\pi(f^*(s_0)|s_0) - 1\\
&= 2 \text{Acc}_\pi(f^*)-1,
\end{align*}
which is trivially related to the testing accuracy of this policy on the classification task. Note that we also have $V_\text{rand}(s_0) = 0$ and $V_\text{opt}(s_0) = 1$.\\
Since these observations hold irrespectively of the labeling function $f^*$, we can show the following result:

\begin{lemma}[Quantum advantage in SL-DLP]\label{lemma:separation-SL-DLP}
There exists a uniform family of SL-DLP MDPs, each derived from a labeling function $f^*$ of the DLP concept class $\mathcal{C}$ (see Appendix \ref{sec:DLP-task}),  for which classical hardness and quantum learnability holds. More specifically, the performance of any classical learner is upper bounded by $1/\textnormal{poly}(n)$, while that of a class of quantum agents is lower bounded by $0.98$ with probability above $2/3$ (over the randomness of their interaction with the environment and noise in their implementation).
\end{lemma}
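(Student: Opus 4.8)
The plan is to establish Lemma~\ref{lemma:separation-SL-DLP} as a direct corollary of the reduction sketched just above its statement, combined with the classical-hardness and quantum-learnability results of Liu \emph{et al.}\ \cite{liu20}. First I would fix the MDP construction precisely: given a labeling function $f_s \in \mathcal{C}$ on input space $\mathcal{X} = \mathbb{Z}_p^*$, define the state space $S = \mathcal{X}$, the action space $A = \{-1,+1\}$, length-$1$ episodes, reward $r = +1$ if the action equals $f_s(s_0)$ and $-1$ otherwise, and both the initial-state and next-state distributions equal to the uniform distribution over $S$ (degenerate since episodes have length one). The identity $V_\pi(s_0) = 2\,\text{Acc}_\pi(f_s) - 1$, already derived in the excerpt, is the bridge between RL value and classification accuracy, and it holds for every $f_s$ regardless of which one is chosen; in particular $V_\text{rand}(s_0) = 0$ (uniform guessing gives accuracy $1/2$) and $V_\text{opt}(s_0) = 1$ (the optimal classifier is $f_s$ itself, achievable since $f_s$ is a well-defined function).

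The second step is the classical upper bound. Suppose for contradiction that some classical polynomial-time learner, after $\text{poly}(n)$ interactions with the SL-DLP MDP, produced a policy $\pi$ with $V_\pi(s_0) \geq 1/\text{poly}(n)$ on a non-negligible fraction of the family. Translating through the value–accuracy identity, this learner achieves testing accuracy $\text{Acc}_\pi(f_s) \geq 1/2 + 1/\text{poly}(n)$; since each RL interaction yields exactly one labeled example $(s_0, f_s(s_0))$ drawn uniformly (the reward reveals the label), such an RL learner is \emph{no more powerful} than a PAC learner with $\text{poly}(n)$ uniform i.i.d.\ samples for the DLP concept class. By the hardness result of Liu \emph{et al.}\ \cite{liu20}, this would give an efficient classical algorithm for DLP, contradicting its widely believed intractability. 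Hence the performance of any classical learner is bounded by $1/\text{poly}(n)$. The third step is quantum learnability: Liu \emph{et al.}\ construct a quantum learner (a quantum-kernel SVM based on Shor's algorithm) that, with $\text{poly}(n)$ examples and $\text{poly}(n)$ quantum circuit evaluations, achieves testing accuracy $\geq 0.99$ with probability $\geq 2/3$, even under the relevant noise model. Running this learner inside the SL-DLP MDP (feeding it the $(s_0, f_s(s_0))$ pairs revealed by rewards) and then deploying the resulting classifier as the RL policy gives $V_\pi(s_0) = 2(0.99) - 1 = 0.98$ with probability $\geq 2/3$ over interaction randomness and implementation noise, which is the claimed lower bound.

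The main obstacle is making the reduction airtight on the \emph{information-equivalence} point: one must argue carefully that interacting with the degenerate MDP gives a classical (resp.\ quantum) RL agent exactly the same learning resource as an i.i.d.\ sample oracle for the SL task — no more, no less — so that both the hardness reduction (RL learner $\Rightarrow$ SL learner $\Rightarrow$ DLP solver) and the learnability direction (SL learner $\Rightarrow$ RL policy) go through. Subtleties include the fact that the agent chooses which action to play but the reward only reveals whether the guess was correct, which is equivalent to revealing the true label for a binary action space; that the uniform next-state distribution means there is no exploration problem or credit-assignment issue; and that the ``with probability above $2/3$'' and noise caveats must be inherited verbatim from \cite{liu20} rather than re-derived. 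Everything else is bookkeeping: once the value–accuracy identity and the resource-equivalence are in place, the lemma follows by plugging the two quoted bounds of Liu \emph{et al.}\ into $V = 2\,\text{Acc} - 1$.
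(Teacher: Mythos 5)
Your proposal is correct and follows essentially the same route as the paper's proof: the value--accuracy identity $V_\pi(s_0)=2\,\text{Acc}_\pi(f_s)-1$ for the degenerate length-$1$ MDP, contraposition/contradiction via Theorem~1 of Liu \emph{et al.} for the classical $1/\text{poly}(n)$ bound, and collecting $\text{poly}(n)$ labeled samples from rewards (labels inferable since the action space is binary) to invoke their Theorem~2 quantum learner, giving $V_q(s_0)\geq 2(0.99)-1=0.98$ with probability at least $2/3$. The ``information-equivalence'' subtlety you flag is handled in the paper exactly as you suggest, so no gap remains.
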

\begin{proof}
Classical hardness is trivially obtained by contraposition: assuming no classical polynomial-time algorithm can solve DLP, then using Theorem $1$ of Liu \emph{et al.}, any classical policy would have testing accuracy $\text{Acc}_\pi(f^*) \leq 1/2 + 1/\text{poly}(n)$, and hence its value function would be $V_\pi(s_0) \leq 1/\text{poly}(n)$.

For quantum learnability, we define an agent that first collects $\text{poly}(n)$ random length-$1$ interactions (i.e., a random state $s_0$ and its associated reward for an action $+1$, from which the label $f^*(s_0)$ can be inferred), and use Theorem $2$ of Liu \emph{et al.}\ to train a classifier that has test accuracy at least $0.99$ with probability at least $2/3$ (this process can be repeated $\mathcal{O}\left(\log(\delta^{-1})\right)$ times to increase this probability to $1-\delta$ via majority voting). This classifier has a value function $V_\pi(s_0) \geq 0.98$.
\end{proof}
Note that this proof trivially generalizes to episodes of interaction with length greater than $1$, when preserving the absence of temporal correlation in the states experienced by the agents. For episodes of length $H$, the only change is that the value function of any policy, and hence the bounds we achieve, get multiplied by a factor of $\frac{1-\gamma^{H}}{1-\gamma}$ for a discount factor $\gamma<1$ and by a factor $H$ for $\gamma=1$.

\paragraph{Partially random: the Cliffwalk-DLP environment.}
One major criticism to the result of Lemma \ref{lemma:separation-SL-DLP} is that it applies to a very degenerate, fully random RL environment. In the following, we show that similar results can be obtained in environments based on the same classification problem, but while imposing more temporal structure and less randomness (such constructions were introduced in Ref.\ \cite{dunjko17b}, but for the purpose of query separations between RL and QRL). For instance, one can consider cliffwalk-type environments, inspired by the textbook ``cliff walking'' environment of Sutton \& Barto \cite{sutton98}. This class of environments differs from the previous SL-DLP environments in its state and reward structure: in any episode of interaction, experienced states follow a fixed ``path'' structure (that of the cliff) for correct actions, and a wrong action yields to immediate ``death'' (negative reward and episode termination).
We slightly modify this environment to a ``slippery scenario'' in which, with a $\delta$ probability, any action may lead to a uniformly random position on the cliff. This additional randomness allows us to prove the following separation: 
\begin{lemma}[Quantum advantage in Cliffwalk-DLP]\label{lemma:separation-cliffwalk-DLP}
There exists a uniform family of Cliffwalk-DLP MDPs with arbitrary slipping probability $\delta \in [0.86,1]$ and discount factor $\gamma \in [0,0.9]$, each derived from a labeling function $f^*$ of the DLP concept class $\mathcal{C}$, for which classical hardness and quantum learnability holds. More specifically, the performance of any classical learner is upper bounded by $V_\textnormal{rand}(s_0)+0.1$, while that of a class of quantum agents is lower bounded by $V_\textnormal{opt}(s_0)-0.1$ with probability above $2/3$ (over the randomness of their interaction with the environment and noise in their implementation). Since $V_\textnormal{rand}(s_0) \leq -\frac{1}{2}$ and $V_\textnormal{opt}=0$, we always have a classical-quantum separation.
\end{lemma}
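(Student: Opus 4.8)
\textbf{Proof proposal for Lemma~\ref{lemma:separation-cliffwalk-DLP}.}
The plan is to reduce the Cliffwalk-DLP setting back to the DLP classification problem, exactly as in the proof of Lemma~\ref{lemma:separation-SL-DLP}, but accounting for the extra temporal structure and the slipping probability $\delta$. First I would set up the MDP explicitly: states are arranged in a chain $s_0, s_1, \dots, s_L$ (with $L = \mathrm{poly}(n)$), each non-terminal state $s_i$ carries the DLP input and a ``correct'' action equal to $f^*(s_i)$; taking the correct action moves the agent to $s_{i+1}$ with a $+1$ reward (unless it slips), taking the wrong action terminates the episode with a $-1$ reward, and with probability $\delta$ any action instead sends the agent to a uniformly random state on the chain (which is what keeps the environment sufficiently stochastic for the reduction). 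I would write down closed forms (or at least computable bounds) for $V_\mathrm{opt}(s_0)$, $V_\mathrm{rand}(s_0)$, and more importantly for $V_\pi(s_0)$ of an arbitrary policy $\pi$ in terms of the per-state labeling accuracy $\alpha_i = \pi(f^*(s_i)\mid s_i)$, averaged appropriately over the uniform ``slip'' distribution.

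The key structural fact I would establish is that, because of the slipping probability $\delta$, a constant fraction of the states visited in any episode are effectively drawn \emph{uniformly at random} from the chain, and hence from (a subset of) the DLP input space. This lets me argue the reduction in both directions. For classical hardness: if a classical polynomial-time agent achieved $V_c(s_0) > V_\mathrm{rand}(s_0) + 0.1$, then by the above relation between value and averaged labeling accuracy, it would have to achieve testing accuracy $\tfrac12 + 1/\mathrm{poly}(n)$ on the uniformly-sampled states, contradicting Theorem~1 of Liu~\emph{et al.}\ under the classical hardness of DLP. The quantitative work here is to pick the ranges $\delta \in [0.86,1]$ and $\gamma \in [0,0.9]$ so that the ``memorization'' contribution of an agent that only does well on the few deterministically-chained states cannot by itself push the value above $V_\mathrm{rand}(s_0)+0.1$; this is where the specific constant $0.86$ and the discount cap $0.9$ come from, and I would derive the inequality that forces them. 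For quantum learnability: the quantum agent collects $\mathrm{poly}(n)$ interactions, extracts labeled samples $(s, f^*(s))$ from the rewards (the slip transitions conveniently hand it near-uniform samples), trains the Liu~\emph{et al.}\ quantum classifier to accuracy $\ge 0.99$ with probability $\ge 2/3$ (amplifiable by majority vote), and then this near-perfect classifier yields a policy with $V_q(s_0) \ge V_\mathrm{opt}(s_0) - 0.1$, again by the value/accuracy relation. Finally I would note $V_\mathrm{rand}(s_0) \le -\tfrac12$ and $V_\mathrm{opt}(s_0) = 0$ (a quick computation from the reward structure: a random policy dies quickly with a $-1$ reward, an optimal policy collects nonnegative rewards only), so the two bounds are genuinely separated.

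I expect the main obstacle to be the quantitative bookkeeping in the classical-hardness direction: unlike the fully-random SL-DLP case where $V_\pi(s_0) = 2\,\mathrm{Acc}_\pi(f^*) - 1$ exactly, here the value function mixes (i) the discounted reward along the deterministic chain, which a lookup-table-like agent could exploit without solving DLP, and (ii) the reward on the randomly-slipped-to states, which is the part governed by the classification hardness. I need to show that component (i) alone is bounded well below $V_\mathrm{rand}(s_0)+0.1$ for the allowed $(\delta,\gamma)$, so that any agent beating that threshold must be genuinely good at classifying random states, and then convert ``good at classifying a constant fraction of random states'' into ``testing accuracy $\tfrac12 + 1/\mathrm{poly}(n)$'' cleanly enough to invoke Liu~\emph{et al.}\ Handling the length of the chain, the geometric series in $\gamma$, and the uniform-slip averaging simultaneously is the part that requires care; the rest is essentially a repackaging of Lemma~\ref{lemma:separation-SL-DLP}.
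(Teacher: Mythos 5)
Your high-level strategy (relate the value at $s_0$ to classification accuracy via the uniform randomness injected by slipping and by episode initialization, tune $(\delta,\gamma)$ so that the ``memorizable'' deterministic contribution cannot push a classical agent above the threshold, and then invoke Liu \emph{et al.}\ in both directions) is the same as the paper's. However, two concrete choices in your construction would make the proof fail as written. First, you take the chain to have length $L=\mathrm{poly}(n)$, with slipping uniform \emph{over the chain}. Then all states an agent ever sees lie in a fixed polynomial-size subset of $\mathbb{Z}_p^*$, and a classical agent can learn the correct action for every chain state by trial and error over polynomially many episodes (the $\pm$ reward reveals each label), reaching near-optimal value by pure memorization. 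Consequently, ``accuracy on the uniformly-slipped-to states'' does not give testing accuracy $\tfrac12+1/\mathrm{poly}(n)$ on the DLP input space, and no DLP solver follows; your parenthetical ``(a subset of) the DLP input space'' is exactly where the reduction breaks. The paper avoids this by making the cliff consist of \emph{all} $p-1$ elements of $\mathbb{Z}_p^*$ (ordered, with circular boundary conditions), with each episode initialized in a uniformly random state, so that $V_\pi(s_0)$ equals the state-average of $V_\pi$ and can be bounded in terms of the global accuracy $\mathrm{Acc}_\pi(f^*)$.

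Second, you give a $+1$ reward for each correct step, but then assert $V_\mathrm{opt}(s_0)=0$ and $V_\mathrm{rand}(s_0)\le -\tfrac12$; with $\pm1$ rewards neither holds (the optimal value is positive and the random policy's value is essentially $0$), so the stated constants $0.86$, $0.9$, $0.1$ would not come out of your bookkeeping. The paper's environment gives reward $0$ for a correct action and $-1$ for a wrong one, which makes all values non-positive; this sign fact is what lets one drop the deterministic-successor term in the Bellman recursion and obtain the closed-form bounds $V_\pi(s_0)\le \frac{\mathrm{Acc}_\pi(f^*)-1}{1-\gamma\delta\,\mathrm{Acc}_\pi(f^*)}$ and $V_\pi(s_0)\ge \frac{\mathrm{Acc}_\pi(f^*)-1}{1-\gamma}$, together with $V_\mathrm{rand}(s_0)=-\frac{1}{2-\gamma}$ and $V_\mathrm{opt}=0$. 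The constants then follow from a monotonicity analysis of $g(x,\delta,\gamma)=\frac{x-1}{1-\delta\gamma x}+\frac{1}{2-\gamma}$ and the check $g(0.51,0.86,0.9)\approx 0.0995<0.1$, so that any classical value above $V_\mathrm{rand}(s_0)+0.1$ forces accuracy $>0.51$ (hence a DLP solver), while accuracy $0.99$ with $\gamma\le 0.9$ gives $V_q(s_0)\ge -0.1=V_\mathrm{opt}-0.1$. This quantitative core — which you explicitly defer — is precisely the content of the proof, and your proposed decomposition into a chain term plus a slip term is both unnecessary under the paper's reward convention and insufficient, as proposed, to recover the stated bounds.
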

The proof of this lemma is deferred to Appendix \ref{sec:proof-cliffwalk-DLP} for clarity.

\paragraph{Fully deterministic: the Deterministic-DLP environment.}
The simplest example of a deterministic RL environment where separation can be proven is a partially observable MDP (POMDP) defined as follows: it constitutes a 1-D chain of states of length $k+2$, where $k$ is $\text{poly}(n)$. We refer to the first $k$ states as ``training states", and we call the last two states ``test'' and ``limbo'' states, respectively.
The training states are of the form $(x, f_s(x))$, i.e., a point uniformly sampled and its label. The actions are $+1,-1$, and both lead to the same subsequent state on the chain (since the same $(x, f_s(x))$ can appear twice in the chain, this is the reason why the environment is partially observable), and no reward is given for the first $k$ states. In the test state, the agent is only given a point $x$ with no label. A correct action provides a reward of $1$ and leads to the beginning of the chain, while an incorrect action leads to the limbo state, which self-loops for both actions and has no rewards. In other words, after poly-many examples where the agent can learn the correct labeling, it is tested on one state. Failure means it will never obtain a reward.

For each concept $f_s$, we define exponentially many environments obtained by random choices of the states appearing in the chain. In a given instance, call $T=(x_0, \ldots, x_{k-1})$ the training states of that instance, $x_k$ its testing state and $l$ its limbo state. The interaction of an agent with the environment is divided into episodes of length $k+1$, but the environment keeps memory of its state between episodes. This means that, while the first episode starts in $x_0$, depending on the performance of the agent, later episodes start either in $x_0$ or in $l$. For a policy $\pi$, we define the value $V_\pi(s_0)$ as the expected reward\footnote{Note that we assume here a discount factor $\gamma=1$, but our results would also hold for an arbitrary $\gamma>0$, if we scale the reward of the testing state to $\gamma^{-k}$.} of this policy in any episode of length $k+1$ with an initial state $s_0 \in\{x_0,l\}$. Since the testing state $x_k$ is the only state to be rewarded, we can already note that $V_\pi(x_0) = \pi(f^*(x_k) | T, x_{k}) $, that is, the probability of the policy correctly labeling the testing state $x_k$ after having experienced the training states $T$. Also, since $s_0 \in\{x_0,l\}$ and $V_\pi(l) = 0$, we have $V_\pi(x_0) \geq V_\pi(s_0)$.

With this construction, we obtain the following result: 
\begin{lemma}[Quantum advantage in Deterministic-DLP]\label{lemma:separation-deterministic-DLP}
There exists a uniform family of Deterministic-DLP POMDPs (exponentially many instances for a given concept $f_s$ of the DLP classification problem) where:\\
    1) (classical hardness) if there exists a classical learning agent which, when placed in a randomly chosen instance of the environment, has value $V_c(s_0) \geq 1/2+1/\textnormal{poly}(n)$ (that is, $1/\textnormal{poly}(n)$ better than a random agent), with probability at least $0.845$ over the choice of environment and the randomness of its learning algorithm, then there exists an efficient classical algorithm to solve DLP,\\
    2) (quantum learnability) there exists a class of quantum agents that attains a value $V_q(s_0) = 1$ (that is, the optimal value) with probability at least $0.98$ over the choice of environment and randomness of the learning algorithm.
\end{lemma}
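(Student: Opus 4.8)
\textbf{Proof proposal for Lemma \ref{lemma:separation-deterministic-DLP}.}
The plan is to mirror the structure of the proofs of Lemmas \ref{lemma:separation-SL-DLP} and \ref{lemma:separation-cliffwalk-DLP}, but now the only source of randomness available for the SL-to-RL reduction lives in the random choice of environment instance, so the argument must be made ``on average over instances'' rather than ``within one instance''. I would begin by fixing a concept $f_s$ and formalizing the distribution over instances: an instance is specified by drawing $x_0, \dots, x_{k-1}, x_k$ i.i.d.\ uniformly from $\mathcal{X} = \mathbb{Z}_p^*$, recording the labels $f_s(x_i)$ in the training states, and leaving $x_k$ unlabeled as the test state. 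The key observation, already noted before the lemma statement, is that $V_\pi(x_0) = \pi(f_s(x_k)\mid T, x_k)$ and $V_\pi(x_0) \ge V_\pi(s_0)$, so bounding $V_\pi(s_0)$ from above reduces to bounding the probability that the agent correctly labels a fresh uniform test point after seeing $k = \mathrm{poly}(n)$ labeled training points --- which is exactly the testing-accuracy quantity appearing in Theorem~1 of Liu \emph{et al.}

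For \textbf{classical hardness}, I would argue by contraposition. Suppose a classical polynomial-time learning agent $\mathcal{A}$ achieves $V_c(s_0) \ge 1/2 + 1/\mathrm{poly}(n)$ with probability at least $0.845$ over the joint randomness of the instance and $\mathcal{A}$'s internal coins. Using $V_c(s_0) \le V_c(x_0) = \Pr[\mathcal{A}\text{ labels } x_k \text{ correctly} \mid T, x_k]$ and $V_c \le 1$, an averaging/Markov argument over the ``bad'' instances (those where the conditional value is small) converts the probability-$0.845$ statement into a lower bound of the form $\mathbb{E}[\,\Pr[\mathcal{A}\text{ correct on }x_k]\,] \ge 1/2 + 1/\mathrm{poly}(n)$, where the outer expectation is over instances and the inner probability is over $x_k$ and $\mathcal{A}$'s coins. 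But this expectation is precisely the expected testing accuracy of a classical PAC-style learner for the DLP concept class using $\mathrm{poly}(n)$ examples; by Theorem~1 of Liu \emph{et al.}\ any such accuracy exceeding $1/2 + 1/\mathrm{poly}(n)$ yields an efficient classical algorithm for DLP. I would need to be careful to track the exact constant ($0.845$) that makes the averaging go through --- this is where the numerology of the statement comes from, and it presumably arises from combining the failure probability allowed in the lemma with the slack between $1$ and the target value $1/2$.

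For \textbf{quantum learnability}, I would construct the quantum agent as in Lemma \ref{lemma:separation-SL-DLP}: during the first $k$ (unrewarded) steps of the first episode it simply records the training pairs $(x_i, f_s(x_i))$, feeds them to the quantum DLP classifier of Theorem~2 of Liu \emph{et al.}, and thereafter plays the label predicted by that classifier on every test state it encounters. With $k = \mathrm{poly}(n)$ samples the classifier attains testing accuracy at least $0.99$ with probability at least $2/3$; boosting by $O(\log \delta^{-1})$ independent repetitions and majority vote (the agent has polynomially many episodes to gather these, or equivalently we enlarge $k$) raises the success probability to $0.98$. Conditioned on the classifier being accurate on the \emph{particular} test point $x_k$ of the instance --- which happens with the stated probability over instance and internal randomness --- the agent labels $x_k$ correctly, never enters the limbo state, and hence attains $V_q(s_0) = 1$. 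The one subtlety is that ``accuracy $\ge 0.99$ on average over $\mathcal{X}$'' must be upgraded to ``correct on the specific uniformly-random $x_k$'' which is immediate since $x_k$ is itself drawn uniformly and independently, so the probabilities simply compose.

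The \textbf{main obstacle} I anticipate is the classical-hardness direction: unlike the fully random environments, here we cannot invoke Liu \emph{et al.}'s theorem inside a single fixed environment, so the reduction must pull the randomness out of the environment ensemble and repackage it as the sampling randomness of a PAC learner, while simultaneously handling the gap between the ``per-instance value'' and the ``probability-over-instances'' formulations and nailing down the precise success-probability constant. Getting that averaging argument tight --- and confirming that the resulting classical DLP-solver is genuinely efficient (polynomial time, polynomial sample complexity, and only polynomially many instance-simulations, all of which are simulable classically since the environment is explicitly constructed) --- is the crux; the quantum-learnability side is essentially a restatement of the SL-DLP argument.
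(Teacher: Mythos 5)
Your classical-hardness direction has a genuine gap. The central step you propose --- that the hypothesis ``$V_c(s_0)\geq 1/2+1/\mathrm{poly}(n)$ with probability $0.845$'' converts, by averaging over bad instances, into $\mathbb{E}\bigl[\Pr[\mathcal{A}\text{ correct on }x_k]\bigr]\geq 1/2+1/\mathrm{poly}(n)$ --- is false as stated: on the $0.155$-fraction of bad instances the conditional correctness can be $0$, so plain averaging only yields $\mathbb{E}\geq 0.845\,(1/2+1/\mathrm{poly}(n))\approx 0.42<1/2$, which is useless. The idea you are missing (and which the paper uses) is a per-instance \emph{boosting} step: precisely because on good instances $\pi(f^*(x_k)\mid T,x_k)$ is bounded away from $1/2$ by an inverse polynomial, one can sample $O(\log\varepsilon^{-1})$ actions from $\pi(\cdot\mid T,x_k)$ and take a majority vote, defining a classifier $f_c$ that is correct with probability $1-\varepsilon=0.999$ on every good instance. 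Only then does the expected accuracy exceed $0.845\times 0.999=0.844155>5/6+1/96$, and the paper then runs a reverse-Markov/contradiction argument: if fewer than a $2/3$ fraction of training sets $T$ had $\mathrm{Acc}_{f_c}(T)\geq 1/2+1/\mathrm{poly}(n)$, the expectation would be at most $5/6+\frac{1}{3\,\mathrm{poly}(n)}<0.844155$, a contradiction. This is what produces the statement actually consumed by the reduction (accuracy $\geq 1/2+1/\mathrm{poly}(n)$ with probability $\geq 2/3$ over the training set, via Theorem 8, Remark 1 of Liu \emph{et al.}); an expected-accuracy bound alone, even if you could obtain one above $1/2$, is not in that form. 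Your remark that the constant $0.845$ ``presumably arises'' from the slack between $1$ and $1/2$ points the right way, but without the boosting step and the $5/6+1/96$ threshold your argument does not go through.

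Your quantum-learnability direction essentially matches the paper: record the chain's training pairs, train Liu \emph{et al.}'s classifier on $M=O(\log\delta'^{-1})$ independent segments, and compose the success probability with the $0.99$ accuracy on the uniformly random $x_k$. Two small corrections: re-running episodes does \emph{not} supply fresh samples, since the chain $T$ is fixed per instance (and an early error strands you in the limbo state), so enlarging $k$ to $M\cdot\mathrm{poly}(n)$ is the only valid option of the two you mention; and boosting the classifier's success probability ``to $0.98$'' and then multiplying by $0.99$ gives only $\approx 0.97$, so you must boost to $0.98/0.99$ as the paper does to reach the stated $0.98$.
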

The proof of this lemma is deferred to Appendix \ref{sec:proof-deterministic-DLP} for clarity.

By combining our three lemmas, and taking the weakest separation claim for the cases \emph{ii)} and \emph{iii)}, we get Theorem \ref{thm:separations-DLP}.
For the interested reader, we list the following remarks, relating to the proofs of these lemmas:
\begin{itemize}[leftmargin=3mm]
	\item SL-DLP and Deterministic-DLP are the two closest environments to the DLP classification task of Liu \emph{et al.} While the value function in SL-DLP is trivially equivalent to the accuracy of the classification problem, we find the value function in Deterministic-DLP to be \emph{weaker} than this accuracy. Namely, a high accuracy trivially leads to a high value while a high (or non-trivial) value does not necessarily lead to a high (or non-trivial) accuracy (in all these cases, the high probability over the randomness of choosing the environments and of the learning algorithms is implied). This explains why the classical hardness statement for Deterministic-DLP is weaker than in SL-DLP.
	\item In Cliffwalk-DLP, it is less straightforward to relate the testing accuracy of a policy to its performance on the deterministic parts of the environment, which explains why we trivially upper bound this performance by $0$ on these parts. We believe however that these deterministic parts will actually make the learning task much harder, since they strongly restrict the part of the state space the agents can see. This claim is supported by our numerical experiments in Sec.\ \ref{sec:PQC-env}. Also, since we showed classical hardness for fully deterministic environments, it would be simple to construct a variant of Cliffwalk-DLP where these deterministic parts would be provably hard as well.
\end{itemize}

\section{Proof of Lemma \ref{lemma:separation-cliffwalk-DLP}\label{sec:proof-cliffwalk-DLP}}

Consider a slippery cliffwalk environment defined by a labeling function $f^*$ in the concept class $\mathcal{C}$ of Liu \emph{et al.} This cliffwalk has $p-1$ states ordered, w.l.o.g., in their natural order, and correct actions (the ones that do not lead to immediate ``death") $f^*(i)$ for each state $i \in \mathbb{Z}_p^*$. For simplicity of our proofs, we also consider circular boundary conditions (i.e, doing the correct action on the state $p-1$ of the cliff leads to the state $1$), random slipping at each interaction step to a uniformly sampled state on the cliff with probability $\delta>0$, an initialization of each episode in a uniformly sampled state $i\in \mathbb{Z}_p^*$, and a $0$ ($-1$) reward for doing the correct (wrong) action in any given state.

\subsection{Upper bound on the value function}

The value function of any policy $\pi$ which has probability $\pi(i)$ (we abbreviate $\pi(f^*(i)|i)$ to $\pi(i)$) of doing the correct action in state $i \in \mathbb{Z}_p^*$ is given by:
\begin{equation}\label{eq:value-fct-cliffwalk-DLP}
    V_\pi(i) = \pi(i) \gamma  \left((1-\delta)V_\pi(i+1) + \delta \frac{1}{p-1}\sum_{j=1}^{p-1} V_\pi(j)\right) - (1-\pi(i))
\end{equation}
Since this environment only has negative rewards, we have that $V_\pi(i)\leq 0$ for any state $i$ and policy $\pi$, which allows us to write the following inequality:
\begin{equation*}
    V_\pi(i) \leq \pi(i) \gamma \left(\delta \frac{1}{p-1}\sum_{j=1}^{p-1} V_\pi(j)\right) - (1-\pi(i))
\end{equation*}
We use this inequality to bound the following term:
\begin{align*}
    \frac{1}{p-1}\sum_{i=1}^{p-1} V_\pi(i) &\leq \frac{1}{p-1} \sum_{i=1}^{p-1}\left( \pi(i)  \frac{\gamma \delta}{p-1}\sum_{j=1}^{p-1} V_\pi(j) - (1-\pi(i))\right)\\
    &= \left(\frac{1}{p-1} \sum_{i=1}^{p-1} \pi(i)\right)\left(  \frac{\gamma\delta}{p-1}\sum_{j=1}^{p-1} V_\pi(j) + 1\right) - 1
\end{align*}
We note that the first factor is exactly the accuracy of the policy $\pi$ on the classification task of Liu \emph{et al.}:
\begin{equation*}
    \text{Acc}_\pi(f^*) = \frac{1}{p-1} \sum_{i=1}^{p-1} \pi(i).
\end{equation*}
We hence have:
\begin{equation*}
    \frac{1}{p-1}\sum_{i=1}^{p-1} V_\pi(i) \leq \text{Acc}_\pi(f^*) \left( \gamma\delta\frac{1}{p-1}\sum_{j=1}^{p-1} V_\pi(j) + 1 \right) - 1
\end{equation*}
which is equivalent to:
\begin{equation*}
    \frac{1}{p-1}\sum_{i=1}^{p-1} V_\pi(i) \leq \frac{\text{Acc}_\pi(f^*)-1}{1-\text{Acc}_\pi(f^*)\gamma\delta}
\end{equation*}
when $\text{Acc}_\pi(f^*)\gamma\delta < 1$.\\
We now note that this average value function is exactly the value function evaluated on the initial state $s_0$ of the agent, since this state is uniformly sampled from $\mathbb{Z}_p^*$ for every episode. Hence,
\begin{equation}\label{eq:upper-bound-value-fct}
    V_\pi(s_0) \leq \frac{\text{Acc}_\pi(f^*)-1}{1-\text{Acc}_\pi(f^*)\gamma\delta}
\end{equation}

\subsection{Lower bound on the value function}

Again, by noting in Eq.\ (\ref{eq:value-fct-cliffwalk-DLP}) that we have $V_\pi(i)\leq 0$ and $\pi(i)\leq 1$ for any policy $\pi$ and state $i \in \mathbb{Z}_p^*$, we have:
\begin{equation*}
    V_\pi(i) \geq \gamma \left((1-\delta)V_\pi(i+1) +  \frac{\delta}{p-1}\sum_{j=1}^{p-1} V_\pi(j)\right) - (1-\pi(i))
\end{equation*}
We use this inequality to bound the value function at the initial state $s_0$:
\begin{align*}
    V_\pi(s_0) &= \frac{1}{p-1}\sum_{i=1}^{p-1} V_\pi(i)\\
     &\geq\gamma \left(\frac{1-\delta}{p-1}\sum_{i=1}^{p-1} V_\pi(i+1) + \frac{\delta}{p-1}\sum_{j=1}^{p-1} V_\pi(j)\right)+ \frac{1}{p-1}\sum_{i=1}^{p-1} \pi(i) - 1\\
     &= \gamma \left((1-\delta)V_\pi(s_0) + \delta V_\pi(s_0)\right) + \text{Acc}_\pi(f^*) - 1\\
     &= \gamma V_\pi(s_0) + \text{Acc}_\pi(f^*) -1
\end{align*}
by using the circular boundary conditions of the cliffwalk in the third line.\\
This inequality is equivalent to:
\begin{equation}\label{eq:lower-bound-value-fct}
    V_\pi(s_0) \geq \frac{\text{Acc}_\pi(f^*) -1}{1-\gamma}
\end{equation}
when $\gamma < 1$.

\subsection{Bounds for classical hardness and quantum learnability}
We use the bounds derived in the two previous sections to prove classical hardness and quantum learnability of this task environment, as stated in Lemma \ref{lemma:separation-cliffwalk-DLP}.

For this, we start by noting the following expression for the value function of a random policy (one that does random actions in all states):
\begin{align*}
    V_\text{rand}(s_0) &= \frac{\gamma}{2}\left(\frac{1-\delta}{p-1}\sum_{i=1}^{p-1} V_\text{rand}(i+1) + \frac{\delta}{p-1}\sum_{j=1}^{p-1} V_\text{rand}(j)  \right) -\frac{1}{2}\\
    &= \frac{\gamma}{2}V_\text{rand}(s_0) - \frac{1}{2} = -\frac{1}{2-\gamma}
\end{align*}
again due to the circular boundary conditions of the cliffwalk and the resulting absence of termination conditions outside of ``death".\\
As for the value function of the optimal policy, this is trivially $V_\text{opt} = 0$.

\subsubsection{Proof of classical hardness}

For any policy $\pi$, we define the function $g(x,\delta,\gamma) = V(x,\delta,\gamma) - V_\text{rand}(\gamma)$, where we adopt the short-hand notation $x = \text{Acc}_\pi(f^*)$ and call $V$ the upper bound on the value function $V_\pi(s_0)$ of $\pi$. The expression of $g(x,\delta,\gamma)$ (for $(x, \delta, \gamma) \neq (1,1,1)$) is given by:
\begin{equation}
    g(x,\delta,\gamma) = \frac{x-1}{1-\delta\gamma x} + \frac{1}{2-\gamma}
\end{equation}
To prove classical hardness, it is sufficient to show that $x \leq 0.51$ implies $g(x,\delta,\gamma) \leq 0.1$ for $\delta \in [\delta_0,1]$, $\gamma \in [0,\gamma_1]$ and a $\{\delta_0,\gamma_1\}$ pair of our choosing. To see this, notice that the contraposition gives $x = \text{Acc}_\pi(f^*) > 0.51$ which is sufficient to construct an efficient algorithm that solves DLP. To achieve this result, we show the three following inequalities, $\forall\ x \leq 0.51$ and $\forall\ (\delta,\gamma) \in [\delta_0,1]\times[0,\gamma_1]$:
\begin{equation*}
	 g(x,\delta,\gamma) \overset{(i)}{\leq} g(0.51,\delta,\gamma) \overset{(ii)}{\leq} g(0.51,\delta_0,\gamma)\overset{(iii)}{\leq} g(0.51,\delta_0,\gamma_1)
\end{equation*}
where $\delta_0$ and $\gamma_1$ are chosen such that $g(0.51,\delta_0,\gamma_1) \leq 0.1$.
\begin{proof}[Proof of (i)]
    We look at the derivative of $g$ w.r.t.\ $x$:
    \begin{equation*}
        \frac{\partial g(x,\delta,\gamma)}{\partial x} = \frac{1-\delta\gamma}{(1-\delta\gamma x)^2} \geq 0\quad \forall (x,\delta,\gamma) \in [0,1]^3\backslash(1,1,1)
    \end{equation*}
    and hence $g$ is an increasing function of $x$, which gives our inequality.
\end{proof}
\begin{proof}[Proof of (ii)]
    We look at the derivative of $g$ w.r.t.\ $\delta$:
    \begin{equation*}
        \frac{\partial g(x,\delta,\gamma)}{\partial \delta} = \frac{\gamma(x-1)x}{(1-\delta\gamma x)^2} \leq 0\quad \forall (x,\delta,\gamma) \in [0,1]^3\backslash(1,1,1)
    \end{equation*}
    and hence $g$ is a decreasing function of $\delta$, which gives our inequality.
\end{proof}
\begin{proof}[Proof of (iii)]
    We look at the derivative of $g$ w.r.t.\ $\gamma$:
    \begin{equation*}
        \frac{\partial g(x,\delta,\gamma)}{\partial \gamma} = \frac{\delta(x-1)x}{(1-\delta\gamma x)^2} + \frac{1}{(2-\gamma)^2} \quad \forall (x,\delta,\gamma) \in [0,1]^3\backslash(1,1,1)
    \end{equation*}
    We have:
    \begin{equation*}
        \begin{gathered}
        \frac{\partial g(x,\delta,\gamma)}{\partial \gamma} \geq 0 \Leftrightarrow \left((\delta x)^2+\delta(x^2-x)\right)\gamma^2 - 2\delta(2x^2-x)\gamma +4\delta(x^2-x)+1 \geq 0
        \end{gathered}
    \end{equation*}
    By setting $x=0.51$ and $\delta=0.86$, we find 
    \begin{equation*}
        \frac{\partial g(0.51,0.86,\gamma)}{\partial \gamma} \geq 0 \quad \forall \gamma \in [0,1] 
    \end{equation*}
    since the roots of the second-degree polynomial above are approximately $\{-2.91, 2.14\}$ and we have $(\delta x)^2+\delta(x-1)x \approx -0.0225 < 0$.\\
    Hence $g(0.51,\delta_0,\gamma)$ is an increasing function of $\gamma$, which gives our inequality.
    \end{proof}
    
Given that $g(0.51,0.86,0.9) \approx 0.0995 < 0.1$, we then get our desired result for $\delta_0 = 0.86$ and $\gamma_1 = 0.9$. Noting that $V_\pi(s_0)-V_\text{rand}(\gamma) \leq g(x,\delta,\gamma) \leq 0.1$ from Eq.\ (\ref{eq:upper-bound-value-fct}), we hence have classical hardness $\forall\ (\delta,\gamma) \in [\delta_0,1]\times[0,\gamma_1]$.

\subsubsection{Proof of quantum learnability}
Proving quantum learnability is more trivial, since, for $\text{Acc}_\pi(f^*) \geq 0.99$ and $\gamma \leq 0.9$, we directly have, using Eq.\ (\ref{eq:lower-bound-value-fct}):
\begin{equation*}
    V_\pi(s_0) \geq -0.1 = V_\text{opt} -0.1
\end{equation*}
To conclude this proof, we still need to show that we can obtain in this environment a policy $\pi$ such that $\text{Acc}_\pi(f^*) \geq 0.99$ with high probability. For that, we use agents that first collect $\text{poly}(n)$ \emph{distinct} samples (states $s$ and their inferred labels $f^*(s)$) from the environment (distinct in order to avoid biasing the distribution of the dataset with the cliffwalk temporal structure). This can be done efficiently in $\text{poly}(n)$ interactions with the environment, since each episode is initialized in a random state $s_0 \in \mathbb{Z}_p^*$. We then use the learning algorithm of Liu \emph{et al.}\ to train a classifier $\pi$ with the desired accuracy, with high probability.

\section{Proof of Lemma \ref{lemma:separation-deterministic-DLP}\label{sec:proof-deterministic-DLP}}

\subsection{Proof of classical hardness}

Suppose that a polynomial-time classical agent achieves a value $V_c(s_0) \geq \frac{1}{2} + \frac{1}{\text{poly}(n)}$ with probability $(1-\delta)$ over the choice of environment and the randomness of its learning algorithm. We call ``success" the event $V_c(s_0) \geq \frac{1}{2} + \frac{1}{\text{poly}(n)}$ and $S_{\delta}$ the subset of the instances $S = \{ T,x_k\}$ for which, theoretically, a run of the agent would ``succeed" (this is hence a set that depends on the randomness of the agent).

Note that, on every instance in $S_\delta$, $\pi(f^*(x_k) | T, x_{k}) = V_c(x_0) \geq V_c(s_0) \geq \frac{1}{2} + \frac{1}{\text{poly}(n)}$. Since this probability is bounded away from $1/2$ by an inverse polynomial, this means that we can ``boost" it to a larger probability $(1-\varepsilon)$. More specifically, out of the policy $\pi$ obtained after interacting for $k$ steps with the environment,  we define a classifier $f_c$ acting on $x_k$ such that we sample $\mathcal{O}\left(\log(\varepsilon^{-1})\right)$-many times from $\pi(a | T, x_{k})$ and label $x_k$ by majority vote. For the instances in $S_\delta$, the probability of correctly labeling $x_k$ is $\text{Pr}\left[ f_c(x_k) = f^*(x_k)\right] \geq 1-\varepsilon$.

Define $P(T)=\text{Pr}[\text{T}=T]$ and $P(x_k)=\text{Pr}[\text{x}_\text{k}=x_k]$ the probabilities of sampling certain training states $T$ and a testing state $x_k$, when choosing an instance of the environment. We now look at the following quantity:
\begin{align*}
    \mathbb{E}_{P(T)}\left[ \text{Acc}_{f_c}(T) \right] &= \sum_{T} P(T) \sum_{x_k} P(x_k) \text{Pr}\left[ f_c(x_k) = f^*(x_k) | T,x_k\right]\\
    &= \sum_{T,x_k} P(T,x_k) \text{Pr}\left[ f_c(x_k) = f^*(x_k) | T,x_k\right]\\
    &\geq \sum_{T,x_k} P(T,x_k) \text{Pr}\big[\text{success}|T,x_k\big] \times \text{Pr}\big[ f_c(x_k) = f^*(x_k) | T,x_k, \text{success}\big]\\\
    &\geq (1-\delta) (1-\varepsilon)
\end{align*}
since $\text{Pr}\left[ f_c(x_k) = f^*(x_k) | T, x_k\right] \geq 1-\varepsilon$ for instances in $S_\delta$ and $\sum_{T,x_k} P(T,x_k) \text{Pr}\big[\text{success}|T,x_k\big] \allowbreak\geq 1-\delta$ by definition.\\
In the following, we set $1-\varepsilon = 0.999$ and $1-\delta \geq 0.845$ (the reason for this becomes apparent below), such that:
\begin{equation}\label{eq:bound-avg-accuracy}
    \mathbb{E}_{P(T)}\left[ \text{Acc}_{f_c}(T) \right] \geq 0.844155 > \frac{5}{6} + \frac{1}{96}
\end{equation}

Now, consider the following learning algorithm: given a training set $T$, construct a Deterministic-DLP environment that uses this $T$ and a randomly chosen $x_k$, and define the classifier $f_c$ that boosts the $\pi(a | T, x_{k})$ obtained by running our classical agent on this environment (as explained above). We want to show that $f_c$ has accuracy $\text{Acc}_{f_c}(T) \geq \frac{1}{2} + \frac{1}{\text{poly}(n)}$ with probability at least $2/3$ over the choice of $T$ and the randomness of its construction, which is sufficient to solve DLP classically. For that, we show a stronger statement. Call $\mathcal{T}_\text{succ}$ the subset of all instances of training states $\mathcal{T} = \left\{ T \right\}$ for which $\text{Acc}_{f_c}(T) \geq \frac{1}{2} + \frac{1}{\text{poly}(n)}$. We prove by contradiction that $\abs{\mathcal{T}_\text{succ}} \geq \frac{2\abs{\mathcal{T}}}{3}$:\\
Assume $\abs{\mathcal{T}_\text{succ}} < \frac{2\abs{\mathcal{T}}}{3}$, then
\begin{align*}
    \mathbb{E}_{P(T)}\left[ \text{Acc}_{f_c}(T) \right] &= \sum_{T} P(T) \text{Acc}_{f_c}(T) \\
    &= \frac{1}{\abs{\mathcal{T}}} \left( \sum_{T\in \mathcal{T}_\text{succ}} \text{Acc}_{f_c}(T) + \sum_{T\notin \mathcal{T}_\text{succ}} \text{Acc}_{f_c}(T) \right)\\
    &< \frac{\abs{\mathcal{T}_\text{succ}}}{\abs{\mathcal{T}}} \times 1 + \frac{\abs{\mathcal{T}}-\abs{\mathcal{T}_\text{succ}}}{\abs{\mathcal{T}}}\left( \frac{1}{2} + \frac{1}{\text{poly}(n)} \right)\\
    &< \frac{5}{6} + \frac{1}{3\text{poly}(n)} < 0.844155
\end{align*}
for large enough $n$, in contradiction with Eq.\ (\ref{eq:bound-avg-accuracy}).

Hence, with probability at least $2/3$ over the choice of training states and the randomness of the learning algorithm, our constructed classifier has accuracy $\text{Acc}_{f_c}(T) \geq \frac{1}{2} + \frac{1}{\text{poly}(n)}$. By using Theorem 8, Remark 1 of Liu \emph{et al.}, this is sufficient to construct an efficient classical algorithm that solves DLP.

\subsection{Proof of quantum learnability}
Using the learning algorithm of Liu \emph{et al.}, we can construct a quantum classifier that achieves accuracy $\text{Acc}_{q}(T) \geq 0.99$ with probability at least $2/3$ over the randomness of the learning algorithm and the choice of training states $T$, of length $\abs{T} = \text{poly}(n)$. Now define instead training states $T$ of length $\abs{T}=M\text{poly}(n)$, for $M=\mathcal{O}\left(\log(\delta'^{-1})\right)$ (hence $\abs{T}$ is still polynomial in $n$), and use each of the $M$ segments of $T$ to train $M$ independent quantum classifiers. Define $f_q$ as a classifier that labels $x_k$ using a majority vote on the labels assigned by each of these classifiers. This constructed classifier has accuracy $\text{Acc}_{f_q}(T) \geq 0.99$ with now probability $1-\delta'$ over the choice of training states $T$ and the randomness of the learning algorithm.

We then note that, by calling ``success" the event $\text{Acc}_{f_q}(T) \geq 0.99$, we have:
\begin{align*}
    \sum_{T,x_k} P(T,x_k) &\text{Pr}\big[V_q(x_0)=1 | T,x_k\big]\\ 
    &\geq  \sum_T P(T) \sum_{x_k} P(x_k) \text{Pr}\big[ \text{success} | T \big] \times \text{Pr}\big[V_q(x_0)=1 | T,x_k, \text{success}\big]\\
    &= \sum_T P(T)\text{Pr}\big[ \text{success} | T \big] \sum_{x_k} P(x_k) \times \text{Pr}\big[f_q(x_k) = f^*(x_k) | T,x_k, \text{success}\big]\\
    &=  \sum_T P(T)\text{Pr}\big[ \text{success} | T \big] \text{Acc}_{f_q}(T)\\
    &\geq (1-\delta')\times0.99
\end{align*}
which means that our constructed agent achieves a value $V_q(x_0)=1$ (which also implies $V_q(s_0)=1$) with probability at least $(1-\delta')\times0.99$ over the choice of environment and the randomness of the learning algorithm. By setting $(1-\delta') = 0.98/0.99$, we get our statement.

\section{Construction of a PQC agent for the DLP environments\label{sec:PQC-agent-DLP}}

In the two following appendices, we construct a PQC classifier that can achieve close-to-optimal accuracy in the classification task of Liu \emph{et al.} \cite{liu20} (see Appendix \ref{sec:DLP-task}), and can hence also be used as a learning model in the DLP environments defined in Sec.\ \ref{sec:DLP-RL}.

\subsection{Implicit v.s.\ explicit quantum SVMs}

To understand the distinction between the quantum learners of Liu \emph{et al.} and the PQC policies we are constructing here, we remind the reader of the two models for quantum SVMs defined in Ref.\ \cite{schuld19b}: the explicit and the implicit model. Both models share a feature-encoding unitary $U(x)$ that encodes data points $x$ into feature state $\ket{\phi(x)}=U(x)\ket{0^{\otimes n}}$.\\
In the implicit model, one first evaluates the kernel values
\begin{equation}
	K(x_i,x_j) = \abs{\braket{\phi(x_i)}{\phi(x_j)}}^2
\end{equation}
for the feature states associated to every pair of data points $\{x_i,x_j\}$ in the dataset, then uses the resulting kernel matrix in a classical SVM algorithm. This algorithm returns a hyperplane classifier in feature space, defined by its normal vector $\bra{\bm{w}} = \sum_{i}\alpha_i \bra{\phi(x_i)}$ and bias $b$, such that the sign of $\abs{\braket{\bm{w}}{\phi(x)}}^2 + b$ gives the label of $x$.\\
In the explicit model, the classifier is instead obtained by training a parametrized $\ket{\bm{w}_{\bm{\theta}}}$. Effectively, this classifier is implemented by applying a variational unitary $V(\bm{\theta})$ on the feature states $\ket{\phi(x)}$ and measuring the resulting quantum states using a fixed observable, with expectation value $\abs{\braket{\bm{w}_{\bm{\theta}}}{\phi(x)}}^2$.

In the following sections, we describe how the implicit quantum SVMs of Liu \emph{et al.} can be transformed into explicit models while guaranteeing that they can still represent all possible optimal policies in the DLP environments. And in Appendix \ref{sec:training-SL-DLP}, we show that, even under similar noise considerations as Liu \emph{et al.}, these optimal policies can also be found using $\text{poly}(n)$ random data samples.

\subsection{Description of the PQC classifier}

As we just described, our classifier belongs to a family of so-called explicit quantum SVMs. It is hence described by a PQC with two parts: a feature-encoding unitary $U(x)$, which creates features $\ket{\phi(x)} = U(x) \ket{0^{\otimes n}}$ when applied to an all-0 state, followed by a variational circuit $V(\params)$ parametrized by a vector $\params$. The resulting quantum state is then used to measure the expectation value $\expval{O}_{x,\params}$ of an observable $O$, to be defined. We rely on the same feature-encoding unitary $U(x)$ as the one used by Liu \emph{et al.}, i.e., the unitary that creates feature states of the form
\begin{equation}
    \ket{\phi({x})} = \frac{1}{\sqrt{2^k}} \sum^{2^k - 1}_{i = 0} \ket{x \cdot g^{i}}
\end{equation}
for $k=n - t \log(n)$, where $t$ is a constant defined later, under noise considerations. This feature state can be seen as the uniform superposition of the image (under exponentiation $s' \mapsto g^{s'}$) of an interval of integers $[\log_{g}(x), \log_{g}(x)+ 2^k - 1]$ in log-space. Note that $U(x)$ can be implemented in $\widetilde{\mathcal{O}}({n^3})$ operations \cite{liu20}.

By noting that every labeling functions $f_s \in \mathcal{C}$ to be learned (see Eq.\ (\ref{eq:labeling-fct-DLP})) is delimiting two equally-sized intervals of $\log(\mathbb{Z}_p^*)$, we can restrict the decision boundaries to be learned by our classifier to be half-space dividing hyperplanes in log-space. In feature space, this is equivalent to learning separating hyperplanes that are normal to quantum states of the form:
\begin{equation}
    \ket{\phi_{s'}} = \frac{1}{\sqrt{(p-1)/2}} \sum^{(p-3)/2}_{i = 0} \ket{g^{s'+i}}.
\end{equation}
Noticeably, for input points $x$ such that $\log_g(x)$ is away from some delimiting regions around $s$ and $s+\frac{p-3}{2}$, we can notice that the inner product $\abs{\braket{\phi(x)}{\phi_s}}^2$ is either $\Delta = \frac{2^{k+1}}{p-1}$ or $0$, whenever $x$ is labeled $+1$ or $-1$ by $f_s$, respectively. This hence leads to a natural classifier to be built, assuming overlaps of the form $\abs{\braket{\phi(x)}{\phi_{s'}}}^2$ can be measured:
\begin{equation}\label{eq:PQC-classifier}
h_{s'}(x)=
    \begin{cases}
    1, &\text{if } \abs{\braket{\phi(x)}{\phi_{s'}}}^2 /\Delta \geq 1/2, \\
    -1, &\text{otherwise}
    \end{cases}
\end{equation}
which has an (ideal) accuracy $1-\Delta$ whenever $s'=s$.

To complete the construction of our PQC classifier, we should hence design the composition of its variational part $V(\params)$ and measurement $O$ such that they result in expectation values of the form $\expval{O}_{x,\params} = \abs{\braket{\phi(x)}{\phi_{s'}}}^2$. To do this, we note that, for $\ket{\phi_{s'}} = \hat{V}(s') \ket{0}$, the following equality holds:
\begin{align*}
	 \abs{\braket{\phi(x)}{\phi_{s'}}}^2 &= \abs{\bra{0^{\otimes n}} \hat{V}(s')^\dagger U({{x_i}}) \ket{0^{\otimes n}}}^2\\
	 	&= \text{Tr}\left[ \ket{0^{\otimes n}}\bra{0^{\otimes n}}  \rho(x,s')  \right]
\end{align*}
where $\rho(x,s') = \ket{\psi(x,s')}\bra{\psi(x,s')}$ is the density matrix of the quantum state $\ket{\psi(x,s')} = \hat{V}(s')^\dagger U({{x_i}}) \ket{0^{\otimes n}}$. Hence, an obvious choice of variational circuit is $V(\params) = \hat{V}(s')$, combined with a measurement operator $O =  \ket{0^{\otimes n}}\bra{0^{\otimes n}}$. Due to the similar nature of $\ket{\phi_s'}$ and $\ket{\phi(x)}$, it is possible to use an implementation for $\hat{V}(s')$ that is similar to that of $U(x_i)$ (take $x_i = g^{s'}$ and $k\approx n/2$).\footnote{Note that we write $\hat{V}(s')$ and $U_{s'}$ to be parametrized by $s'$ but the true variational parameter here is $g^{s'}$, since we work in input space and not in log-space.} We also note that, for points $x$ such that $\log_g(x)$ is $(p-1)\Delta/2$ away from the boundary regions of $h_{s'}$, the non-zero inner products $\abs{\braket{\phi(x)}{\phi_{s'}}}^2$ are equal to $\Delta = \mathcal{O} (n^{-t})$. These can hence be estimated efficiently to additive error, which allows to efficiently implement our classifier $h_{s'}$ (Eq.\ (\ref{eq:PQC-classifier})).

\subsection{Noisy classifier}

In practice, there will be noise associated with the estimation of the inner products  $\abs{\braket{\phi(x)}{\phi_{s'}}}^2$, namely due to the additive errors associated to sampling. Similarly to Liu \emph{et al.}, we model noise by introducing a random variable $e_{is'}$ for each data point $x_i$ and variational parameter $g^{s'}$, such that the estimated inner product is $\abs{\braket{\phi(x_i)}{\phi_{s'}}}^2+ e_{is'}$. This random variable satisfies the following equations:
\begin{equation*}
	\begin{cases}
    \ e_{is'} \in [-\Delta, \Delta] \\
    \ \mathbb{E}[e_{is'}] = 0 \\
    \ \text{Var}[e_{is'}] \leq 1/R
    \end{cases}
\end{equation*}
where $R$ is the number of circuit evaluations used to estimate the inner product. We hence end up with a noisy classifier:
\begin{equation*}
\widetilde{h}_{s'}(x_i)=
    \begin{cases}
    1, &\text{if } \left(\abs{\braket{\phi(x_i)}{\phi_{s'}}}^2 +e_{is'}\right)/\Delta \geq 1/2, \\
    -1, &\text{otherwise}
    \end{cases}
\end{equation*}

The noise has the effect that some points which would be correctly classified by the noiseless classifier have now a non zero probability of being misclassified. To limit the overall decrease in classification accuracy, we focus on limiting the probability of misclassifying points $x_i$ such that $\log_g(x_i)$ is $(p-1)\Delta/2$ away from the boundary points $s'$ and $s' + \frac{p-3}{2}$ of $g_{s'}$. We call $I_{s'}$ the subset of $\mathbb{Z}_p^*$ comprised of these points. For points in $I_{s'}$, the probability of misclassification is that of having $|e_{is'}|\geq\Delta/2$. We can use Chebyshev's inequality to bound this probability:
\begin{equation}\label{eq:prb-mismatch}
    \text{Pr}\left(\abs{e_{is'}}\geq \frac{\Delta}{2}\right)\leq\frac{4}{\Delta^2 R}
\end{equation}
since $\mathbb{E}[e_{is'}] = 0$ and $\text{Var}[e_{is'}] \leq 1/R$.

\section{Proof of trainability of our PQC agent in the SL-DLP environment\label{sec:training-SL-DLP}}

In this Appendix, we describe an optimization algorithm to train the variational parameter $g^{s'}$ of the PQC classifier we defined in Appendix \ref{sec:PQC-agent-DLP}. This task is non-trivial for three reasons: 1) even by restricting the separating hyperplanes accessible by our classifier, there are still $p-1$ candidates, which makes an exhaustive search for the optimal one intractable; 2) noise in the evaluation of the classifier can potentially heavily perturb its loss landscape, which can shift its global minimum and 3) decrease the testing accuracy of the noisy classifier. Nonetheless, we show that all these considerations can be taken into account for a simple optimization algorithm, such that it returns a classifier with close-to-optimal accuracy with high probability of success. More precisely, we show the following Theorem:
\begin{theorem}\label{thm:perf-PQC}
For a training set of size $n^c$ such that $c\geq \max\left\{\log_n(8/\delta), \log_n\left(\frac{\log(\delta/2)}{\log(1-2n^{-t})}\right)\right\}$ for $t\geq\max\left\{3\log_n(8/\delta),\log_n(16/\varepsilon)\right\}$ in the definition of $\Delta$, and a number of circuit evaluations per inner product $R \geq \max\left\{ \frac{4n^{2(t+c)}}{\delta}, \frac{128}{\varepsilon^3} \right\}$, then our optimization algorithm returns a noisy classifier $\widetilde{h}_{s'}$ with testing accuracy $\textnormal{Acc}_{\widetilde{h}_{s'}}(f_s)$ on the DLP classification task of \emph{Liu} et al.\ such that
\begin{equation*}
	\textnormal{Pr}\left(\textnormal{Acc}_{\widetilde{h}_{s'}}(f_s) \geq 1-\varepsilon \right) \geq 1-\delta.
\end{equation*}
\end{theorem}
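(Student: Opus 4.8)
The plan is to exhibit and analyze a concrete optimization algorithm and to split the error into three independent budgets — the quality of the best candidate hyperplane the algorithm can reach, the statistical fluctuation of empirical accuracies, and the additive sampling noise $e_{is'}$ — choosing $n^c$, $t$ (hence $\Delta = 2^{k+1}/(p-1) = \Theta(n^{-t})$), and $R$ so that each budget costs at most a constant fraction of $\varepsilon$ and each failure mode at most a constant fraction of $\delta$. The algorithm I have in mind: draw a training set $T = \{(x_i, f_s(x_i))\}_{i=1}^{n^c}$ of i.i.d.\ uniform points; for each $x_i$ form the explicit classifier $h_{s'}$ of Eq.~(\ref{eq:PQC-classifier}) whose variational parameter is $g^{s'} = x_i$ (its log-space boundary is $\log_g x_i$), together with the shifted/label-flipped variant so both sides of a boundary are represented; estimate the empirical accuracy of each of the $O(n^c)$ candidates on $T$ using the noisy overlap estimator with $R$ shots per inner product; and return the candidate with the largest empirical accuracy. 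Drawing candidates from $T$ reduces the intractable search over $p-1$ hyperplanes to $\mathrm{poly}(n)$ many, at the price of having to certify that one of them is near-optimal.

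A uniform point has $\log_g x$ inside the width-$\Theta(2^k) = \Theta((p-1)\Delta)$ window around a boundary of $f_s$ with probability $\Theta(\Delta) = \Theta(n^{-t})$, so the probability that no training point lands there is $(1-\Theta(n^{-t}))^{n^c}$; forcing this below $\delta/2$ is exactly the stated condition $c \ge \log_n\!\big(\log(\delta/2)/\log(1-2n^{-t})\big)$, and on that event some candidate $h_{s^\star}$ has $|s^\star - s| = O(2^k)$ and hence disagrees with $f_s$ only on the $O(\Delta)$-fraction of points near the two boundaries, which is $\le \varepsilon/8$ once $t \ge \log_n(16/\varepsilon)$. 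To use $T$ simultaneously to generate and to score candidates, I would invoke uniform convergence over the class $\{h_{s'}\}_{s' \in \mathbb{Z}_{p-1}}$ — a class of (complemented) arcs on a cycle with constant VC dimension — so that $\sup_{s'}|\widehat{\mathrm{Acc}}(s') - \mathrm{Acc}(s')| \le \varepsilon/8$ with probability $\ge 1-\delta/4$ whenever $n^c$ exceeds a suitable $\mathrm{poly}(1/\varepsilon, \log(1/\delta))$, for which $n^c \ge 8/\delta$ is the quoted sufficient bound.

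For a point outside the transition region the noiseless overlap is exactly $0$ or $\Delta$, so the noisy classifier agrees with the noiseless one whenever $|e_{is'}| < \Delta/2$; by Chebyshev (Eq.~(\ref{eq:prb-mismatch})) this fails with probability $\le 4/(\Delta^2 R) = O(n^{2t}/R)$, and a union bound over $\le n^c$ training points and $O(n^c)$ candidates gives failure probability $O(n^{2(t+c)}/R)$, matched to $\delta$ by $R \ge 4 n^{2(t+c)}/\delta$; a separate, coarser accounting of the $O(\Delta)$-fraction of transition points — whose noiseless overlaps lie in $[0,\Delta]$ and which cannot be controlled point-by-point once they are within $\Delta/2$ of the threshold — bounds their contribution to the test-accuracy loss of $\widetilde h_{s'}$ and yields the remaining requirement $R \ge 128/\varepsilon^3$. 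Intersecting the good events (total failure probability $\le \delta$), the quantity the algorithm maximizes is within $O(\varepsilon)$ of the noiseless test accuracy for every candidate (combining the generalization bound with the noisy-to-noiseless comparison on $T$), so the returned candidate has noiseless test accuracy at least $\mathrm{Acc}(h_{s^\star}) - O(\varepsilon) \ge 1 - \varepsilon/2$, and adding the transition-region loss from actually evaluating it with noise gives $\mathrm{Acc}_{\widetilde h_{s'}}(f_s) \ge 1-\varepsilon$.

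I expect the main obstacle to be the transition-region analysis and the coupling it forces between parameters: near each hyperplane boundary the overlaps vary smoothly through the decision threshold $\Delta/2$, so those points are both intrinsically ambiguous for the noiseless classifier and maximally sensitive to $e_{is'}$, and bounding their aggregate effect on accuracy (rather than per point, where Chebyshev is vacuous) is what simultaneously pins down $t$ (small $\Delta$, but not so small that $R$ explodes), $R$ (enough shots, in two different regimes), and $c$ (enough data). Making all the constants close at once while keeping the three failure probabilities summing to $\delta$ and the three accuracy losses summing to $\varepsilon$ is the delicate bookkeeping; the quantum ingredients — implementing $U(x)$ and $\hat V(s')$ and measuring $\abs{\braket{\phi(x)}{\phi_{s'}}}^2$ to additive error — are already in place from Appendix~\ref{sec:PQC-agent-DLP} and the results of Liu \emph{et al.}
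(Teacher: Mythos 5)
Your algorithm is essentially the paper's (candidates drawn from the training set, each scored on the training set via noisy overlap estimates with $R$ shots, best one returned), and several of your budget matches are exactly right: the event that some training point lands within the width-$n^{-t}$ window of $s$ in log-space gives the condition $c\geq \log_n\!\big(\log(\delta/2)/\log(1-2n^{-t})\big)$, and the Chebyshev-plus-union-bound over the $n^{2c}$ (point, candidate) pairs gives $R\geq 4n^{2(t+c)}/\delta$, just as in the paper. The genuine gap is your middle step. You invoke uniform convergence over the class of arcs to get $\sup_{s'}\lvert\widehat{\mathrm{Acc}}(s')-\mathrm{Acc}(s')\rvert\leq\varepsilon/8$ with probability $1-\delta/4$ and assert that $n^c\geq 8/\delta$ is the quoted sufficient sample size for this. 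That cannot be right: an additive uniform deviation of $\varepsilon/8$ requires a sample size that grows with $1/\varepsilon$ (at least $\Omega(\varepsilon^{-1})$ even with realizable-case bounds, $\Omega(\varepsilon^{-2})$ agnostically), whereas $8/\delta$ carries no $\varepsilon$-dependence whatsoever. In the paper the condition $c\geq\log_n(8/\delta)$ plays a completely different role: combined with $t\geq 3c$ (so $\Delta\leq 4n^{-3c}$), a union bound over all $n^{2c}$ pairs shows that with probability $\geq 1-\delta/4$ \emph{no} training point falls in the boundary window of \emph{any} candidate classifier. On that event, together with the small-noise event, the noisy and noiseless empirical losses agree exactly on every candidate, so their argmins coincide (Lemma \ref{lemma:no-mismatch}); no empirical-to-true accuracy transfer is ever needed.

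The second, related omission is that you never establish that the selected candidate is geometrically close to $s$. The paper's route is: argmin of the noisy loss $=$ argmin of the noiseless loss (Lemma \ref{lemma:no-mismatch}), and the noiseless minimizer is within $n^{-t}\leq\Delta/2$ (as a fraction of $p-1$) of $s$ (Lemma \ref{lemma:PAC-proof} applied with $\varepsilon'=n^{-t}$), whence $I_{s'}=I_s$ covers a $(1-2\Delta)$-fraction of inputs; the final test accuracy of the returned \emph{noisy} classifier is then bounded directly by $(1-2\Delta)\big(1-4/(\Delta^2R)\big)\geq 1-\varepsilon$, which is where $\Delta\leq\varepsilon/4$ (from $t\geq\log_n(16/\varepsilon)$) and $R\geq 128/\varepsilon^3$ enter. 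Your alternative — certify the winner through uniform convergence plus an aggregate transition-region analysis — is the ``delicate bookkeeping'' you correctly flag as the main obstacle, but as proposed it is not supported by the theorem's hypotheses on $c$, and it is precisely the part the paper's good-event construction is designed to avoid. To repair your argument you would either need to add an $\varepsilon$-dependent sample-size condition (e.g.\ a realizable-case VC bound of order $\varepsilon^{-1}\log(1/\delta)$, with its extra $\log(1/\varepsilon)$ factor) or replace the generalization step by the paper's exact-coincidence argument.
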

The proof of this Theorem is detailed below.

Given a training set $X \subset \mathcal{X}$ polynomially large in $n$, i.e., $\abs{X} = n^c$, define the training loss:
\begin{equation*}\label{eq:loss-supervised}
    \mathcal{L}(s') = \frac{1}{2\abs{X}} \sum_{x \in X} \abs{h_{s'}(x)-f_s(x)}
\end{equation*}
and its noisy analog:
\begin{equation*}\label{eq:loss-supervised-noisy}
    \widetilde{\mathcal{L}}(s') = \frac{1}{2\abs{X}} \sum_{x \in X} \abs{\widetilde{h}_{s'}(x)-f_s(x)}
\end{equation*}
Our optimization algorithm goes as follows: using the noisy classifier $\widetilde{h}_{s'}$, evaluate the loss function $\widetilde{\mathcal{L}}\left(\log_g(x)\right)$ for each variational parameter $g^{s'} = x \in X$, then set \[g^{s'} = \text{argmin}_{x \in X} \widetilde{\mathcal{L}}(\log_g(x)).\] This algorithm is efficient in the size of the training set, since it only requires $\abs{X}^2$ evaluations of $\widetilde{h}_{s'}$.\\
To prove Theorem \ref{thm:perf-PQC}, we show first that we can enforce $\text{argmin}_{x \in X} \widetilde{\mathcal{L}}(\log_g(x)) = \text{argmin}_{x \in X} \mathcal{L}(\log_g(x))$ with high probability (Lemma \ref{lemma:no-mismatch}), and second, that this algorithm also leads to $s'$ close to the optimal $s$ in log-space with high probability (Lemma \ref{lemma:PAC-proof}).

\begin{lemma}\label{lemma:no-mismatch}
For a training set of size $n^c$ such that $c\geq \log_n(8/\delta)$, a $t\geq3c$ in the definition of $\Delta$, and a number of circuit evaluations per inner product $R \geq \frac{4n^{2(t+c)}}{\delta}$, we have
\begin{equation*}
     \textnormal{Pr}\left( \underset{x \in X}{\textnormal{argmin}}\ \widetilde{\mathcal{L}}(\log_g(x)) = \underset{x \in X}{\textnormal{argmin}}\ \mathcal{L}(\log_g(x)) \right) \geq 1-\frac{\delta}{2}
\end{equation*} 
\end{lemma}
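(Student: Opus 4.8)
The plan is to prove the slightly stronger statement that, with probability at least $1-\delta/2$, the noisy classifier agrees with the noiseless one on the whole training set and for \emph{every} candidate hyperplane simultaneously: $\widetilde{h}_{s'}(x) = h_{s'}(x)$ for all $x \in X$ and all $s' \in \{\log_g(x') : x' \in X\}$. Once this holds we get $\widetilde{\mathcal{L}}(s') = \mathcal{L}(s')$ for every candidate $s'$, and two functions that agree everywhere on $X$ have identical argmins, which is exactly the claim. So the entire argument reduces to a union bound over the $\abs{X}^2 = n^{2c}$ pairs (candidate $s'$, point $x$), controlling for each pair the probability that $\widetilde{h}_{s'}(x) \neq h_{s'}(x)$.

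I would split the failure into two events. Event $A$: some training point $x_i$ lies within $(p-1)\Delta/2$ in log-space of one of the two boundary points $s'$ or $s'+\frac{p-3}{2}$ of some candidate hyperplane, equivalently $x_i \notin I_{s'}$. For a pair with distinct indices the two sampled points are independent, so this probability is at most the fraction of $\mathbb{Z}_p^*$ covered by the two margins, which is $O\!\left(2^{k}/(p-1)\right)=O(n^{-t})$ using $\Delta = 2^{k+1}/(p-1)$ and $k = n - t\log n$ (a point tested against the candidate it itself defines has overlap exactly $\Delta$ with $\ket{\phi_{s'}}$ and is therefore never a straddle point). A union bound over the $n^{2c}$ pairs gives $\Pr[A] = O(n^{2c-t})$, and the hypotheses $t \ge 3c$ together with $n^c \ge 8/\delta$ push this below $\delta/4$. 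Event $B$ (conditioned on $\lnot A$): noise flips some classification. On $\lnot A$ every relevant overlap $\abs{\braket{\phi(x_i)}{\phi_{s'}}}^2$ is exactly $0$ or $\Delta$, so a flip forces $\abs{e_{is'}} \ge \Delta/2$, which by Chebyshev (Eq.~(\ref{eq:prb-mismatch})) occurs with probability at most $4/(\Delta^2 R)$. Union-bounding over the $n^{2c}$ pairs and using $\Delta \ge 2n^{-t}$ gives $\Pr[B \mid \lnot A] \le 4n^{2c}/(\Delta^2 R) \le n^{2(t+c)}/R \le \delta/4$, the last step being exactly what $R \ge 4n^{2(t+c)}/\delta$ buys. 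Hence $\Pr[A \cup B] \le \delta/2$, and on the complementary event all argmins coincide.

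The step I expect to be the main obstacle is controlling the straddle points (those excluded from $I_{s'}$): near a boundary the overlap can be arbitrarily close to the decision threshold $\Delta/2$, so no sampling precision $R$ bounds the flip probability there, and one must instead argue that such points simply do not occur in the polynomially small random training set — this is precisely where the condition $t \ge 3c$ on the margin width enters. Everything else is bookkeeping: tracking how $\Delta = \Theta(n^{-t})$, the training-set size $n^c$, and the per-inner-product precision $R$ interact through the two union bounds, and checking that the stated thresholds on $c$, $t$ and $R$ are exactly those needed to force each of the two pieces below $\delta/4$.
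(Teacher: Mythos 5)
Your proposal is correct and follows essentially the same route as the paper's proof: reduce the claim to the noisy and noiseless classifiers agreeing on every (candidate $s'$, point $x$) pair in the training set, then union-bound two failure events — a training point landing in a margin region (controlled by $t\geq 3c$ and $|X|=n^c\geq 8/\delta$) and a noise term exceeding $\Delta/2$ (controlled by Chebyshev and $R\geq 4n^{2(t+c)}/\delta$), each below $\delta/4$. The only cosmetic differences are your conditioning of the noise event on the good-geometry event (the paper just union-bounds both directly) and your explicit remark about the $i=j$ pairs, which the paper handles implicitly via the noise bound.
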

\begin{proof}
In order for the minima of the two losses to be obtained for the same $x\in X$, it is sufficient to ensure that the classifiers $h_{\log_g(x_i)}$ and $\widetilde{h}_{\log_g(x_i)}$ agree on all points $x_j$, for all $(x_i,x_j) \in X$. This can be enforced by having:
\begin{equation*}
	\left(\bigcap_{\underset{i\neq j}{i,j}} x_i \in I_{\log_g(x_j)}\right) \cap \left(\bigcap_{i,s'} |e_{i,s'}|\leq\frac{\Delta}{2}\right)
\end{equation*}
that is, having for all classifiers $h_{\log_g(x_j)}$ that all points $x_i\in X$, $x_i \neq x_j$, are away from its boundary regions in log-space, and that the labels assigned to these points are all the same under noise.\\ We bound the probability of the negation of this event:
\begin{equation*}
	\text{Pr}\left( \bigcup_{\underset{i\neq j}{i,j}} x_i \notin I_{\log_g(x_j)} \cup \bigcup_{i,s'} |e_{i,s'}|\geq\frac{\Delta}{2}\right) \leq \text{Pr}\left( \bigcup_{\underset{i\neq j}{i,j}} x_i \notin I_{\log_g(x_j)}\right) + \text{Pr}\left( \bigcup_{i,s'} |e_{i,s'}|\geq\frac{\Delta}{2}\right)
\end{equation*}
using the union bound.\\
We start by bounding the first probability, again using the union bound:
\begin{align*}
\text{Pr}\left( \bigcup_{\underset{i\neq j}{i,j}} x_i \notin I_{\log_g(x_j)}\right) &\leq \sum_{\underset{i\neq j}{i,j}} \text{Pr}\left( x_i \notin I_{\log_g(x_j)} \right)\\
&= \sum_{\underset{i\neq j}{i,j}} \frac{\Delta}{2} \leq \frac{n^{2c}\Delta}{2}
\end{align*}
By setting $t\geq3c$, we have $\Delta \leq 4n^{-t} \leq 4n^{-3c}$, which allows us to bound this first probability by $\delta/4$ when $c\geq \log_n(8/\delta)$.\\
As for the second probability above, we have
\begin{align*}
	\text{Pr}\left( \bigcup_{i,s'} |e_{i,s'}|\geq\frac{\Delta}{2}\right) &\leq \sum_{i,s'} \text{Pr}\left(|e_{i,s'}|\geq\frac{\Delta}{2}\right)\\
	& \leq \frac{4n^{2c}}{\Delta^2R}
\end{align*}
using the union bound and Eq.\ (\ref{eq:prb-mismatch}). By setting $R \geq \frac{4n^{2(t+c)}}{\delta} \geq \frac{16n^{2c}}{\Delta^2\delta}$ (since $\Delta \geq 2n^{-t}$), we can bound this second probability by $\delta/4$ as well, which gives:
\begin{align*}
     \textnormal{Pr}\left( \underset{x \in X}{\textnormal{argmin}}\ \widetilde{\mathcal{L}}(\log_g(x)) = \underset{x \in X}{\textnormal{argmin}}\ \mathcal{L}(\log_g(x)) \right)  &\geq 1 - \text{Pr}\left( \bigcup_{\underset{i\neq j}{i,j}} x_i \notin I_{\log_g(x_j)} \cup \bigcup_{i,s'} |e_{i,s'}|\geq\frac{\Delta}{2}\right)\\
     &\geq 1 - \delta/2 \qedhere
\end{align*}
\end{proof}

\begin{lemma}\label{lemma:PAC-proof}
For a training set of size $n^c$ such that $c\geq \log_n\left(\frac{\log(\delta/2)}{\log(1-2\varepsilon)}\right)$, then $s'=\log_g\left( \textnormal{argmin}_{x \in X} \mathcal{L}(\log_g(x)) \right)$ is within $\varepsilon$ distance of the optimal $s$ with probability:
\begin{equation*}
    \textnormal{Pr}\left(\frac{\abs{s' - s}}{p-1} \leq \varepsilon\right) \geq 1 - \frac{\delta}{2}
\end{equation*}
\end{lemma}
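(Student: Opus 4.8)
The plan is to split the statement into an elementary probabilistic ``covering'' estimate---which carries the quantitative content of the bound on $c$---and a deterministic geometric argument about the empirical risk minimizer, which exploits that the classifiers $h_{s'}$ form an interval-type (constant-VC) family.

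First I would rewrite the hypothesis on $c$. Since $\log(1-2\varepsilon)<0$ and $\log(\delta/2)<0$, the inequality $c\ge\log_n\!\big(\log(\delta/2)/\log(1-2\varepsilon)\big)$ is equivalent to $n^{c}\ge\log(\delta/2)/\log(1-2\varepsilon)$, hence to $n^{c}\log(1-2\varepsilon)\le\log(\delta/2)$, i.e.\ $(1-2\varepsilon)^{n^{c}}\le\delta/2$. The training points $x\in X$ are drawn i.i.d.\ uniformly from $\mathcal{X}=\mathbb{Z}_p^*$, and $x\mapsto\log_g x$ is a bijection onto $\mathbb{Z}_{p-1}$, so the images $\log_g x$ are i.i.d.\ uniform on $\mathbb{Z}_{p-1}$. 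Therefore, for any fixed $G\subseteq\mathbb{Z}_{p-1}$ with $|G|\ge2\varepsilon(p-1)$, the probability that no training point has its log in $G$ is at most $(1-2\varepsilon)^{n^{c}}\le\delta/2$; equivalently, with probability at least $1-\delta/2$ at least one training point has its log in $G$.

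Next I would choose $G$ and prove the deterministic implication ``some training log in $G$ $\Rightarrow$ $|s'-s|/(p-1)\le\varepsilon$''. The natural choice takes $G$ to consist of the length-$\varepsilon(p-1)$ log-space arcs flanking the two decision boundaries of $f_s$ (at $s$ and at $s+\tfrac{p-1}{2}$), so $|G|=2\varepsilon(p-1)$. The structural facts to use are those established in Appendix~\ref{sec:PQC-agent-DLP}: (i) away from two delimiting windows of combined fractional width $O(\Delta)$ around $s'$ and $s'+\tfrac{p-3}{2}$, the noiseless classifier $h_{s'}$ coincides with the indicator that labels the log-space arc $[s',s'+\tfrac{p-3}{2}]$ positively; (ii) in the regime at hand $\Delta\le4n^{-t}$ with $n^{-t}$ much smaller than $\varepsilon$, so those windows are negligible; hence (iii) up to an $O(\Delta)$ correction the symmetric difference between $h_{s'}$ and the ground truth $f_s$ is a pair of log-space arcs of total length $2|s'-s|$. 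The argument is then a squeezing argument in the spirit of realizable PAC learning for one-dimensional thresholds: the minimizer of $\mathcal{L}$ over the finitely many candidate anchors $\{\log_g x:x\in X\}$ is forced to sit between the training logs that most tightly straddle $s$ (and, by the companion bound near $s+\tfrac{p-1}{2}$, cannot drift the other way), so a training log landing in $G$ pins $s'$ to within $\varepsilon(p-1)$ of $s$; any anchor that were $\varepsilon(p-1)$-far from $s$ would misclassify a training point in the corresponding arcs and hence be strictly beaten. Combining with the previous paragraph yields $\Pr\!\big(|s'-s|/(p-1)\le\varepsilon\big)\ge1-\delta/2$.

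The main obstacle is this deterministic step. One must make the ``squeezing'' precise while accounting for the fact that $\{h_{s'}\}$ does not exactly realize $f_s$---because of the $O(\Delta)$ delimiting regions, and because the anchor point of $h_{s'}$ is necessarily classified $+1$ by $h_{s'}$ itself---so the analysis is near-realizable rather than realizable, and one has to check that the $O(\Delta)$ slack never overturns a comparison at the $\varepsilon$ scale (this is exactly why $\Delta$, equivalently $t$, must be taken large relative to $\varepsilon$ in Theorem~\ref{thm:perf-PQC}). A secondary point of care is the modular, double-boundary geometry: moving $s'$ displaces both decision boundaries at once, so the bookkeeping of which flanking arc controls $s'$ from which side must be carried out on the circle $\mathbb{Z}_{p-1}$. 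Everything else---the rearrangement of the bound on $c$ and the i.i.d.-uniformity of the logs---is routine.
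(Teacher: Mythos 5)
Your first paragraph is exactly the paper's probabilistic core: rewriting the hypothesis on $c$ as $(1-2\varepsilon)^{n^c}\le\delta/2$ and using that the logs of the i.i.d.\ uniform training points miss any fixed subset of $\mathbb{Z}_{p-1}$ of size $2\varepsilon(p-1)$ with probability at most $\delta/2$. The genuine gap is in your choice of $G$ and the deterministic bridge you build on it. You place half of $G$ at the antipodal boundary $s+\tfrac{p-1}{2}$, but a training point whose log lies there cannot help the conclusion: by construction $s'$ is itself the log of a training point, so $\abs{s'-s}\le\varepsilon(p-1)$ can only hold if some training log lies within $\varepsilon(p-1)$ of $s$ specifically. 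If the only training log falling in your $G$ sits near $s+\tfrac{p-1}{2}$ while no training log is within $\varepsilon(p-1)$ of $s$, your $G$ is hit and yet the desired event is impossible, so the implication ``some training log in $G\Rightarrow\abs{s'-s}/(p-1)\le\varepsilon$'' fails as stated; worse, anchoring the classifier at such a point yields (up to the $\Delta$ windows) essentially the negation of $f_s$, so it neither provides a good candidate nor ``pins'' the minimizer from the other side. The paper instead takes $G=[s-\varepsilon(p-1),\,s+\varepsilon(p-1)]$, a single interval of the same measure centered at $s$, so that hitting $G$ directly supplies a candidate anchor within $\varepsilon$ of $s$, and then identifies the event $\{\abs{s'-s}/(p-1)\ge\varepsilon\}$ with $\{\text{no training log in }G\}$, which is exactly the step your construction is meant to replace.

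A secondary caution: even with the corrected $G$, your ``strictly beaten'' claim overstates what the nesting argument gives. A far anchor on the same side of $s$ as a nearby candidate is indeed dominated (its symmetric-difference arcs contain those of the nearby candidate, and it additionally misclassifies that candidate's own anchor), but a far anchor on the opposite side is not automatically beaten: the nearby candidate can be penalized by training points lying just past the antipodal boundary $s+\tfrac{p-3}{2}$, which fall in its own symmetric-difference arc. This is a finite-sample effect, not a $\Delta$-effect, so taking $t$ large does not dispose of it; the paper's proof does not carry out this bookkeeping either, treating the identification of the two events as immediate. The content you need to keep is therefore the interval-around-$s$ choice of $G$ together with the one-sided nesting comparison; your current $G$ breaks the first, and your domination step would need the additional case analysis above to be airtight.
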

\begin{proof}
We achieve this result by proving:
\begin{equation*}
   \text{Pr}\left(\frac{\abs{s'-s}}{p-1} \geq \varepsilon\right) \leq \frac{\delta}{2} 
\end{equation*}
This probability is precisely the probability that no $\log_g(x) \in \log_g(X)$ is within $\varepsilon$ distance of $s$, i.e.,
\begin{equation*}
   \text{Pr}\left(\bigcap_{x\in X}\log(x) \notin [s-\varepsilon(p-1), s + \varepsilon(p-1)]\right)
\end{equation*}
As the elements of the training set are all i.i.d., we have that this probability is equal to
\begin{equation*}
	\text{Pr}\left(\log(x) \notin [s-\varepsilon(p-1), s + \varepsilon(p-1)]\right) ^{\abs{X}}
\end{equation*}
Since all the datapoints are uniformly sampled from $\mathbb{Z}_p^*$, the probability that a datapoint is in any region of size $2\varepsilon(p-1)$ is just $2\varepsilon$. With the additional assumption that $\abs{X} = n^c \geq \log_{1-2\varepsilon}(\delta/2)$ (and assuming $\varepsilon<1/2$), we get:
\begin{equation*}
    \text{Pr}\left(\frac{\abs{s'-s}}{p-1} \geq \varepsilon\right) \leq (1-2\varepsilon)^{\log_{1-2\varepsilon}(\delta/2)} = \frac{\delta}{2} \qedhere
\end{equation*}
\end{proof}

Lemma \ref{lemma:no-mismatch} and Lemma \ref{lemma:PAC-proof} can be used to prove:
\begin{corollary}\label{cor:PAC-optim}
For a training set of size $n^c$ such that $c\geq \max\left\{\log_n(8/\delta), \log_n\left(\frac{\log(\delta/2)}{\log(1-2\varepsilon)}\right)\right\}$, a $t\geq3c$ in the definition of $\Delta$, and a number of circuit evaluations per inner product $R \geq \frac{4n^{2(t+c)}}{\delta}$, then our optimization algorithm returns a variational parameter $g^{s'}$ such that
\begin{equation*}
    \textnormal{Pr}\left(\frac{\abs{s' - s}}{p-1} \leq \varepsilon\right) \geq 1 - \delta
\end{equation*}
\end{corollary}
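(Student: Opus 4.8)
The plan is to obtain Corollary~\ref{cor:PAC-optim} as a direct union-bound composition of Lemma~\ref{lemma:no-mismatch} and Lemma~\ref{lemma:PAC-proof}. The technical content of the separation has already been spent in proving those two lemmas, so almost all the remaining work is bookkeeping: tracking the two failure probabilities and checking that the corollary's hypotheses simultaneously imply the hypotheses of both lemmas.

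First I would fix notation for the two candidate minimizers. Write $\hat{x}_{\mathrm{n}} = \textnormal{argmin}_{x\in X}\widetilde{\mathcal{L}}(\log_g(x))$ for the point actually returned by the optimization algorithm (evaluated with the noisy classifier $\widetilde{h}_{s'}$), and $\hat{x}_0 = \textnormal{argmin}_{x\in X}\mathcal{L}(\log_g(x))$ for the noiseless minimizer. The quantity the corollary controls is $s' = \log_g(\hat{x}_{\mathrm{n}})$, whereas Lemma~\ref{lemma:PAC-proof} is phrased in terms of $\log_g(\hat{x}_0)$; the bridge between the two is precisely Lemma~\ref{lemma:no-mismatch}, which says the noisy and noiseless argmins coincide with high probability.

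Next I would introduce two events. Let $A = \{\hat{x}_{\mathrm{n}} = \hat{x}_0\}$ be the event that the two argmins agree, and let $B = \{\abs{\log_g(\hat{x}_0) - s}/(p-1)\le\varepsilon\}$ be the event that the noiseless minimizer is $\varepsilon$-close to the true $s$. The key deterministic observation is that on $A\cap B$ we have $s' = \log_g(\hat{x}_{\mathrm{n}}) = \log_g(\hat{x}_0)$, hence $\abs{s'-s}/(p-1)\le\varepsilon$. I would then match hypotheses: the corollary's requirement $c\ge\log_n(8/\delta)$ together with $t\ge 3c$ and $R\ge 4n^{2(t+c)}/\delta$ is exactly what Lemma~\ref{lemma:no-mismatch} needs to give $\textnormal{Pr}(A)\ge 1-\delta/2$, while $c\ge\log_n\!\big(\log(\delta/2)/\log(1-2\varepsilon)\big)$ is exactly what Lemma~\ref{lemma:PAC-proof} needs to give $\textnormal{Pr}(B)\ge 1-\delta/2$; taking the maximum over the two lower bounds on $c$ in the corollary's statement guarantees both hypotheses hold at once. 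The conclusion then follows from the union bound $\textnormal{Pr}(A\cap B)\ge 1-\textnormal{Pr}(\bar A)-\textnormal{Pr}(\bar B)\ge 1-\delta$.

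The one point requiring genuine care — and the closest thing to an obstacle — is that $A$ and $B$ concern different sources of randomness: $A$ is controlled over the sampling noise $e_{is'}$ (for a fixed training set $X$), whereas $B$ is controlled over the i.i.d.\ draw of $X$ from $\mathbb{Z}_p^*$. I would therefore be careful to state both probabilities over the joint product space of training-set draw and measurement noise, and note that the union bound requires no independence between $A$ and $B$. Because the implication ``$A\cap B \Rightarrow \abs{s'-s}/(p-1)\le\varepsilon$'' is deterministic once the two argmins agree, no further probabilistic coupling between the two randomness sources is needed, and the $1-\delta$ bound goes through verbatim.
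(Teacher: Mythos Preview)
Your proposal is correct and matches the paper's approach exactly: the paper simply states that Lemma~\ref{lemma:no-mismatch} and Lemma~\ref{lemma:PAC-proof} ``can be used to prove'' the corollary, and your union-bound composition of the two $\delta/2$ failure probabilities is precisely the intended argument. One minor remark: event $A$ in fact depends on both the training-set draw (via the $x_i\notin I_{\log_g(x_j)}$ term in the proof of Lemma~\ref{lemma:no-mismatch}) and the measurement noise, not only the latter as you wrote, but since you correctly note that the union bound needs no independence this does not affect your argument.
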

From here, we notice that, when we apply Corollary \ref{cor:PAC-optim} for $\varepsilon' \leq \frac{\Delta}{2}$, our optimization algorithm returns an $s'$ such that, with probability $1-\delta$, the set $I_{s'}$ is equal to $I_{s}$ and is of size $(p-1)(1-2\Delta)$. In the event where $\abs{s' - s}/(p-1) \leq \varepsilon'\leq \frac{\Delta}{2}$, we can hence bound the accuracy of the noisy classifier:
\begin{align*}
	\textnormal{Acc}_{\widetilde{h}_{s'}}(f_s) &= \frac{1}{p-1} \sum_{x \in \mathcal{X}} \text{Pr}\left(\widetilde{h}_{s'}(x) = f_s(x)\right)\\
	&\geq \frac{1}{p-1} \sum_{x \in I_{s}}  \text{Pr}\left(\widetilde{h}_{s'}(x) = f_s(x)\right)\\
	&\geq (1-2\Delta) \min_{x_i \in I_{s}} \text{Pr}\left(\abs{e_{i,s'}}\leq \frac{\Delta}{2}\right)\\
	&\geq (1-2\Delta) \left( 1-\frac{4}{\Delta^2R} \right)\\
	&= 1-\left(2\Delta\left( 1-\frac{4}{\Delta^2R} \right)+ \frac{4}{\Delta^2R}\right)
\end{align*}
with probability $1-\delta$.\\
We now set $t\geq\max\left\{3\log_n(8/\delta),\log_n(16/\varepsilon)\right\}$, $\varepsilon' = n^{-t}$ and $R\geq \max\left\{ \frac{4n^{2(t+c)}}{\delta}, \frac{128}{\varepsilon^3} \right\}$, such that $2\varepsilon' = 2n^{-t} \leq \Delta \leq 4n^{-t} \leq \frac{\varepsilon}{4}$, $\left( 1-\frac{4}{\Delta^2R} \right) \leq 1$ and $\frac{4}{\Delta^2R}\leq\frac{\varepsilon}{2}$.\\
Using these inequalities, we get 
\begin{equation*}
\textnormal{Acc}_{\widetilde{h}_{s'}}(f_s) \geq 1 - \varepsilon
\end{equation*}
with probability $1-\delta$, which proves Theorem \ref{thm:perf-PQC}.

\end{document}